  \tikzset{
    treenode/.style = {align=center, inner sep=2pt, text centered,
      font=\sffamily},
    arn_r/.style = {treenode, circle, black, fill=black,font=\sffamily\bfseries, draw=black,
      text width=0.2em},
      arn_p/.style = {treenode, circle, line width=0.7mm, blue, fill=black, draw=blue, text width=0.4em},
      arn_p1/.style = {treenode, circle, line width=0.7mm, blue, fill=blue, draw=blue, text width=0.4em},
      arn_pat/.style = {treenode, circle , blue, fill=blue, draw=blue, text width=0.3em},
      arn_t/.style = {treenode, circle, black, thick, double, font=\sffamily\bfseries, draw=black,
      text width=0.2em},
    every edge/.append style={anchor=south,auto=falseanchor=south,auto=false,font=3.5 em},
  }
  \tikzset{
      pattern/.style={postaction={decorate},
          decoration={markings,mark=at position .55 with {\draw[thin,fill, blue] circle (5pt);}}}
  }
\def\dd{\mathinner{.\,.}}
\newcommand{\cO}{\mathcal{O}}
\newcommand{\Oh}{\cO}
\newcommand{\stt}{\mathcal{T}(T)}
\newcommand{\RMQ}{\textsf{RMQ}\xspace}
\newcommand{\gen}{\textsf{g}}
\newcommand{\val}{\textsf{val}}
\newcommand{\Tr}{\mathsf{PT}}
\newcommand{\BTr}{\mathsf{BT}}
\newcommand{\Ohtilde}{\tilde{\cO}}
\newcommand{\cOtilde}{\Ohtilde}
\newcommand{\occ}{|output|}
\newcommand{\D}{\mathcal{D}}
\newcommand{\Ru}{\mathcal{R}}
\newcommand{\Dec}{\textsc{Exists}}
\newcommand{\R}{\textsc{Report}}
\newcommand{\RD}{\textsc{ReportDistinct}}
\newcommand{\C}{\textsc{Count}}
\newcommand{\CD}{\textsc{CountDistinct}}
\newcommand{\id}{\textsf{id}}
\newcommand{\per}{\textsf{per}}
\newcommand{\Lab}{\mathcal{L}}
\newcommand{\LCA}{\textsf{LCA}\xspace}
\newcommand{\Z}{\mathbb{Z}}
 \newcommand{\defproblem}[3]{
  \vspace{2mm}
\noindent\fbox{
  \begin{minipage}{0.96\textwidth}
  #1\\
  {\bf{Input:}} #2  \\
  {\bf{Query:}} #3
  \end{minipage}
  }
  \vspace{2mm}
}
\DeclarePairedDelimiter{\floor}{\lfloor}{\rfloor}
\DeclarePairedDelimiter{\ceil}{\lceil}{\rceil}
\newtheorem{theorem}{Theorem}[section]
\newtheorem{fact}[theorem]{Fact}
\newtheorem{lemma}[theorem]{Lemma}
\newtheorem{observation}[theorem]{Observation}
\newtheorem{proposition}[theorem]{Proposition}
\newtheorem{claim}[theorem]{Claim}
\theoremstyle{remark}
\newtheorem{definition}[theorem]{Definition}
\newtheorem{example}[theorem]{Example}
\newtheorem{conjecture}[theorem]{Conjecture}
\begin{document}

\title{Internal Dictionary Matching}

\author[1]{Panagiotis Charalampopoulos}
\author[2,3]{Tomasz Kociumaka}
\author[1]{Manal Mohamed}
\author[2]{Jakub~Radoszewski}
\author[2]{Wojciech Rytter}
\author[2]{Tomasz Wale\'n}

\affil[1]{Department of Informatics, King's College London, London, UK\\\texttt{[panagiotis.charalampopoulos,manal.mohamed]@kcl.ac.uk}}
\affil[2]{Institute of Informatics, University of Warsaw, Warsaw, Poland\\\texttt{[kociumaka,jrad,rytter,walen]@mimuw.edu.pl}}
\affil[3]{Department of Computer Science, Bar-Ilan University, Ramat Gan, Israel}

\date{\vspace{-.5cm}}

\maketitle              
\begin{abstract}
We introduce data structures answering queries concerning
the occurrences of patterns from a given dictionary $\D$ in
\emph{fragments} of a given string $T$ of length $n$. The dictionary is \emph{internal}
in the sense that each pattern in $\D$ is given as a fragment of $T$.
This way, $\D$ takes space proportional to the number of patterns $d=|\D|$
rather than their total length, which could be $\Theta(n\cdot d)$.

In particular, we consider the following types of queries:
reporting and counting \emph{all} occurrences of patterns from $\D$ in a
fragment $T[i \dd j]$ and reporting \emph{distinct} patterns
from $\D$ that occur in $T[i \dd j]$. 
We show how to construct, in $\cO((n+d) \log^{\cO(1)} n)$ time, a data structure that answers each of these
queries in time $\cO(\log^{\cO(1)} n+\occ)$.

The case of counting patterns is much more involved and needs a combination
of a locally consistent parsing with orthogonal range searching.
Reporting distinct patterns, on the other hand, uses the structure of maximal repetitions in strings.
Finally, we provide tight---up to subpolynomial factors---upper and lower bounds
for the case of a dynamic dictionary.
\end{abstract}

\section{Introduction}

In the problem of dictionary matching, which has been studied for more than forty years, we are given a dictionary $\D$, consisting of $d$ patterns, and the goal is to preprocess $\D$ so that presented with a text $T$ we are able to efficiently compute the occurrences of the patterns from $\D$ in $T$. 
The Aho--Corasick automaton preprocesses the dictionary in linear time with respect to its total length and then processes $T$ in time $\cO(|T|+\occ)$~\cite{DBLP:journals/cacm/AhoC75}.
Compressed indexes for dictionary matching~\cite{DBLP:conf/soda/ChanHLS05}, as well as indexes for approximate dictionary matching~\cite{DBLP:conf/stoc/ColeGL04} have been studied.
Dynamic dictionary matching in its more general version consists in the problem where a dynamic dictionary is maintained, text strings are presented as input and for each such text all the occurrences of patterns from the dictionary in the text have to be reported; see~\cite{DBLP:journals/jcss/AmirFGBP94,DBLP:journals/iandc/AmirFIPS95}.

Internal queries in texts have received much attention in recent years.
Among them, the \emph{Internal Pattern Matching} (IPM) problem consists in preprocessing a text $T$ of length $n$ so that we can efficiently compute the occurrences of a substring of $T$ in another substring of $T$.
A nearly-linear sized data structure that allows for sublogarithmic-time IPM queries was presented in~\cite{DBLP:journals/tcs/KellerKFL14}, while a linear sized data structure allowing for constant-time IPM queries in the case that the ratio between the lengths of the two substrings is constant was presented in~\cite{DBLP:conf/soda/KociumakaRRW15}.
Other types of internal queries include computing the longest common prefix of two substrings of $T$, computing the periods of a substring of $T$, etc.
We refer the interested reader to~\cite{tomeksthesis}, which contains an overview of the literature.

We introduce the problem of \emph{Internal Dictionary Matching} (IDM) that consists in answering the following types of queries for an internal dictionary $\D$ consisting of substrings of text $T$: given $(i,j)$, report/count all occurrences of patterns from $\D$ in $T[i \dd j]$ and report the distinct patterns from $\D$ that occur in $T[i \dd j]$.

Some interesting internal dictionaries $\D$ are the ones comprising of palindromic, square, or non-primitive substrings of $T$.
In each of these three cases, the total length of patterns might be quadratic, but the internal dictionary is of linear size 
and can be constructed in $\Oh(n)$ time~\cite{DBLP:journals/ipl/GroultPR10,DBLP:journals/tcs/CrochemoreIKRRW14,DBLP:conf/cpm/BannaiIK17}.
Our data structure provides a general framework for solving problems related to the internal structure of the string.
The case of palindromes has already been studied in  \cite{DBLP:conf/spire/RubinchikS17},
where authors proposed a data structure of size $\cO(n\log n)$ that returns the number of all distinct palindromes in $T[i\dd j]$ in  $\cO(\log n)$  time. 

Let us formally define the problem and the types of queries that we consider.

  \vspace{2mm}
\noindent\fbox{
  \begin{minipage}{0.96\textwidth}
  \textsc{Internal Dictionary Matching} \\
  {\bf{Input:}} A text $T$ of length $n$ and a dictionary $\D$ consisting of $d$ patterns, each given as a substring $T[a \dd b]$ of $T$.  \\
  {\bf{Queries:}} \\
  $\Dec(i,j)$: Decide whether at least one pattern $P \in \D$ occurs in $T[i \dd j]$. \\
  $\R(i,j)$: Report all occurrences of all the patterns of $\D$ in $T[i \dd j]$. \\
  $\RD(i,j)$: Report all patterns $P \in \D$ that occur in $T[i \dd j]$. \\
  $\C(i,j)$: Count the number of all occurrences of all the patterns of $\D$ in $T[i \dd j]$.
  \end{minipage}
  }
  \vspace{2mm}

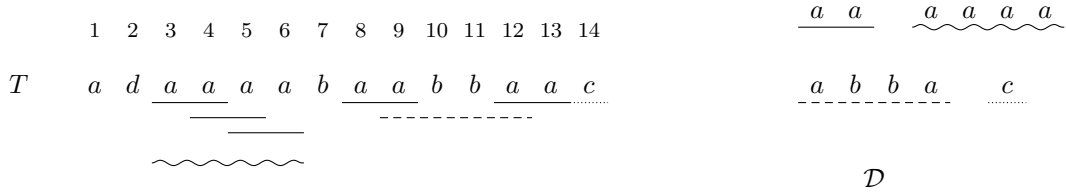
\begin{figure}[htpb]
\begin{center}
\begin{tikzpicture}
  \draw (-1,0) node[above] {$T$};
  \foreach \x/\n/\c in {0/1/a,0.5/2/d,1/3/a,1.5/4/a,2/5/a,2.5/6/a,3/7/b,3.5/8/a,4/9/a,4.5/10/b,5/11/b,5.5/12/a,6/13/a,6.5/14/c}{
    \draw (\x,0) node[above] {$\c$};
    \draw (\x,0.75) node[above] {\footnotesize \n};
  }
  \draw (0.75,0) -- (1.75,0);
  \draw[xshift=0.5cm] (0.75,-0.2) -- (1.75,-0.2);
  \draw[xshift=1cm] (0.75,-0.4) -- (1.75,-0.4);
  \draw[xshift=2.5cm] (0.75,0) -- (1.75,0);
  \draw[xshift=4.5cm] (0.75,0) -- (1.75,0);
  \draw[densely dotted] (6.25,0) -- (6.75,0);
  \draw[densely dashed] (3.75,-0.2) -- (5.75,-0.2);
  \draw[decorate, decoration={snake,amplitude=1pt}] (0.75,-0.8) -- (2.75,-0.8);

  \begin{scope}[xshift=9.5cm,yshift=1cm]
    \begin{scope}
      \foreach \x/\c in {0/a,0.5/a}{
        \draw (\x,0) node[above] {$\c$};
      }
      \draw (-0.25,0) -- (0.75,0);
    \end{scope}
    \begin{scope}[xshift=1.5cm,yshift=0cm]
      \foreach \x/\c in {0/a,0.5/a,1/a,1.5/a}{
        \draw (\x,0) node[above] {$\c$};
      }
      \draw[decorate, decoration={snake,amplitude=1pt}] (-0.25,0) -- (1.75,0);
    \end{scope}
    \begin{scope}[yshift=-1cm]
      \foreach \x/\c in {0/a,0.5/b,1/b,1.5/a}{
        \draw (\x,0) node[above] {$\c$};
      }
      \draw[densely dashed] (-0.25,0) -- (1.75,0);
    \end{scope}
    \begin{scope}[xshift=2.5cm,yshift=-1cm]
      \foreach \x/\c in {0/c}{
        \draw (\x,0) node[above] {$\c$};
      }
      \draw[densely dotted] (-0.25,0) -- (0.25,0);
    \end{scope}
    \draw (0.75,-2) node {$\D$};
  \end{scope}
\end{tikzpicture}
\end{center}
\caption{Occurrences of patterns from the dictionary $\D$ in the text $T$.}
\label{fig:main}
\end{figure}

\begin{example}\label{ex:1}
  Let us consider the dictionary $\D=\{aa,aaaa,abba,c\}$ and the text $T=adaaaabaabbaac$; see Fig.~\ref{fig:main}.
  We then have:
  \begin{align*}
    \Dec(2,12)&=\mathbf{true}\\
    \R(2,12)&=\{(aa,3),(aaaa,3),(aa,4),(aa,5),(aa,8),(abba,9)\}\\
    \C(2,12)&=6\\
    \RD(2,12)&=\{aa,aaaa,abba\}\\
    \Dec(1,3)&=\mathbf{false}
  \end{align*}
\end{example}

Let us consider $\R(i,j)$ queries. One could answer them in time $\cO(j-i+\occ)$ by running $T[i \dd j]$ over the Aho--Corasick automaton of $\D$~\cite{DBLP:journals/cacm/AhoC75} or in time $\cOtilde(d+\occ)$ \footnote{The $\tilde{\cO}(\cdot)$ notation suppresses $\log^{\cO(1)} n$ factors.} by performing internal pattern matching \cite{DBLP:conf/soda/KociumakaRRW15} for each element of $\D$ individually. None of these approaches is satisfactory as they can require $\Omega(n)$ time in the worst case.

\paragraph{\bf Our results.}
A natural problem would be to consider a dynamic dictionary, in the sense that one would perform interleaved IDM queries and updates to $\D$ (insertions/deletions of patterns).
We show a conditional lower bound for this problem. In particular, we show that the product of the time to process an update and the time to answer whether any pattern from $\D$ occurs in $T[i \dd j]$ cannot be $\cO(n^{1-\epsilon})$ for any constant $\epsilon>0$, unless the Online Boolean Matrix-Vector Multiplication conjecture~\cite{DBLP:conf/stoc/HenzingerKNS15} is false.
Interestingly, in our lower bound construction we only add single-letter patterns to an initially empty dictionary.

We thus focus on the case of a static dictionary, as it was defined above.
We propose an $\cOtilde(n+d)$-sized data structure, which can be built in time $\cOtilde(n+d)$ and answers all IDM queries in time $\cOtilde(1+\occ)$. The exact complexities are shown in Table~\ref{tab}.

\renewcommand*{\arraystretch}{1.35}  
\begin{table}[ht]
  \begin{center}
  \begin{tabular} {|c|c|c|c|}
    \hline
  Query & Preprocessing time & Space & Query time \\ \hline
  $\Dec(i,j)$ & $\cO(n+d)$ & $\cO(n)$ & $\cO(1)$ \\
  $\R(i,j)$ & $\cO(n+d)$ & $\cO(n+d)$ & $\cO(1+\occ)$ \\
  $\RD(i,j)$ & $\cO(n\log n+d)$ & $\cO(n+d)$ & $\cO(\log n+\occ)$ \\
  $\C(i,j)$ & $\cO(\frac{n\log n}{\log \log n} +d\log^{3/2} n)$ & $\cO(n+d\log n)$ & $\cO(\frac{\log^2n}{\log\log n})$ \\ \hline
  \end{tabular}
\end{center}
\caption{Our results.}\label{tab}
\end{table}

By building upon our solutions for static dictionaries, we provide algorithms for the case of a dynamic dictionary, where patterns can be added to or removed from $\D$. We show how to process updates in $\cOtilde(n^\alpha)$ time and answer queries $\Dec(i,j)$, $\R(i,j)$ and $\RD(i,j)$ in $\cOtilde(n^{1-\alpha}+\occ)$ time for any $0<\alpha<1$, matching---up to subpolynomial factors---our conditional lower bound.

\paragraph{\bf Our techniques and a roadmap.}
First, in Section~\ref{sec:existsreport}, we present straightforward solutions for queries $\Dec(i,j)$ and $\R(i,j)$.
In Section~\ref{sec:rd} we describe an involved solution for $\RD(i,j)$ queries, that heavily relies on the periodic structure of the input text and on tools that we borrow from computational geometry.
In Section~\ref{sec:count} we rely on locally consistent parsing and further computational geometry tools to obtain an efficient solution for $\C(i,j)$ queries.
In Section~\ref{sec:dynamic} we extend our solutions for the case of a dynamic dictionary and provide a matching conditional lower bound.
Finally, in Appendix~\ref{sec:CD} we consider yet another type of queries, $\CD(i,j)$, and develop an approximate solution for them.

\section{Preliminaries}

We begin with basic definitions and notation generally following~\cite{AlgorithmsOnStrings}.
Let $T=T[1]T[2]\cdots T[n]$ be a \textit{string} of length $|T|=n$ over a linearly sortable alphabet $\Sigma$. The elements of $\Sigma$ are called \textit{letters}.
By $\varepsilon$ we denote an {\em empty string}.
For two positions $i$ and $j$ on $T$, we denote by $T[i\dd j]=T[i]\cdots T[j]$ the \textit{fragment} (sometimes called substring) of $T$ that starts at position $i$ and ends at position $j$ (it equals $\varepsilon$ if $j<i$). It is called \textit{proper} if $i>1$ or $j<n$. A fragment of $T$ is represented in $\cO(1)$ space by specifying the indices $i$ and $j$.
A {\em prefix} of $T$ is a fragment that starts at position $1$ ($T[1\dd j]$, notation: $T^{(j)}$) 
and a {\em suffix} is a fragment that ends at position $n$ ($T[i\dd n]$, notation: $T_{(i)}$).
We denote the {\em reverse string} of $T$ by $T^R$, i.e. $T^R=T[n]T[n-1]\cdots T[1]$.

Let $U$ be a string of length $m$ with $0<m\leq n$. 
We say that there exists an \textit{occurrence} of $U$ in $T$, or, more simply, that $U$ \textit{occurs in} $T$, when $U$ is a fragment of $T$.
We thus say that $U$ occurs at the \textit{starting position} $i$ in $T$ when $U=T[i \dd i + m - 1]$.

If a string $U$ is both a proper prefix and a proper suffix of a string $T$ of length~$n$, then $U$ is called a \emph{border} of $T$. A positive integer $p$ is called a \emph{period} of $T$ if $T[i] = T[i + p]$ for all $i = 1, \ldots, n - p$. A string $T$ has a period $p$ if and only if it has a border of length $n-p$. We refer to the smallest period as \emph{the period} of the string, and denote it as $\per(T)$, and, analogously, to the longest border as \emph{the border} of the string. A string is called \emph{periodic} if its period is no more than half of its length and \emph{aperiodic} otherwise.

The elements of the dictionary $\D$ are called \textit{patterns}. Henceforth we assume that $\varepsilon \not\in \D$, i.e.~the length of each $P \in \D$ is at least $1$. If $\varepsilon$ was in $\D$, we could trivially treat it individually. We further assume that each pattern of $\D$ is given by the starting and ending positions of its occurrence in $T$. Thus, the size of the dictionary $d=|\D|$ refers to the number of strings in $\D$ and not their total length.

The \textit{suffix tree} $\stt$ of a non-empty string $T$ of length $n$ is a compact trie representing all suffixes of $T$. The \textit{branching} nodes of the trie as well as the \textit{terminal} nodes, that correspond to suffixes of $T$, become {\em explicit} nodes of the suffix tree, while the other nodes are {\em implicit}.
Each edge of the suffix tree can be viewed as an upward maximal path of implicit nodes starting with an explicit node. Moreover, each node belongs to a unique path of that kind. Thus, each node of the trie can be represented in the suffix tree by the edge it belongs to and an index within the corresponding path.
We let $\mathcal{L}(v)$ denote the \textit{path-label} of a node $v$, i.e., the concatenation of the edge labels along the path from the root to $v$. We say that $v$ is  path-labelled  $\mathcal{L}(v)$. Additionally, $\delta(v)= |\mathcal{L}(v)|$ is used to denote the \textit{string-depth} of node~$v$. A terminal node $v$ such that $\mathcal{L}(v) = T_{(i)}$ for some $1 \leq i \leq n$ is also labelled with index~$i$. Each fragment of $T$ is uniquely represented by either an explicit or an implicit node of $\stt$, called its \emph{locus}.
Once $\stt$ is constructed, it can be traversed in a depth-first manner to compute the string-depth $\delta(v)$ for each explicit node~$v$.
The suffix tree of a string of length~$n$, over an integer ordered alphabet, can be computed in time and space $\cO(n)$~\cite{Farach-Colton:2000:SST:355541.355547}. In the case of integer alphabets, in order to access the child of an explicit node by the first letter of its edge label in $\cO(1)$ time, perfect hashing~\cite{DBLP:journals/jacm/FredmanKS84} can be used.
Throughout the paper, when referring to the suffix tree $\stt$ of $T$, we mean the suffix tree of $T\$$, where $\$ \not\in \Sigma$ is a sentinel letter that is lexicographically smaller than all the letters in $\Sigma$. This ensures that all terminal nodes are leaves.

We say that a tree is a \textit{weighted tree} if it is a rooted tree with an integer weight on each node $v$, denoted by $\omega(v)$, such that the weight of the root is zero and $\omega(u) < \omega(v)$ if $u$ is the parent of $v$. We say that a node $v$ is a \textit{weighted ancestor at depth $\ell$} of a node $u$ if $v$ is the highest ancestor of $u$ with weight of at least $\ell$.
After $\cO(n)$-time preprocessing, weighted ancestor queries for nodes of a weighted tree $\mathcal{T}$ of size $n$ can be answered in $\cO(\log \log n)$ time per query~\cite{DBLP:journals/talg/AmirLLS07}.
If $\omega$ has a property that the difference of weights of a child and its parent is always equal to 1, then the queries can be answered in $\cO(1)$ time after $\cO(n)$-time preprocessing~\cite{DBLP:journals/tcs/BenderF04}; in this special case the values $\omega$ are called \emph{levels} and the queries are called \emph{level ancestor queries}.
The suffix tree $\stt$ is a weighted tree with $\omega=\delta$.
Hence, the locus of a fragment $T[i \dd j]$ in $\stt$ is the weighted ancestor of the terminal node with path-label $T_{(i)}$ at string-depth $j-i$.

\section{$\Dec(i,j)$ and $\R(i,j)$ queries}\label{sec:existsreport}

We first present a convenient modification to the suffix tree with respect to a dictionary $\D$; see Fig.~\ref{fig:dmod-example}.

\begin{definition}
A \emph{$\D$-modified suffix tree} of a string $T$ is a tree with terminal nodes corresponding to non-empty suffixes of $T\$$ and branching nodes corresponding to $\{\varepsilon\} \cup \D$. A node corresponding to string $U$ is an ancestor of a node corresponding to string $V$ if and only if $U$ is a prefix of $V$. Each node stores its level as well as its string-depth (i.e., the length of its corresponding string).
\end{definition}
\begin{figure}[h!]
  \centering
  \begin{tikzpicture}[scale=0.75, transform shape]

    \tikzstyle{v}=[draw, circle, minimum width=0.75pt, inner sep=0.75pt, fill=black!80!white];
    \tikzstyle{marked}=[draw, shape=circle, minimum width=1pt, inner sep=2pt, red];
    \tikzstyle{label}=[midway,inner sep=1pt,above, sloped, fill=none];
    \tikzstyle{label2}=[midway,inner sep=1pt];
 

    \node[v] (r0) at (5, 4) {};

    \node[v] (a) at (0, 3) {};
    \node[v] (b) at (4.8, 3) {};
    \node[v] (c) at (6.1, 3) {}; \node[below of=c, node distance=8pt] {14};
    \node[v] (d) at (8, 3) {}; \node[below of=d, node distance=8pt] {2};

    \node[v] (aa) at ($(a.center)+(-2, -1)$) {};
    \node[v] (ab) at ($(a.center)+(0, -1)$) {};
    \node[v] (ac) at ($(a.center)+(1, -1)$) {}; \node[below of=ac, node distance=8pt] {13};
    \node[v] (ad) at ($(a.center)+(3, -1)$) {}; \node[below of=ad, node distance=8pt] {1};

    \node[v] (aaa) at ($(aa.center)+(-1, -1)$) {};
    \node[v] (aab) at ($(aa.center)+(0.1, -1)$) {};
    \node[v] (aac) at ($(aa.center)+(0.7, -1)$) {}; \node[below of=aac, node distance=8pt] {12};

    \node[marked] (aaaa) at ($(aaa.center)+(-0.2, -0.5)$) {};
    \node[v] (aaaa2) at ($(aaaa.center)+(-0.2, -1.7)$) {}; \node[below of=aaaa2, node distance=8pt] {3};
    \node[v] (aaab) at ($(aaa.center)+(0.2, -2.2)$) {}; \node[below of=aaab, node distance=8pt] {4};

    \node[v] (aaba) at ($(aab.center)+(-0.4, -2.2)$) {}; \node[below of=aaba, node distance=8pt] {5};
    \node[v] (aabb) at ($(aab.center)+(0.4, -2.2)$) {}; 
    \node[below of=aabb, node distance=8pt] {8};

    \node[v] (aba) at ($(ab.center)+(-0.75, -1.5)$) {}; \node[below of=aba, node distance=8pt] {6};
    \node[marked] (abb) at ($(ab.center)+(0.75, -0.75)$) {};
    \node[v] (abb2) at ($(abb.center)+(0, -0.75)$) {}; \node[below of=abb2, node distance=8pt] {9};

    \node[v] (ba) at ($(b.center)+(-0.75, -1)$) {};
    \node[v] (bb) at ($(b.center)+(0.75, -1)$) {}; \node[below of=bb, node distance=8pt] {10};

    \node[v] (baab) at ($(ba.center)+(-0.5, -1)$) {}; \node[below of=baab, node distance=8pt] {7};
    \node[v] (baac) at ($(ba.center)+(0.5, -1)$) {}; \node[below of=baac, node distance=8pt] {11};

    \draw (r0) -- (a) node[label] {$a$};
    \draw (r0) -- (b) node[label2, left, inner sep=3pt] {$b$};
    \draw (r0) -- (c) node[label, inner sep=0pt] {$c\$$};
    \draw (r0) -- (d) node[label] {$daaaabaabbaac\$$};

    \draw (a) -- (aa) node[label2, left, inner sep=4pt] {$a$};
    \draw (a) -- (ab) node[label2, left] {$b$};
    \draw (a) -- (ac) node[label, inner sep=0pt] {$c\$$};
    \draw (a) -- (ad) node[label] {$daaaabaabbaac\$$};

    \draw (aa) -- (aaa) node[label2, left] {$a$};
    \draw (aa) -- (aab) node[label2, right] {$b$};
    \draw (aa) -- (aac) node[label] {$c\$$};

    \draw (aaa) -- (aaaa) node[label] {$a$};
    \draw (aaaa) -- (aaaa2) node[label] {$baabbaac\$$};
    \draw (aaa) -- (aaab) node[label] {$baabbaac\$$};

    \draw (aab) -- (aaba) node[label] {$aabbaac\$$};
    \draw (aab) -- (aabb) node[label] {$baac\$$};

    \draw (ab) -- (aba) node[label] {$aabbaac\$$};
    \draw (ab) -- (abb) node[label] {$ba$};
    \draw (abb) -- (abb2) node[label] {$ac\$$};

    \draw (b) -- (ba) node[label] {$aa$};
    \draw (b) -- (bb) node[label] {$baac\$$};

    \draw (ba) -- (baab) node[label] {$bbaac\$$};
    \draw (ba) -- (baac) node[label, inner sep=0pt] {$c\$$};

    \node[marked] at (r0) {};
    \node[marked] at (c) {};
    \node[marked] at (aa) {};

\end{tikzpicture}
  \begin{tikzpicture}[scale=0.75, transform shape]

  \tikzstyle{v}=[draw, circle, minimum width=0.75pt, inner sep=0.75pt, fill=black!80!white];
  \tikzstyle{marked}=[draw, shape=circle, minimum width=1pt, inner sep=2pt, red];
  \tikzstyle{label}=[midway,inner sep=1pt,above, sloped, fill=none];
  \tikzstyle{label2}=[midway,inner sep=1pt];

  \node[marked] (r) at (5, 8) {};
  \node[left of=r, node distance=8pt] {$\varepsilon$};

  \foreach \d/\label in {0.3/6, 4/13, 5/1, 6/7, 7/11, 8/10, 10/2} {
    \node (r\label) at ($(r.center)+(\d, -3)$) {\label};
    \draw (r) -- (r\label);
  }

  \node[marked] (abba) at ($(r.center)+(1, -1)$) {};
  \draw (r) -- (abba);
  \node[left of=abba, node distance=13pt] {$abba$};
  \node (abba2) at ($(abba.center)+(0.5, -2)$) {9};
  \draw (abba) -- (abba2);

  \node[marked] (c) at ($(r.center)+(9, -3)$) {};
  \draw (r) -- (c);
  \node [below of=c, node distance=10pt] {14};
  \node[right of=c, node distance=8pt] {$c$};

  \node[marked] (aa) at ($(r.center)+(-2, -1)$) {};
  \node[left of=aa, node distance=10pt] {$aa$};

  \foreach \d/\label in {-1.5/4,-0.5/5, 0.5/8, 1.5/12} {
    \node (aa\label) at ($(aa.center)+(\d, -2)$) {\label};
    \draw (aa) -- (aa\label);
  }

  \node[marked] (aaaa) at ($(aa.center)+(-2, -1)$) {};
  \node[left of=aaaa, node distance=15pt] {$aaaa$};
  \node (aaaa2) at ($(aaaa.center)+(-1, -1)$) {3};

  \draw (r) -- (aa);
  \draw (aa) -- (aaaa);
  \draw (aaaa) -- (aaaa2);

\end{tikzpicture}
  \caption{Example of a $\D$-modified suffix tree
  for dictionary $\D=\{aa,aaaa,abba,c\}$ and text $T=adaaaabaabbaac$
  from Example~\ref{ex:1}.
  Top: the suffix tree of $T$ with the nodes corresponding to elements of $\D$ annotated in red;
  bottom: the $\D$-modified suffix tree of $T$.}
  \label{fig:dmod-example}
\end{figure}

\begin{lemma}\label{lem:dmodst}
A $\D$-modified suffix tree of $T$ has size $\cO(n+d)$ and can be constructed in $\cO(n+d)$ time.
\end{lemma}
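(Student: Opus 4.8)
The plan is to transform the ordinary suffix tree $\stt$ of $T\$$ into the $\D$-modified suffix tree in two phases: first make the locus of every pattern of $\D$ into an explicit node, and then contract every explicit node that is neither a terminal node nor a node corresponding to a string in $\{\varepsilon\}\cup\D$. I would build $\stt$ in $\Oh(n)$ time~\cite{Farach-Colton:2000:SST:355541.355547} and, by a depth-first traversal, store $\delta(v)$ at every explicit node $v$. Recall from the preliminaries that $\stt$ is a weighted tree with $\omega=\delta$ and that the locus of a pattern $P=T[a\dd b]\in\D$ is the weighted ancestor of the terminal node labelled $a$ at string-depth $|P|=b-a+1$; equivalently, if $u'$ denotes the shallowest ancestor of the terminal node labelled $a$ whose string-depth is at least $|P|$, then $\mathrm{locus}(P)$ lies on the $\stt$-edge joining $u'$ to its parent, at string-depth $|P|$ (and equals $u'$ precisely when $|P|=\delta(u')$). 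Since $\D$ is given in advance, I would compute all $d$ nodes $u'$ together in $\Oh(n+d)$ time by a sweep over string-depths $k=n+1,n,\dots,1$: maintain a union--find structure over the explicit nodes of $\stt$ so that, just before $k$ is processed, its classes are exactly the subtrees rooted at the explicit nodes of string-depth at least $k$ whose parent has string-depth below $k$, each class recording its root (the unique shallowest node in it); when $k$ is processed, first unite each node of string-depth $k$ with the classes of its children and record it as the new root of the merged class, then, for every pattern $T[a\dd b]$ with $b-a+1=k$, read off from the class of the terminal node labelled $a$ its recorded root, which is the required $u'$. As every union merges a node with one of its children, the sweep is a static-tree set-union instance and runs in $\Oh(n+d)$ time (alternatively, any off-the-shelf offline weighted-ancestor algorithm can be used here).

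Having located all $d$ loci, I would group them by the $\stt$-edge on which they lie, radix-sort each group by string-depth (in $\Oh(n+d)$ time in total), and subdivide each such edge at those depths, inserting the new explicit nodes --- at most one per distinct locus, and none when the locus is already explicit. The resulting tree $\mathcal{T}'$ has at most $2(n+1)+d=\Oh(n+d)$ explicit nodes and contains every string in $\{\varepsilon\}\cup\D$ and every non-empty suffix of $T\$$ as an explicit node. A final depth-first traversal of $\mathcal{T}'$ then (i) marks the nodes to keep --- the root, the $d$ pattern loci, and the $n+1$ terminal nodes; (ii) contracts every unmarked node into its nearest marked ancestor, rebuilding child dictionaries (with perfect hashing) along the way; and (iii) records at each surviving node its level, as one plus its parent's level, and its string-depth, which is already known from $\stt$. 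Each surviving node holds $\Oh(1)$ words and there are $\Oh(n+d)$ of them, so the tree takes $\Oh(n+d)$ space, and every step above runs in $\Oh(n+d)$ time; this proves the lemma.

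The one step that is not a routine linear-time manipulation of the suffix tree is carrying out the $d$ locus computations within the $\Oh(n+d)$ budget: plugging in the $\Oh(\log\log n)$-time weighted-ancestor structure of~\cite{DBLP:journals/talg/AmirLLS07} as a black box would cost $\Oh(n+d\log\log n)$, which is too much once $d=\omega(n/\log\log n)$, so the batched depth-sweep above (or an equivalent offline argument) is the crux.
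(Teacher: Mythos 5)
Your proposal is correct and follows essentially the same route as the paper's proof: build the suffix tree, locate all $d$ pattern loci by an offline batched weighted-ancestor computation using union-find over a static tree (you unfold the depth-sweep that the paper attributes to the batched weighted-ancestor and Gabow--Tarjan static-tree set-union references), globally sort to subdivide edges carrying several new explicit nodes, and contract the unmarked internal nodes. The only small slip is an off-by-one in your stated sweep invariant---just before round $k$ is processed the classes should cover exactly the nodes of string-depth strictly greater than $k$, not at least $k$---but the processing step itself is correct.
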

\begin{proof}
The $\D$-modified suffix tree is obtained from the suffix tree $\stt$ in two steps.

In the first step, we mark all nodes of $\stt$ with path-label equal to a pattern $P \in \D$: if any of them are implicit, we first make them explicit; see Fig.~\ref{Fig:suffixTree}(a).
We can find the loci of the patterns in $\stt$ in $\cO(n+d)$ time by answering the weighted ancestor queries as a batch~\cite{DBLP:journals/corr/abs-1107-2422}, employing a data structure for a special case of Union-Find~\cite{DBLP:journals/jcss/GabowT85}. 
(If many implicit nodes along an edge are to become explicit, we can avoid the local sorting based on depth if we sort globally in time $\cO(n+d)$ using bucket sort and then add the new explicit nodes in decreasing order with respect to depth.)

In the second step, we recursively contract any edge $(u,v)$, where $u$ is the $\text{parent}$ of $v$ if:
\begin{enumerate}[itemsep=0ex, parsep=1pt, topsep=1pt]
\item both $u$ and $v$ are unmarked, or
\item $u$ is marked and $v$ is an unmarked internal node.
\end{enumerate}
The resulting tree is the $\D$-modified suffix tree and has $\cO(n)$ terminal nodes and $\cO(d)$ internal nodes; see Fig.~\ref{Fig:suffixTree}(b). 
\end{proof}

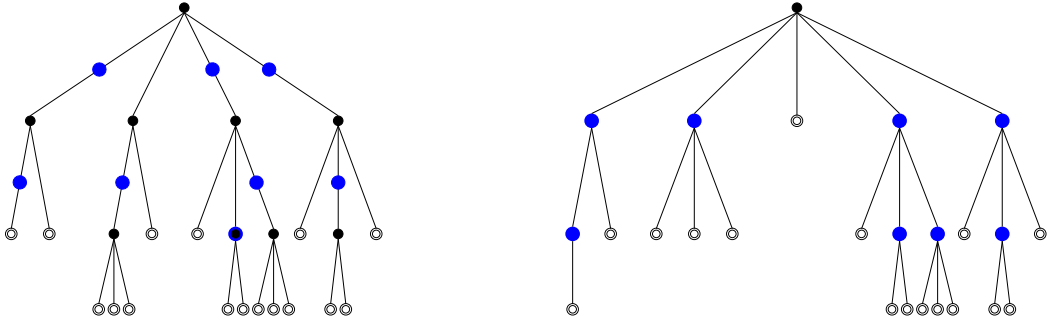
\begin{figure*}[htpb]
\captionsetup[subfigure]{font=scriptsize,labelfont=scriptsize}
\begin{subfigure}[t]{0.49\textwidth}
  \begin{center}
\resizebox{!}{0.50\totalheight}
{
\begin{tikzpicture}[-,>=stealth',level/.style={level 1/.style={sibling distance=2.7cm},
  level 2/.style={sibling distance=1.0cm},
  level 3/.style={sibling distance=0.4cm,level distance = 2cm},level distance = 3cm}] 
\node [arn_r] (tree){}
  child{node [arn_r](L1-C1){}
  child{node [arn_t](L2-C1){} edge from parent [pattern] node[left]{}}  
  child{node [arn_t](L2-C5){} edge from parent node[right]{}}
  edge from parent [pattern] node[left]{}
}	
child{node [arn_r](L1-C2){}
  child{node [arn_r](L3-C6){} 
  child{node [arn_t]{} edge from parent node[left]{}
  }
  child{node [arn_t]{} edge from parent node[above]{}
  }
  child{node [arn_t]{} edge from parent node[right]{}
  }
  edge from parent [pattern] node[left]{}
  }
  child{node [arn_t]{} edge from parent node[right]{}
  }
  edge from parent node[left]{}
}
child{node [arn_r](L1-C3){}
child{node [arn_t]{} edge from parent node[left]{}}
child{node [arn_p](vnode){}
 child{node [arn_t]{} edge from parent node[left]{}}
child{node [arn_t]{} edge from parent node[above]{}}
edge from parent node[ above] {}}  
child{node [arn_r](L3-C2){}
child{node [arn_t]{} edge from parent node[above]{}}   
child{node [arn_t]{} edge from parent node[above]{}}  
child{node [arn_t]{} edge from parent node[right]{}}
edge from parent [pattern] node[right]{}}
edge from parent [pattern] node[right]{}}
child{node [arn_r](L1-C4){}
child{node [arn_t] {} edge from parent node[left]{}}
child{node [arn_r](L3-C4){}
child{node [arn_t]{} edge from parent node[right]{}}
child{node [arn_t]{} edge from parent node[right]{}}
edge from parent[pattern]  node[right]{}}
child{node [arn_t] {} edge from parent node[left]{}}
edge from parent [pattern] node[right]{}
};
\end{tikzpicture}
}\caption{Seven implicit nodes of $\mathcal{T}(T)$ are made explicit. Each marked node (big circles in blue) represents a pattern $P\in \D$.}
\end{center}
\end{subfigure}
~
\begin{subfigure}[t]{0.49\textwidth}
\begin{center}
\resizebox{!}{0.50\totalheight}
{
\begin{tikzpicture}[-,>=stealth',level/.style={level 1/.style={sibling distance=2.7cm},
  level 2/.style={sibling distance=1cm},
  level 3/.style={sibling distance=0.4cm, level distance = 2cm},level distance = 3cm}] 
\node [arn_r] (tree){}
  child{node [arn_p1](L1-C1){}
  child{node [arn_p1](L2-C1){} 
  child{node [arn_t](L3-C1){} edge from parent  node[left]{} }  
  edge from parent node[right]{}}
  child{node [arn_t](L2-C5){} edge from parent node[right]{}}
  edge from parent node[right]{}}	
child{node [arn_p1](L1-C2){}
  child{node [arn_t]{} edge from parent node[right]{}
  }
  child{node [arn_t]{} edge from parent node[right]{}
  }
  child{node [arn_t]{} edge from parent node[right]{}
  }
  edge from parent node[left]{}
  }
child{node [arn_t]{} edge from parent node[right]{}
  }
child{node [arn_p1](L1-C3){}
child{node [arn_t]{} edge from parent node[right]{}}
child{node [arn_p1](vnode){}
child{node [arn_t]{} edge from parent node[left]{}}
child{node [arn_t]{} edge from parent node[above]{}}
edge from parent node[ above] {}}  
child{node [arn_p1](vnode){}
child{node [arn_t]{} edge from parent node[left]{}}
child{node [arn_t]{} edge from parent node[left]{}}
child{node [arn_t]{} edge from parent node[above]{}}
edge from parent node[ above] {}}  
edge from parent node[right]{}}
child{node [arn_p1](L1-C4){}
child{node [arn_t] {} edge from parent node[left]{}}
child{node [arn_p1](L3-C4){}
child{node [arn_t]{} edge from parent node[right]{}}
child{node [arn_t]{} edge from parent node[right]{}}
edge from parent node[right]{}}
child{node [arn_t] {} edge from parent node[left]{}}
edge from parent node[right]{}
};
\end{tikzpicture}
}\caption{$\D$-modified suffix tree.}
\end{center}
\end{subfigure}
\caption{The two-step construction of the $\D$-modified suffix tree.}\label{Fig:suffixTree}
\end{figure*}

We state the following simple lemma.

\begin{lemma}\label{lem:fwd_search}
With the $\D$-modified suffix tree of $T$ at hand, given positions $a,j$ in $T$ with $a\leq j$, we can compute all $P \in \D$ that occur at position $a$ and are of length at most $j-a+1$ in time $\cO(1+\occ)$.
\end{lemma}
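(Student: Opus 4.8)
The plan is to answer the query by a \emph{top-down} walk of a single root-to-leaf path of the $\D$-modified suffix tree, spending $\cO(1)$ time per reported pattern and $\cO(1)$ time overall besides.

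First I would pin down the combinatorial structure. A pattern $P \in \D$ occurs at position $a$ exactly when $P$ is a non-empty prefix of $T_{(a)}$, equivalently a prefix of $T_{(a)}\$$ (since $\$\notin\Sigma$); and if moreover $|P|\le j-a+1$ then automatically $|P|\le n-a+1$, because $j\le n$. Let $v_a$ be the terminal node path-labelled $T_{(a)}\$$ in the $\D$-modified suffix tree. As $\$$ occurs only as the last symbol of $T\$$, no suffix of $T\$$ is a proper prefix of another, so every terminal node — in particular $v_a$ — is a leaf. Hence the proper ancestors of $v_a$ form a single path: the root $\varepsilon$ at level $0$ and nodes $u_1,\dots,u_{k-1}$ at levels $1,\dots,k-1$, where $k$ is the level of $v_a$. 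Each $u_i$ is a branching node, so it corresponds to an element of $\{\varepsilon\}\cup\D$; being a non-empty prefix of $T_{(a)}\$$ it is in fact a pattern occurring at $a$, and conversely every pattern occurring at $a$ is some $u_i$. Their lengths $\delta(u_1)<\dots<\delta(u_{k-1})$ increase strictly with $i$, so the patterns to output are precisely $u_1,\dots,u_m$ with $\delta(u_m)\le j-a+1<\delta(u_{m+1})$.

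The preprocessing beyond the tree itself is minimal and fits in $\cO(n+d)$ time: an array giving, for each position $a$, the leaf $v_a$ (the terminal node carries the index label $a$ inherited from $\stt$), and the constant-time level-ancestor structure on the $\D$-modified suffix tree, which has $\cO(n+d)$ nodes and already stores every node's level. The query then iterates $i=1,2,\dots$: it fetches $u_i$, the ancestor of $v_a$ at level $i$; if $u_i=v_a$ (that is, $i\ge k$) or $\delta(u_i)>j-a+1$, it halts; otherwise $u_i$ is a pattern of length at most $j-a+1$ occurring at $a$, which it reports, and it continues. By the monotonicity of the lengths this reports exactly $u_1,\dots,u_m$, in $m+1=\cO(1+\occ)$ iterations, each a single level-ancestor query plus $\cO(1)$ comparisons.

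The point I would be careful about is obtaining $\cO(1)$ time per step in the correct traversal order. Walking \emph{up} from $v_a$ along parent pointers would meet the longest patterns first and could force traversing the whole path, which may dwarf the output, so the traversal must proceed top-down from the root; and one cannot descend by matching a single letter at a time, since the edge contractions used to build the $\D$-modified suffix tree may leave a node with several children whose edge labels start with the same letter — it is not a compacted trie. Using the precomputed pointer to $v_a$ together with level-ancestor queries circumvents both issues, and the early termination driven by the strictly increasing $\delta(u_i)$ is what charges the entire cost to $\occ$.
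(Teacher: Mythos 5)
Your proof is correct and follows the paper's approach: walk down from the root towards the leaf $v_a$ corresponding to $T_{(a)}$, report each encountered branching node, and stop as soon as the string-depth exceeds $j-a+1$. The paper's one-line proof leaves the descent mechanism implicit; you rightly observe that the $\D$-modified suffix tree is not a compacted trie after contraction (siblings may share a first letter), so you precompute a pointer from $a$ to $v_a$ and drive the top-down walk by level-ancestor queries, which is the implementation the paper implicitly relies on --- levels are stored at each node by definition, and level-ancestor queries on this very tree are used in the proof of \cref{thm:easy}.
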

\begin{proof}
We start from the root of the $\D$-modified suffix tree and go down towards the terminal node with path-label $T_{(a)}$. We report all encountered nodes $v$ as long as $\delta(v) \leq j-a+1$ is satisfied. We stop when this inequality is not satisfied.
\end{proof}

The $\D$-modified suffix tree enables us to answer $\Dec(i,j)$ and $\R(i,j)$ queries. 

\begin{theorem}\label{thm:easy}~

\vspace*{-0.2cm}
\begin{enumerate}[label=(\alph*)]
\item\label{thm:exists}
$\Dec(i,j)$ queries can be answered in $\cO(1)$ time with a data structure of size $\cO(n)$ that can be constructed in $\cO(n+d)$ time.
\item\label{thm:report}
$\R(i,j)$ queries can be answered in $\cO(1+\occ)$ time with a data structure of size $\cO(n+d)$ that can be constructed in $\cO(n+d)$  time.
\end{enumerate}
\end{theorem}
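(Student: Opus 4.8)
The plan is to reduce both $\Dec(i,j)$ and $\R(i,j)$ to a single auxiliary array. For each position $a\in\{1,\dots,n\}$ let $\ell(a)$ be the length of the shortest pattern of $\D$ occurring at $a$ (and $\ell(a)=\infty$ if no pattern of $\D$ occurs there), and set $E[a]=a+\ell(a)-1$, the ending position of that shortest occurrence, with the convention $E[a]=n+1$ when $\ell(a)=\infty$. The one observation that drives everything is: for $a\le j$, \emph{some} pattern of $\D$ occurs at $a$ and fits inside $T[i\dd j]$ if and only if $E[a]\le j$ — the shortest pattern at $a$ is the one most likely to fit, so it fits precisely when it ends at or before $j$. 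Moreover $E[a]\ge a$ always. Consequently a pattern of $\D$ occurs in $T[i\dd j]$ iff $\min_{a=i}^{j}E[a]\le j$, and because $E[a]\ge a>j$ for $a>j$, this is the same as $\min_{a=i}^{n}E[a]\le j$.

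First I would compute $E$ from the $\D$-modified suffix tree of $T$ (\cref{lem:dmodst}). The patterns occurring at $a$ are exactly the prefixes of $T_{(a)}$ that lie in $\D$, i.e.\ exactly the branching-node ancestors, other than the root, of the terminal node $v_a$ path-labelled $T_{(a)}$; the shortest one corresponds to the \emph{shallowest} such ancestor, which is simply the child of the root on the root-to-$v_a$ path — or, if $v_a$ is itself a child of the root, no pattern occurs at $a$ and $E[a]=n+1$. So a single depth-first traversal of this tree, which has $\cO(n+d)$ nodes, maintaining the string-depth of the current child of the root, fills in all of $E[1\dd n]$ in $\cO(n+d)$ time; on top of $E$ I would build a linear-space structure answering range-minimum queries in $\cO(1)$ time after $\cO(n)$ preprocessing~\cite{DBLP:journals/tcs/BenderF04}. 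For the $\Dec(i,j)$ query (part (a)) we keep \emph{only} the array $E$ and its range-minimum structure — the $\D$-modified suffix tree is discarded once $E$ has been computed — so the data structure has size $\cO(n)$, is built in $\cO(n+d)$ time, and a query is one range-minimum query over $E[i\dd n]$ followed by a comparison with $j$, in $\cO(1)$ time.

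For the $\R(i,j)$ query (part (b)) I would additionally retain the $\D$-modified suffix tree, which keeps the total size and the construction time at $\cO(n+d)$. Call $a\in\{i,\dots,j\}$ \emph{active} if $E[a]\le j$, and enumerate the active positions by the standard recursive use of range-minimum queries: over the current subinterval (starting from $[i\dd j]$) find an index attaining the minimum of $E$; if its $E$-value exceeds $j$ stop, otherwise output the index and recurse on the two subintervals flanking it. If there are $k$ active positions this performs $\cO(1+k)$ range-minimum queries. For every active $a$ I would then call \cref{lem:fwd_search} on the pair $(a,j)$; it reports, in time proportional to $1$ plus the number of occurrences it outputs, exactly the patterns of $\D$ that occur at $a$ and have length at most $j-a+1$ — which, by the observation above, are precisely the occurrences of dictionary patterns inside $T[i\dd j]$ starting at $a$, and there is at least one such since $a$ is active. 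Hence every active position contributes at least one occurrence to the output, so $k\le\occ$; together with the cost of the \cref{lem:fwd_search} calls, which sums to $\cO(1+\occ)$, the whole query runs in $\cO(1+\occ)$ time.

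The one point that needs care — everything else is routine bookkeeping — is avoiding an $\Omega(j-i)$ scan over all starting positions in $T[i\dd j]$. Abstractly, reporting the active positions looks like a three-sided orthogonal range-reporting problem on the points $(a,E[a])$ with query rectangle $[i,\infty)\times(-\infty,j]$, which would normally cost an extra logarithmic factor. The fact that a pattern of $\D$ fits at $a$ within $T[i\dd j]$ exactly when $E[a]\le j$, combined with the monotonicity $E[a]\ge a$ (so that the constraint $a\le j$ is automatic), is exactly what collapses this into a plain recursive range-minimum enumeration with no overhead, and it is the only non-mechanical ingredient in the argument.
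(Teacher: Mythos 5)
Your proof is correct and takes essentially the same approach as the paper: your array $E$ is the paper's array $B[a]=\min\{b: T[a\dd b]\in\D\}$ under a change of notation (ending position versus $a+\ell(a)-1$, which coincide), and both parts then proceed by range-minimum queries over this array — recursively in part (b) to enumerate starting positions — followed by \cref{lem:fwd_search}. The only cosmetic differences are that you compute $E$ by a DFS rather than by level-ancestor queries and you observe that $E[a]\ge a$ lets you use a one-sided range, neither of which changes the substance.
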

\begin{proof}
\ref{thm:exists} Let us define an array $B[a]=\min \{ b : T[a\dd b] \in \D\}$. If there is no pattern from $\D$ starting
in $T$ at position $a$, then $B[a]=\infty$.
It can be readily verified that the answer to query $\Dec(i,j)$ is yes if and only if the minimum element in the subarray $B[i\dd j]$ is at most $j$.
Thus, in order to answer $\Dec(i,j)$ queries, it suffices to construct the array $B$ and a data structure that answers range minimum queries (\RMQ{}) on $B$.
Using the $\D$-modified suffix tree of $T$, whose construction time is the bottleneck, array $B$ can be populated in $\cO(n)$ time as follows.
For each terminal node with path-label $T_{(a)}$ and level greater than $1$, we set $B[a]$ to the string-depth of its ancestor at level $1$ using a level ancestor query.
If the terminal node is at level $1$, then $B[a]=\infty$.
A data structure answering range minimum queries in $\cO(1)$ time can be built in time $\cO(n)$~\cite{DBLP:journals/siamcomp/HarelT84,DBLP:journals/jal/BenderFPSS05}.

\ref{thm:report} We first identify all positions $a \in [i \dd j]$ that are starting positions of occurrences of some pattern $P\in \D$ in $T[i \dd j]$ using \RMQ{}s over array $B$, which has been defined in the proof of part \ref{thm:exists}, as follows.
The first \RMQ{}, is over the range $[i \dd j]$ and identifies a position $a$ (if any such position exists).
The range is then split into two parts, namely $[i, a-1]$ and $[a+1,j]$. We recursively, use \RMQ{}s to identify the remaining positions in each part. Once we have found all the positions where at least one pattern from $\D$ occurs, we report all the patterns occurring at each of these positions and being contained in $T[i \dd j]$.
The complexities follow from Lemmas~\ref{lem:dmodst} and~\ref{lem:fwd_search}.
\end{proof}


\section{$\RD(i,j)$ queries}\label{sec:rd}

Below, we present an algorithm that reports patterns from $\D$ occurring in $T[i \dd j]$, allowing for $\Oh(1)$ copies of each pattern on the output. We can then sort these patterns, remove duplicates, and report distinct ones using an additional global array of counters, one for each pattern.

Let us first partition $\D$ into $\D_0,\ldots ,\D_{\lfloor \log n \rfloor}$ such that $\D_k=\{P\in \D : \floor{\log |P|}=k\}$. We call $\D_k$ a \emph{$k$-dictionary}. We now show how to process a single $k$-dictionary $\D_k$; the query procedure may clearly assume $k \leq \log|T[i\dd j]|$.

We precompute an array $L_k[1 \dd n]$ such that $T[a \dd L_k[a]]$ is the longest pattern in $\D_k$ is a prefix of $T_{(a)}$. We can do this in $\cO(n)$ time by inspecting the parents of terminal nodes in the $\D_k$-modified suffix tree. Next, we assign to all the patterns of $\D_k$ equal to some $T[a \dd a+L_k[a]]$ integer identifiers $\id$ (or colors) in $[1 \dd n]$, and construct an array $I_k[a]=\id(P)$, where $P=T[a \dd a+L_k[a]]$. We then rely on the following theorem.

\begin{theorem}[Colored Range Reporting~\cite{DBLP:conf/soda/Muthukrishnan02}]\label{thm:muthu}
Given an array $A[1\dd N]$ of elements from $[1 \dd U]$, we can construct a data structure of size $\cO(N)$ in $\cO(N + U)$ time, so that upon query $[i\dd j]$ 
 all distinct elements in $A[i\dd j]$ can be reported in $\cO(1+\occ)$ time.
\end{theorem}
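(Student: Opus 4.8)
The plan is to use the classical ``previous-occurrence'' reduction of colored range reporting to range-minimum queries. First I would introduce an auxiliary array $C[1\dd N]$ defined by letting $C[k]$ be the largest index $k'<k$ with $A[k']=A[k]$, and $C[k]=0$ if no such $k'$ exists. This array is computed in $\cO(N+U)$ time by a single left-to-right sweep that maintains, for every color $c\in[1\dd U]$, the position of its most recent occurrence seen so far; the size-$U$ lookup table used for this is discarded afterwards, so the final space is $\cO(N)$.

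The combinatorial core of the argument is the observation that, for any query range $[i\dd j]$, a color occurs in $A[i\dd j]$ if and only if it has an occurrence $k\in[i\dd j]$ with $C[k]<i$, and moreover such $k$ is unique: it is the leftmost occurrence of that color inside the range. Indeed, if $k_1<\dots<k_m$ are the occurrences of a fixed color in $[i\dd j]$, then $C[k_1]<i$ while $C[k_t]=k_{t-1}\ge i$ for $t\ge 2$; conversely $C[k]<i$ forces $k$ to be the first occurrence of $A[k]$ in the range. Hence reporting the distinct elements of $A[i\dd j]$ is equivalent to enumerating the set $\{k\in[i\dd j] : C[k]<i\}$, which has exactly $\occ$ elements whenever $i\le j$.

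To enumerate this set I would build an $\cO(N)$-space, $\cO(N)$-time \RMQ{} structure on $C$ and run the standard recursive procedure: given a subrange, locate a position $k$ of minimum $C$-value in it; if $C[k]\ge i$ (with $i$ the original left endpoint), the subrange contains no index with $C$-value below $i$ and we stop, otherwise we output $A[k]$ and recurse on $[i\dd k-1]$ and $[k+1\dd j]$, a branch with an empty subrange returning at once. Correctness is immediate from the characterization above, since splitting at a position of minimum $C$-value can never hide from us an index whose $C$-value is below $i$. The recursion tree is then a full binary tree whose internal nodes are precisely the successful calls—one per reported color—so it has $\occ$ internal nodes, $\occ+1$ leaves, and $\cO(1+\occ)$ nodes in total, each costing a single $\cO(1)$-time \RMQ{}; an empty query range is handled in $\cO(1)$.

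The step I expect to require the most care is exactly this running-time analysis: one must argue that every unsuccessful call terminates at once (so it is a leaf) and that every successful call branches into two well-defined subproblems, in order to conclude that the number of visited nodes is linear in the output and that no work is wasted descending into parts of $[i\dd j]$ carrying no first occurrence. Everything else—computing $C$, building the \RMQ{} structure (e.g.\ via a Cartesian tree together with constant-time \LCA queries), and releasing the auxiliary size-$U$ table—is routine and stays within the claimed $\cO(N+U)$ preprocessing time and $\cO(N)$ space.
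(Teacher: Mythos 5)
Your proof is correct and is exactly the classical Muthukrishnan argument that the paper invokes: the array $C$ you define is precisely the ``previous occurrence'' array that the paper later calls $J[i]=\max\{j : j<i,\ A[i]=A[j]\}$ in its space-reduction discussion, and the RMQ-based recursive enumeration with the ``stop when the minimum is $\ge i$'' test is the standard query procedure. The running-time accounting via the proper binary recursion tree with $\occ$ internal nodes and $\occ+1$ leaves is a clean way to justify the $\cO(1+\occ)$ bound.
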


We first perform a colored range reporting query on the range $[i \dd j-2^{k+1}]$ of array $I_k$ and obtain a set of distinct patterns $\mathcal{C}_k$, employing Theorem~\ref{thm:muthu}. We observe the following.

\begin{observation}
Any pattern of a $k$-dictionary $\D_k$ occurring in $T$ at position $p\in [i\dd j-2^{k+1}]$ is a prefix of a pattern~$P\in \mathcal{C}_k$.
\end{observation}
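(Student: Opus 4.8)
The statement is essentially an unwinding of the definitions of $L_k$, $I_k$ and $\mathcal{C}_k$, so the plan is to make the bookkeeping explicit. Fix a pattern $Q\in\D_k$ occurring in $T$ at position $p\in[i\dd j-2^{k+1}]$. By definition of an occurrence, $Q$ is a prefix of the suffix $T_{(p)}$, and in particular $Q$ belongs to the set $S_p:=\{R\in\D_k : R\text{ is a prefix of }T_{(p)}\}$, which is therefore non-empty. Since any two distinct prefixes of $T_{(p)}$ have distinct lengths, $S_p$ has a unique longest element; call it $P$. By construction $P=T[p\dd L_k[p]]$ and hence $I_k[p]=\id(P)$.

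Next I would invoke the colored range reporting query. Since $p\in[i\dd j-2^{k+1}]$, the value $I_k[p]=\id(P)$ occurs in the subarray $I_k[i\dd j-2^{k+1}]$, so by Theorem~\ref{thm:muthu} the query returns $\id(P)$, i.e.\ $P\in\mathcal{C}_k$. Finally, $Q$ is a prefix of $P$: both $Q$ and $P$ are prefixes of $T_{(p)}$, and $|Q|\le|P|$ because $P$ is the longest element of $S_p$ while $Q\in S_p$; a shorter prefix of a string is a prefix of a longer one, so $Q$ is a prefix of $P$. This proves the claim.

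There is no real obstacle here; the only point that deserves a word of care is conceptual rather than technical. The array $I_k$ records, at position $p$, the identifier of the \emph{longest} pattern of $\D_k$ starting at $p$, and this pattern $P$ may be strictly longer than $Q$ and may even extend past position $j$. What makes the argument go through is that $I_k$ is indexed by \emph{starting} positions, so it is $p$ (which lies in the query range by hypothesis), and not the endpoint of $P$, that must fall inside the reporting range. The choice of the truncated range $[i\dd j-2^{k+1}]$ rather than $[i\dd j]$ is irrelevant to this observation—it is there to control patterns of $\D_k$ starting near the right end of $T[i\dd j]$, which are handled separately—so no interaction with that choice needs to be addressed in this proof.
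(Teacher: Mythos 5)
Your proof is correct. The paper states this as an observation without a proof (treating it as an immediate consequence of the definitions), and your argument is exactly the careful unwinding one would supply: the longest pattern $P\in\D_k$ that is a prefix of $T_{(p)}$ is recorded in $I_k[p]$, position $p$ lies in the query range, so $\id(P)$ is reported and $P\in\mathcal{C}_k$, and $Q$ is a prefix of $P$ since both are prefixes of $T_{(p)}$ with $|Q|\le|P|$. (You also correctly read through the paper's typo $T[a\dd a+L_k[a]]$, which should be $T[a\dd L_k[a]]$.) Your side remark on the role of the truncation $[i\dd j-2^{k+1}]$ is accurate: it is not needed for this observation to hold, but rather guarantees every $P\in\mathcal{C}_k$ (of length $<2^{k+1}$) actually fits inside $T[i\dd j]$, with patterns starting near the right end handled by the separate $\RD(t,j)$ step.
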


Based on this observation, we will report the remaining patterns using the $\D_k$-modified suffix tree, following parent pointers and temporarily marking the loci of reported patterns to avoid double-reporting.
We thus now only have to compute the patterns from $\D_k$ that occur in $T[t\dd j]$, where $t=\max\{i,j-2^{k+1}+1\}$.

We further partition $\D_k$  for $k>1$ to a \emph{periodic $k$-dictionary} and \emph{aperiodic $k$-dictionary}: \[\D_k^{p}=\{P\in \D_k : \per(P)\leq 2^k/3\}\quad\text{and}\quad\D_k^{a}=\{P\in \D_k : \per(P) > 2^{k}/3\}.\] Note that we can partition $\D_k$ in $\cO(|\D_k|)$ time using the so-called \textsc{2-Period Queries} of~\cite{DBLP:conf/soda/KociumakaRRW15,DBLP:journals/siamcomp/BannaiIINTT17,tomeksthesis}. Such a query decides whether a given fragment of the text is periodic and, if so, it also returns its period. It can be answered in $\cO(1)$ time after an $\cO(n)$-time preprocessing of the text.

\subsection{Processing aperiodic $k$-dictionary} 

We make use of the following sparsity property. 

\begin{fact}[Sparsity of occurrences] \label{fact:sparse}
The occurrences of a pattern $P$ of an aperiodic $k$-dictionary $\D_k^a$ in $T$ start over $\frac16|P|$ positions apart.  
\end{fact}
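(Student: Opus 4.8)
The plan is to combine the size constraints that define the aperiodic $k$-dictionary with the elementary observation that overlapping occurrences of a string force a short period. First I would record the two facts available about any $P\in\D_k^a$: since $P\in\D_k$ we have $2^k\le |P|<2^{k+1}$, and by the definition of $\D_k^a$ we have $\per(P)>2^k/3$. Chaining these, $\per(P)>2^k/3 > |P|/6$; this is the inequality that will do all the work.

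Next I would recall the standard combinatorial fact: if a string $P$ occurs in $T$ at two positions $a<b$ with $s:=b-a<|P|$, then $s$ is a period of $P$. Indeed, $P=T[a\dd a+|P|-1]=T[b\dd b+|P|-1]$, and the overlap region $T[b\dd a+|P|-1]$ is nonempty because $s<|P|$; for every position $x$ in this region we get $P[x-a+1]=T[x]=P[x-b+1]$, i.e. $P[y]=P[y+s]$ for all $y\in[1\dd|P|-s]$, which is exactly the statement that $s$ is a period of $P$.

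Then I would argue by contradiction. Suppose $P\in\D_k^a$ had two distinct occurrences in $T$ starting at positions $a<b$ with $b-a\le \tfrac16|P|$. Since $\tfrac16|P|<\per(P)\le|P|$, the shift $s=b-a$ satisfies $1\le s<|P|$, so by the previous paragraph $s$ is a period of $P$. But $s\le\tfrac16|P|<\per(P)$ contradicts the minimality of $\per(P)$. Hence no two occurrences of $P$ can start within $\tfrac16|P|$ positions of each other, i.e. consecutive occurrences of $P$ in $T$ start over $\tfrac16|P|$ positions apart, as claimed.

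There is no genuine obstacle here; the statement is a short consequence of classical periodicity arguments. The only points requiring a little care are keeping the strict versus non-strict inequalities consistent so that the overlap is truly nonempty, and making explicit the passage $|P|<2^{k+1}\Rightarrow |P|/6<2^k/3<\per(P)$, which is precisely what guarantees that the period extracted from two overlapping occurrences would be smaller than $\per(P)$.
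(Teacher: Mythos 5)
Your proof is correct and takes essentially the same approach as the paper: both derive a contradiction from the fact that two occurrences starting at most $\frac16|P|$ apart would give $P$ a period of at most $\frac16|P| < 2^k/3 < \per(P)$. You simply spell out the overlap-implies-period step in more detail than the paper does.
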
 
\begin{proof}
  If two occurrences of $P$ started $d \le \frac{2^k}{3}$ positions apart, then $d$ would be a period of $P$, contradicting $P\in \D_k^a$.
Then, since $2^k \leq |P| < 2^{k+1}$, we have that $2^k/3 \geq \frac16|P|$.
\end{proof}

\begin{lemma}
$\RD(t,j)$ queries for the aperiodic $k$-dictionary $\D^a_k$ and $j-t \leq 2^{k+1}$ can be answered in $\cO(1+\occ)$ time with a data structure of size $\cO(n+|\D^a_k|)$, that can be constructed in $\cO(n+|\D^a_k|)$ time.
\end{lemma}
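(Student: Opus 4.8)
\noindent\emph{Overview of the plan.} The window $T[t\dd j]$ has length $\le 2^{k+1}+1=\Theta(2^k)$ while every pattern of $\D_k^a$ has length $\Theta(2^k)$, so I want to (i) locate, quickly, exactly those starting positions in $[t\dd j]$ at which \emph{some} pattern of $\D_k^a$ actually fits inside $T[t\dd j]$, and (ii) at each such position run a ``forward search'' in a modified suffix tree to emit the patterns occurring there. Aperiodicity (Fact~\ref{fact:sparse}) will guarantee that each pattern is emitted only $\Oh(1)$ times, which is what makes the total work $\cO(1+\occ)$; the duplicates are then removed with a global counter array, as announced at the start of Section~\ref{sec:rd}.

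\smallskip\noindent\emph{The data structure.} First I would build the $\D_k^a$-modified suffix tree of $T$, which by Lemma~\ref{lem:dmodst} (applied with dictionary $\D_k^a$) costs $\cO(n+|\D_k^a|)$ time and space. A single depth-first traversal of it then yields, for every position $p\in[1\dd n]$, the string-depth $s_k[p]$ of the nearest non-root (hence pattern) internal ancestor of the terminal node labelled $p$ -- equivalently, the length of the \emph{shortest} pattern of $\D_k^a$ that occurs at $p$ -- with $s_k[p]=\infty$ when the parent of that terminal node is the root. I store the array $E_k[p]:=p+s_k[p]-1$ (the right endpoint of that shortest occurrence, or $\infty$) and equip $E_k$ with an \RMQ{} structure in $\cO(n)$ time and space, exactly as in the proof of Theorem~\ref{thm:easy}. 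All of this is $\cO(n+|\D_k^a|)$ time and space.

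\smallskip\noindent\emph{Answering the query, and correctness.} Given $(t,j)$ with $j-t\le 2^{k+1}$, I first enumerate the \emph{active} positions, i.e.\ the $p\in[t\dd j]$ with $E_k[p]\le j$. This is done by the standard \RMQ{}-recursion: query $E_k[t\dd j]$; if the minimum exceeds $j$, stop; otherwise report the minimiser $p$ and recurse on $[t\dd p-1]$ and $[p+1\dd j]$. This costs $\cO(1)$ plus $\cO(1)$ per active position. For each active $p$ I then apply Lemma~\ref{lem:fwd_search} (with dictionary $\D_k^a$) to the pair $(p,j)$, reporting all $P\in\D_k^a$ with an occurrence at $p$ of length $\le j-p+1$; any such $P$ has $t\le p$ and $p+|P|-1\le j$, so it genuinely occurs inside $T[t\dd j]$. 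Conversely, if some $P\in\D_k^a$ occurs inside $T[t\dd j]$ at position $p$, then $s_k[p]\le|P|\le j-p+1$, so $E_k[p]\le j$ and $p\in[t\dd j]$; hence $p$ is active and $P$ gets reported when $p$ is processed, so the output is complete.

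\smallskip\noindent\emph{Complexity and the main obstacle.} Every active $p$ causes at least one pattern-occurrence to be reported (namely the shortest one at $p$), so the number of active positions is at most the number of reported pattern-occurrences. By Fact~\ref{fact:sparse}, two occurrences of a fixed $P\in\D_k^a$ are more than $\tfrac16|P|\ge\tfrac16 2^k$ apart, while $T[t\dd j]$ has length at most $2^{k+1}+1$, so each reported pattern is produced only $\Oh(1)$ times; thus the number of reported pattern-occurrences is $\cO(\occ)$ and so is the number of active positions. Summing the \RMQ{}-recursion cost and the costs of the Lemma~\ref{lem:fwd_search} calls gives query time $\cO(1+\occ)$. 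The only point that needs care -- and the reason for encoding the \emph{shortest} occurrence per position into $E_k$ rather than simply walking down the modified suffix tree from every position of $[t\dd j]$ -- is to avoid wasting time on positions at which a pattern of $\D_k^a$ occurs but is too long to lie inside $T[t\dd j]$; the \RMQ{}-recursion over $E_k$ is precisely what lets us skip straight to the active positions. Aperiodicity is used only to bound the per-pattern multiplicity.
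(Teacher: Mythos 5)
Your proof is correct and is essentially the paper's argument, just with the internals made explicit: the paper simply invokes the $\R(t,j)$ query of Theorem~\ref{thm:easy}\ref{thm:report} for $\D_k^a$ as a black box and removes duplicates, using Fact~\ref{fact:sparse} to bound the per-pattern multiplicity, whereas you re-derive the ingredients of that $\R$ query (the array of right endpoints of shortest patterns, the \RMQ{} recursion, and the forward search of Lemma~\ref{lem:fwd_search}). The key step — using aperiodicity to argue each pattern contributes only $\Oh(1)$ occurrences in a window of length $\le 2^{k+1}$ — is the same.
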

\begin{proof}
Since the fragment $T[t \dd j]$ is of length at most $2^{k+1}$, it may only contain a constant number of occurrences of each pattern in $\D_k^a$ by~\cref{fact:sparse}. We can thus simply use a $\R(t,j)$ query for dictionary $\D_k^a$ and then remove duplicates. The complexities follow from~Theorem~\ref{thm:easy}\ref{thm:report}.
\end{proof}

\subsection{Processing periodic $k$-dictionary}

Our solution for periodic patterns relies on the well-studied theory of maximal repetitions (\emph{runs}) in strings. A run is a periodic fragment $R=T[a\dd b]$ which can be extended neither to the left nor to the right without increasing the period $p= \per(R)$, that is, $T[a-1]\neq T[a +p-1]$ and $T[b-p+1] \neq T[b+1]$ provided that the respective letters exist.
The number of runs in a string of length $n$ is $\cO(n)$ and  all the runs  can be computed in $\cO(n)$ time \cite{DBLP:conf/focs/KolpakovK99,DBLP:journals/siamcomp/BannaiIINTT17}.

\begin{observation}\label{obs:run}
Let $P$ be a periodic pattern. If $P$ occurs in $T[t\dd j]$, then $P$ is a fragment of a unique run $R$ such that $\per(R) = \per(P)$. We say that this run $R$ \emph{extends} $P$.
\end{observation}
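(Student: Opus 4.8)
The plan is to derive both the existence and the uniqueness of the run $R$ directly from the extremal definition of a run together with the periodicity (Fine--Wilf) lemma. Fix the occurrence of $P$ in question, say $P = T[a\dd b]$ with $b-a+1 = |P|$, and set $p = \per(P)$; since $P$ is periodic we have $|P|\ge 2p$, so $p$ is a period of this length-$|P|$ fragment and $|P| \ge 2p$.

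\emph{Existence.} First I would extend the occurrence maximally while keeping the period $p$: let $a'$ be the smallest index $\le a$ with $T[a'\dd b]$ having period $p$, then let $b'$ be the largest index $\ge b$ with $T[a'\dd b']$ having period $p$, and put $R = T[a'\dd b']$. By this choice, $T[a'-1]\neq T[a'+p-1]$ and $T[b'-p+1]\neq T[b'+1]$ whenever these positions exist, so $R$ is a run as soon as its smallest period equals $p$. To see $\per(R) = p$: $R$ has period $p$ and $|R|\ge |P|\ge 2p$; were its smallest period some $q<p$, then $|R|\ge 2p > p+q > p+q-\gcd(p,q)$, so by the periodicity lemma $\gcd(p,q)$ would be a period of $R$, hence of its fragment $P$ (as $|P| \ge 2p > \gcd(p,q)$), contradicting minimality of $p=\per(P)$ since $\gcd(p,q)\le q<p$. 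Thus $R$ is a run with $\per(R)=\per(P)$ containing $P$, which "extends" it.

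\emph{Uniqueness.} Suppose $R_1 = T[a_1\dd b_1]$ and $R_2 = T[a_2\dd b_2]$ are runs, each with period $p$, each containing the occurrence $T[a\dd b]$. Their common part then contains $T[a\dd b]$, which has length $|P|\ge 2p\ge p$. I would show that the shortest fragment $R'$ containing both $R_1$ and $R_2$ still has period $p$: for every position $i$ of $R'$, the length-$(p{+}1)$ window $T[i\dd i+p]$ lies entirely inside $R_1$ or entirely inside $R_2$ — this is precisely the place where the overlap being at least $p$ is invoked — so $T[i]=T[i+p]$. Since $R'$ has period $p$ and contains the run $R_1$, maximality of $R_1$ forces $R_1 = R'$, and symmetrically $R_2 = R'$; hence $R_1 = R_2$.

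\emph{Main obstacle.} Everything but the uniqueness step is a routine invocation of the periodicity lemma and of the defining maximality of a run. The one point requiring care is the synchronization argument for uniqueness: checking that when two runs share the period $p$ and overlap in at least $p$ positions, every length-$(p{+}1)$ window of the combined fragment is swallowed by one of the two runs, so that the combined fragment again has period $p$ and both runs must equal it.
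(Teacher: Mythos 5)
The paper states this as an \emph{Observation} without proof, treating it as a standard fact from the theory of runs (maximal repetitions). Your argument is correct and is essentially the canonical one behind this fact: extend the given occurrence maximally in both directions while preserving the period $p$ to obtain the run $R$, and then prove uniqueness by a synchronization argument --- two runs of the same period $p$ whose overlap has length at least $p$ must coincide, because every length-$(p+1)$ window of their union lies in one of the two runs, so the union still has period $p$ and maximality of each run forces it to equal the union.

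One small simplification is available in the existence step: invoking Fine--Wilf is unnecessary. If $q=\per(R)<p$, then $q$ is a period of the fragment $P\subseteq R$ (non-vacuously, since $|P|\ge 2p>q$), so $\per(P)\le q<p$, directly contradicting $\per(P)=p$. Passing to $\gcd(p,q)$ via the periodicity lemma gains nothing, since in the end you only use $\gcd(p,q)\le q$. Aside from this minor overcomplication, the proof is complete and correct, and the key point --- that the overlap of at least $p$ (here in fact $\ge 2p$, since both runs contain $P$) is what makes the window-covering argument work --- is identified accurately.
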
 

 Let $\Ru$ be the set of all runs in $T$. Following~\cite{tomeksthesis}, we construct for all $k\in [0\dd \floor{\log n}]$ the sets of runs $\Ru_k=\{R \in \Ru : \per(R) \leq \frac{2^k}{3}, |R| \geq 2^k \}$ in $\cO(n)$ time overall. Note that these sets are not disjoint; however, $|\Ru_k|=\cO(\frac{n}{2^k})$ (cf.~\cref{lem:runsk} below) and thus their total size is $\cO(n)$. If $U$ is a fragment of $T$, by $\Ru_k(U)\subseteq \Ru_k$ we denote the set of all runs $R \in \Ru_k$ such that $|R\cap U| \ge 2^k$, that is, runs whose overlap with the frgment $U$ is at least $2^k$.

 \begin{lemma}[{see \cite[Lemma 4.4.7]{tomeksthesis}}]\label{lem:runsk}
  $|\Ru_k(U)| =\cO \big( \frac1{2^k}|U|\big)$.
 \end{lemma}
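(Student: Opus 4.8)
The plan is to bound $|\mathcal{R}_k(U)|$ by a charging argument: we want to show that each run $R \in \mathcal{R}_k(U)$ can be charged to a distinct block of $\Theta(2^k)$ consecutive positions inside $U$, so that at most $\cO(|U|/2^k)$ runs can be charged. First I would recall two structural facts about runs. The first is the standard ``three squares'' / Fine–Wilf type lemma: if two runs $R$ and $R'$ satisfy $\per(R),\per(R') \le \tfrac{2^k}{3}$ and their overlap $|R \cap R'|$ is at least $2^k$ (which is at least $\per(R)+\per(R')$), then $R$ and $R'$ have the same period and would be extendable into one another, contradicting maximality; hence $R = R'$. In other words, two distinct runs in $\mathcal{R}_k$ can overlap in fewer than $\tfrac{2^k}{3}$ (or at any rate fewer than $2^k$) positions. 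The second observation is that for each $R \in \mathcal{R}_k(U)$, the intersection $R \cap U$ is itself an interval of length at least $2^k$ contained in $U$.

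Next I would use this to sparsify. Consider the left endpoints of the intervals $R \cap U$ for $R \in \mathcal{R}_k(U)$, sorted left to right. By the overlap bound, consecutive such intervals (in this sorted order) can share fewer than $2^k$ positions; since each has length $\ge 2^k$, the left endpoints of consecutive intervals are more than, say, $\tfrac1{3}2^k$ apart — more carefully, if the intervals $R\cap U$ and $R'\cap U$ overlap in fewer than $2^k$ positions and each has length $\ge 2^k$, their starting positions differ by a positive amount, and one can pin this down to $\Omega(2^k)$ by observing that $R$'s interval must ``stick out'' on the right by the amount of non-overlap. Concretely, if the starting positions were within $2^k$ of each other and both intervals have length $\ge 2^k$, they'd overlap in $\ge 2^k - (\text{gap})$ positions; combining with the $<2^k$ overlap bound forces gap $> $ something, but a cleaner route is: order the runs so that $R\cap U$ has interval $[\ell_s, r_s]$ with $\ell_1 \le \ell_2 \le \cdots$; then $r_{s} < \ell_{s+1} + 2^k$ from the overlap bound together with $|R_{s+1}\cap U| \ge 2^k$ forces $\ell_{s+1} > r_s - 2^k \ge \ell_s$, and iterating, the number of runs is at most $\lceil |U|/c\,2^k\rceil$ for an absolute constant $c$.

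The main obstacle I anticipate is getting the Fine–Wilf overlap bound exactly right with the constants $\tfrac{2^k}{3}$ and $2^k$ that appear in the definition of $\mathcal{R}_k$: one needs that an overlap of length $\ge 2^k$ exceeds $\per(R) + \per(R') - \gcd(\per(R),\per(R'))$ so that the periodicity lemma applies and forces a common period, and then that a common period together with overlap $\ge \max(\per(R),\per(R'))$ forces the two runs to coincide by maximality (a run cannot be extended without breaking its period, so two runs with the same period and a nonempty overlap are equal). Since $2^k \ge 3 \cdot \tfrac{2^k}{3} \ge \per(R)+\per(R') > \per(R)+\per(R') - \gcd$, the hypothesis of the periodicity lemma is comfortably met. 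Once this is pinned down, the counting is the routine interval-packing argument above. Finally, I would double-check that the same constant works uniformly so the total $\sum_k |\mathcal{R}_k(T)| = \cO(\sum_k n/2^k) = \cO(n)$ claim made just before the lemma is consistent (taking $U = T$), though that global bound is not what this lemma asks for.

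\begin{proof}[Proof sketch]
Fix $k$ and a fragment $U = T[u \dd u']$. For $R \in \mathcal{R}_k(U)$ write $R \cap U$ for the corresponding interval of positions; by definition it has length at least $2^k$. We claim that two distinct runs $R, R' \in \mathcal{R}_k(U)$ overlap in fewer than $2^k$ positions. Indeed, if $|R \cap R'| \ge 2^k \ge 3\cdot\tfrac{2^k}{3} \ge \per(R) + \per(R')$, then by the periodicity lemma $p := \gcd(\per(R),\per(R'))$ is a period of $R \cap R'$; since $R\cap R'$ has length at least $\per(R)$ and $\per(R)$ is the smallest period of $R$, in fact $p = \per(R) = \per(R')$. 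A run is, by maximality, not extendable to either side without increasing its period; two runs with equal period and nonempty overlap therefore coincide, contradicting $R \ne R'$.

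Now order the runs $R_1, R_2, \dots, R_m$ of $\mathcal{R}_k(U)$ by the left endpoint $\ell_s$ of $R_s \cap U$, breaking ties by right endpoint. Since $R_s \cap U$ has length $\ge 2^k$ and $|R_s \cap R_{s+1}| < 2^k$, the interval $R_{s+1}\cap U$ starts strictly after the position $\ell_s + 2^k$ would be reached only after $R_s\cap U$ has ``used up'' its length, so $\ell_{s+1} \ge \ell_s + 1$ and, more quantitatively, the $\ge 2^k$-length intervals $R_s \cap U$ can pairwise overlap in $< 2^k$ positions only if their left endpoints are spaced $\Omega(2^k)$ apart; packing such intervals into $U$ (of length $|U|$) gives $m = \cO(|U|/2^k)$.
\end{proof}
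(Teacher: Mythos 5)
The paper does not prove this lemma itself (it cites \cite[Lemma~4.4.7]{tomeksthesis}), so there is no internal proof to compare against; I will just assess your argument on its own terms. Your general strategy (small pairwise overlap of runs plus an interval-packing count) is the right one, and the Fine--Wilf step is essentially correct, but the packing step as you wrote it has a genuine gap.

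The problem is that you weaken your own overlap bound and then try to pack with the weakened bound. Your Fine--Wilf argument actually shows that distinct $R,R'\in \Ru_k$ satisfy $|R\cap R'| < \per(R)+\per(R') \le \tfrac{2}{3}\cdot 2^k$, but you only record $|R\cap R'|<2^k$. The claim that ``length-$\ge 2^k$ intervals with pairwise overlap $<2^k$ must have left endpoints spaced $\Omega(2^k)$ apart'' is false: the intervals $[s, s+2^k-1]$ for $s=1,\dots,|U|-2^k+1$ all have length $2^k$ and pairwise overlap strictly less than $2^k$, yet their left endpoints are spaced by $1$, so there are $\Theta(|U|)$ of them, not $\cO(|U|/2^k)$. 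Indeed, the inequality chain you wrote down earlier yields only $\ell_{s+1} > \ell_s$, which is $m \le |U|$ and does not prove the lemma.

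The fix is to keep the stronger bound you actually proved. Sort the intervals $I_s = R_s\cap U = [\ell_s,r_s]$ by left endpoint; note $\ell_s < \ell_{s+1}$ and $r_s < r_{s+1}$ (otherwise one interval contains the other, giving overlap $\ge 2^k$). If $I_s$ and $I_{s+1}$ intersect, the overlap is $[\ell_{s+1},r_s]$, so $r_s - \ell_{s+1} + 1 \le \per(R_s)+\per(R_{s+1}) \le \tfrac{2}{3}2^k$, hence $\ell_{s+1} \ge r_s + 1 - \tfrac{2}{3}2^k \ge \ell_s + 2^k - \tfrac{2}{3}2^k = \ell_s + \tfrac{1}{3}2^k$; if they are disjoint the spacing is even larger. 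Thus consecutive left endpoints are at least $\tfrac{1}{3}2^k$ apart, and since they all lie in $U$, $m \le 1 + 3(|U|-1)/2^k = \cO(|U|/2^k)$. (Two smaller imprecisions worth fixing as well: ``nonempty overlap'' is not enough to force two runs with equal period to coincide---you need overlap $\ge$ the common period, which you do have here since the overlap is $\ge 2^k$; and the deduction $\per(R)=\per(R')=g$ from the Fine--Wilf step deserves a sentence invoking primitivity of the period of a run, since in general a fragment of a run \emph{can} have a smaller period.)
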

 
\paragraph{\bf Strategy.} Given a fragment $U=T[t \dd j]$, we will first identify all runs $\Ru_k(U)$ of $\Ru_k$ that have a sufficient overlap with $U$. There is a constant number of them by~\cref{lem:runsk}. For an occurrence of a pattern $P \in \D_k^p$ in $U$, the unique run $R$ extending this occurrence of $P$ must be in $\Ru_k(U)$. We will preprocess the runs in order to be able to compute a unique (the leftmost) occurrence \emph{induced} by run $R$ for each such pattern $P$.
  
\begin{lemma}\label{lem:runsk2}
 Let $U$ be a fragment of $T$ of length at most $2^{k+1}$. Then $\Ru_k(U)$ can be retrieved in $\cO(1)$ time after an $\cO(n)$-time preprocessing.
\end{lemma}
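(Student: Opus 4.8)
The goal is to retrieve, in $\cO(1)$ time, the constant-size set $\Ru_k(U)$ of runs from $\Ru_k$ whose overlap with a short fragment $U=T[t\dd j]$ (of length at most $2^{k+1}$) is at least $2^k$. The plan is to build, for each scale $k$, a lookup array indexed by \emph{blocks} of length $2^k$ that stores the relevant runs, so that a query fragment $U$ touches only $\Oh(1)$ consecutive blocks. First I would partition the positions of $T$ into consecutive blocks $B^k_0, B^k_1, \ldots$ where $B^k_m = T[m\cdot 2^k + 1 \dd (m+1)\cdot 2^k]$ (the last block possibly truncated). Since $|U|\le 2^{k+1}$, the fragment $U$ is contained in the union of at most three consecutive blocks; in $\Oh(1)$ time we compute the indices of these blocks from $t$ and $j$ by integer division.

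Next I would precompute, for every block $B^k_m$, the list $S^k_m$ of all runs $R\in\Ru_k$ such that $|R\cap B^k_m|\ge 1$, i.e.\ $R$ intersects the block. The crucial point is that this list has constant size: a run $R\in\Ru_k$ has $|R|\ge 2^k = |B^k_m|$, so by \cref{lem:runsk} applied to $U = B^k_{m-1}B^k_m B^k_{m+1}$ (of length $\le 3\cdot 2^k \le 2^{k+1}\cdot\tfrac32$, and in any case $\Oh(2^k)$), there are only $\Oh(1)$ runs of $\Ru_k$ whose overlap with that stretch is at least $2^k$, and any run intersecting $B^k_m$ has length $\ge 2^k$ and hence overlaps this stretch in $\ge 2^k$ positions — so $|S^k_m| = \Oh(1)$. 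These lists can be populated in $\Oh(n)$ total time across all $k$: iterate over the $\Oh(n/2^k)$ runs of $\Ru_k$ and, for each, append it to the $\Oh(1)$ blocks $S^k_m$ it spans — wait, a run can be long and span many blocks, so instead I would append each run only to the lists of the (at most two) blocks containing its endpoints, plus conceptually mark that it covers all blocks in between; equivalently, for the query we can afford to also keep, for each block, a pointer to ``the (unique, if any) run of $\Ru_k$ that strictly contains this block'', which is well-defined since two distinct runs of $\Ru_k$ cannot both contain a whole block of length $2^k$ unless they have a common period-$\le 2^k/3$ fragment of length $2^k$ forcing them to coincide by the periodicity lemma. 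This bookkeeping keeps the preprocessing $\Oh(n)$ per the stated bound.

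To answer a query, I would: compute the $\Oh(1)$ block indices $m-1, m, m+1$ (or however many blocks $U$ meets) from $t$ and $j$; take the union of $S^k_{m-1}\cup S^k_m\cup S^k_{m+1}$ together with the at most one ``spanning run'' pointer per block; deduplicate this $\Oh(1)$-size multiset; and finally, for each candidate run $R$, test whether $|R\cap U|\ge 2^k$ by intersecting the two integer intervals $[a\dd b]$ (for $R=T[a\dd b]$) and $[t\dd j]$ in $\Oh(1)$ time, discarding those that fail. Every run of $\Ru_k(U)$ intersects $U\subseteq B^k_{m-1}B^k_mB^k_{m+1}$ and therefore appears among the candidates, so correctness is immediate; the query time is $\Oh(1)$ since all sets involved have constant size.

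\textbf{Main obstacle.} The delicate point is the claim that the per-block candidate lists are simultaneously (i) of constant size and (ii) computable in $\Oh(n)$ total time despite runs being arbitrarily long — a single run of $\Ru_k$ can cover $\Theta(n/2^k)$ blocks, so one cannot naively insert it into every block it touches without blowing the time budget for that scale up to $\Oh(n)$ \emph{per run} in the worst case, hence $\Oh(n^2/2^k)$ overall. The resolution, as sketched above, is to observe that a run properly containing a full block is \emph{unique} for that block (by the periodicity lemma, since $\per\le 2^k/3$ would force any two such runs to agree on their overlap of length $\ge 2^k > 2\cdot 2^k/3$ and thus to be the same run), so it suffices to store at each block one ``long run'' slot plus the $\Oh(1)$ runs that start or end inside the block; filling these takes $\Oh(1)$ amortized work per run after the runs are bucketed by starting block, which is a linear-time bucket sort. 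The rest is routine interval arithmetic.
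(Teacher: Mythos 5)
Your block-based approach is correct, but it differs from the paper's. The paper invokes \textsc{Periodic Extension Queries} (which, after $\cO(n)$-time preprocessing, return in $\cO(1)$ time the run extending any given fragment): it covers $U$ with $\Oh(|U|/2^k) = \Oh(1)$ overlapping windows of length $2^{k+1}/3$ and queries each one; any $R\in\Ru_k(U)$ has some window $V$ with $V\subseteq R\cap U$ and $|V|\ge 2\per(R)$, so $R$ is recovered as the periodic extension of $V$. Your alternative precomputes, for each block of length $2^k$, the (constant-size) set of runs of $\Ru_k$ that intersect it, and at query time gathers candidates from the $\Oh(1)$ blocks that $U$ meets and filters by overlap. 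This is a genuinely different decomposition; it is more self-contained (relying only on \cref{lem:runsk} and interval arithmetic rather than on the periodic-extension black box), at the cost of a separate precomputed table per scale $k$.

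One remark on the ``main obstacle'' you raise: the spanning-run workaround is unnecessary, and the worry about $\Oh(n^2/2^k)$ preprocessing time per scale does not materialize. You already establish $|S^k_m|=\Oh(1)$ for every block $m$; but the total work of the naive insertion (iterate over each $R\in\Ru_k$, append $R$ to every block it touches) is exactly $\sum_R |\{m : R\cap B^k_m\neq\emptyset\}| = \sum_m |S^k_m| = \Oh(n/2^k)$. In other words, the constant-per-block bound you proved \emph{is} the global accounting: a single long run touching $\Theta(n/2^k)$ blocks consumes essentially the entire level-$k$ budget, so no other run can do so simultaneously. Summing $\Oh(n/2^k)$ over $k=0,\ldots,\lfloor\log n\rfloor$ gives $\Oh(n)$, as desired, with no need for a separate ``unique spanning run'' slot or for the periodicity-lemma argument about uniqueness of a run containing a full block (which, incidentally, would require the extended Fine--Wilf argument rather than the plain periodicity lemma, so it is good to avoid it). The rest of your construction and the query procedure are sound.
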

\begin{proof}
\textsc{Periodic Extension Queries}~\cite[Section 5.1]{tomeksthesis}, given a fragment $V$ of the text $T$ as input, return the run $R$ extending $V$. They can be answered in $\cO(1)$ time after $\cO(n)$-time preprocessing.

Let us cover $U$ using $\Oh(\frac{1}{2^k}|U|)$ fragments of length $\frac{2^{k+1}}{3}$ with overlaps of at least $\frac{2^k}{3}$ and ask a \textsc{Periodic Extension Query} for each fragment $V$ in the cover. For each run $R \in \Ru_k(U)$ with sufficient overlap, $R\cap U$ must contain a fragment $V$ in the cover and its periodic extension must be $R$ since $|V|\ge 2\cdot \per(R)$.
\end{proof}

\paragraph{\bf Preprocessing.}
We construct an array $\ell_k[1\dd n]$ such that $T[i\dd \ell_k[i]]$ is the shortest pattern $P\in \D_k^p$
that occurs at position $i$. 
Note that $\ell_k[i]$ can be retrieved in $\Oh(1)$ time using a level ancestor query in the $\D_p^k$-modified suffix tree
(asking for a level-1 ancestor of the leaf corresponding to $T_{(i)}$, as in the proof of Theorem~\ref{thm:easy}\ref{thm:exists}).
We then preprocess the array $\ell_k$ for \RMQ queries. 

\paragraph{\bf Processing a run at query.} Let us begin with a consequence of the fact that the shortest period is primitive.

\begin{observation}\label{obs:firstp}
  If a pattern $P$ occurs in a text $Q$ and satisfies $|P|\ge \per(Q)$, then $P$ has exactly one occurrence in the first $\per(Q)$ positions of $Q$.
\end{observation}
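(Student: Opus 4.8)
The plan is to write $p=\per(Q)$ and $w=Q[1\dd p]$, and to rely on the standard fact (the promised consequence of minimality) that $w$ is \emph{primitive}: if $w=v^t$ with $t\ge 2$, then $w$ has period $|v|$ dividing $p$, hence $Q$ — being determined by iterating $w$ — has period $|v|<p$, contradicting $p=\per(Q)$. Equivalently, $w$ does not coincide with any of its nontrivial cyclic rotations. With this in hand the statement splits into an existence part and a uniqueness part.

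\emph{Existence.} Let $i$ be the starting position of the leftmost occurrence of $P$ in $Q$. I claim $i\le p$. Suppose not; then $i-p\ge 1$, and since $p$ is a period of $Q$ we have $Q[i-p+s]=Q[i+s]=P[s+1]$ for every $s\in\{0,\dots,|P|-1\}$ — all indices involved are legitimate because $i+|P|-1\le|Q|$ (that is where $P$ already occurs) and $i-p\ge 1$. Hence $P$ also occurs at position $i-p<i$, contradicting minimality of $i$. So the leftmost occurrence of $P$ starts among the first $p$ positions of $Q$, giving at least one occurrence there. (Here $|P|\ge\per(Q)$ is not even needed; it enters only in the uniqueness argument.)

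\emph{Uniqueness.} Suppose $P$ occurred at two positions $i<j\le p$, and set $q=j-i$, so $1\le q<p$. From the two occurrences, $Q[i+s]=P[s+1]=Q[j+s]=Q[i+q+s]$ for every $s\in\{0,\dots,|P|-1\}$. Since $Q$ has period $p$, the letter $Q[a]$ depends only on $(a-1)\bmod p$; and since $|P|\ge p$, the indices $i,i+1,\dots,i+|P|-1$ hit every residue class modulo $p$. Combining these, for \emph{every} position $a$ with $1\le a$ and $a+q\le|Q|$ we get $Q[a]=Q[a+q]$: pick $s$ with $i+s\equiv a\pmod p$ and transport the equality $Q[i+s]=Q[i+s+q]$ along the period $p$, the relevant indices staying in $[1,|Q|]$ because $j+|P|-1\le|Q|$. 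Thus $q<p$ is a period of $Q$, contradicting $p=\per(Q)$ — equivalently, $w$ equals its rotation by $q$, contradicting primitivity of $w$. Hence the occurrence produced in the existence part is the unique one starting in the first $p$ positions, which is exactly the claim.

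\emph{Expected main obstacle.} Everything here is elementary; the only place that needs care is the index bookkeeping in the uniqueness step, namely verifying that the overlap of length at least $p$ between the two occurrences, together with the period-$p$ structure of $Q$, really propagates the shift-by-$q$ invariance to \emph{all} of $Q$ and not merely to the fragment the two occurrences span — in particular keeping track of the fact that $P$ need neither begin at position $1$ of $Q$ nor reach position $|Q|$.
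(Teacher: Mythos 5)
Your proof is correct. The paper does not actually give a proof of this observation --- it only prefaces it with the remark that it is ``a consequence of the fact that the shortest period is primitive'' --- and your argument is the standard elaboration of that hint, so there is no conflicting paper argument to compare against. Both halves are sound and the index bookkeeping is handled carefully: for existence, if the leftmost occurrence started past position $p=\per(Q)$ you shift it left by $p$ using the period, staying inside $Q$; for uniqueness, two occurrences $q<p$ apart give a shift-by-$q$ agreement over a window of length $|P|\ge p$, which covers every residue class modulo $p$, and propagating this through the period-$p$ structure of $Q$ yields $Q[a]=Q[a+q]$ for all admissible $a$, i.e.\ a period $q<p$, contradicting minimality. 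One cosmetic remark: the opening paragraph establishing primitivity of $w=Q[1\dd p]$ is never actually invoked, since your uniqueness step derives the smaller period of $Q$ directly rather than reducing to a nontrivial rotation of $w$; it could be dropped without loss.
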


We use \RMQ{}s repeatedly, as in the proof of~Theorem~\ref{thm:easy}\ref{thm:report}, for the subarray of $\ell_k$ corresponding to the first $\per(R)$ positions of $R \cap U$.
This way, due to~\cref{obs:firstp}, we compute exactly the positions where a pattern $P\in \D_k^p$ has its leftmost occurrence in $R\cap U$.
The number of positions identified for a single run $R\in \Ru_k(U)$ is therefore upper bounded by the number of distinct patterns
occurring within $R\cap U$.
We then report all distinct patterns occurring within $R\cap U$ by processing each such starting position using~\cref{lem:fwd_search}.
There is no double-reporting while processing a single run, by~\cref{obs:firstp} and hence the time required to process this run is $\cO(1+\occ)$ --- $\occ$ here refers to the number of distinct patterns from $\D_k^p$ occurring within $U$.
Since $|\Ru_k(U)|=\Oh(1)$, we report each pattern a constant number of times and the overall time required is $\cO(1+\occ)$.

\subsection{Reducing the space}
The space occupied by our data structure can be reduced to $\cO(n+d)$.
We store the $\D$-modified suffix tree and mark all nodes from each $\D_k^i$, for $k\in [0\dd \floor{\log n}])$ and $i \in \{a,p\}$,
with a different color. For each dictionary $\D'=\D_k^i$, we further store, using $\cO(|\D'|)$ space, the $\D'$-modified suffix tree without its unmarked leaves.

We will show below that the only additional operation we now need to support is determining the parent of a given unmarked leaf in the original $\D'$-modified suffix tree (before the leaves were chopped).
This can be done using the nearest colored ancestor data structure of~\cite{DBLP:journals/tcs/GawrychowskiLMW18} over the $\D$-modified suffix tree. For a tree of size $N$, it achieves $\cO(\log\log N)$ time per query after $\cO(N)$-time preprocessing. We can, however, exploit the fact that we only have $\textsf{colors} = \cO(\log n)$ colors to obtain constant-time queries within the same construction time.

In~\cite{DBLP:journals/tcs/GawrychowskiLMW18} it is shown that, in order to answer nearest colored ancestor queries in a tree with $N$ nodes, it is enough to store some arrays of total size $\cO(N)$ and predecessor data structures for $\cO(\textsf{colors})$ subsets of $[1 \dd 2N]$ whose total size is $\cO(N)$. The time needed to compute the sets for the predecessor data structures and the arrays is $\cO(N)$.
The time complexity of the query is proportional to the time required for answering a constant number of predecessor queries over the aforementioned sets.
We implement a predecessor data structure for a set $S \subseteq [1 \dd 2N]$ using $\cO(N)$ bits of space as follows. We store a bitmap that has the $i$th bit set if and only if $i \in S$ and augment it with a data structure that answers \textsf{rank} and \textsf{select} queries in $\cO(1)$ time and requires $o(N)$ additional bits of space~\cite{DBLP:conf/focs/Jacobson89,rsthesis}. Such a component can be constructed in $\cO(N/ \log N)$ time~\cite{DBLP:conf/soda/BabenkoGKS15,DBLP:journals/tcs/MunroNV16}. Note that $\textsf{pred}_S(i)=\textsf{select}(\textsf{rank}(i))$.
We thus use $\cO((n+d) \log n)$ bits, i.e., $\Oh(n+d)$ machine words, in total for the part of the data structure responsible for reporting occurrences starting at given positions.

It was shown in~\cite{DBLP:journals/siamcomp/FischerH11} that we can implement an $\cO(1)$-query-time \RMQ data structure for an array of size $N$ using $\Oh(N)$ bits. This data structure only returns the index of the minimum value in the given range.

For colored range reporting, the main component of the data structure underlying~\cref{thm:muthu} from~\cite{DBLP:conf/soda/Muthukrishnan02} is an \RMQ{} data structure over array $J[i]=\max \{j : j<i, \: A[i]=A[j]\}$.
We build an $\cO(N)$-bits \RMQ{} data structure over $J$.
The query procedure however, needs access to $A$, i.e.~the colors. 
We can retrieve the value of the $i$-th element in our array of colors using our representation of the $\D'$-modified suffix tree, since the its color corresponds to the respective leaf in the $D'$-modified suffix tree or, if it is unmarked, to that of its parent.

Then, filtering $\Oh(\occ)$ starting positions in the periodic case, is based on \RMQ{} queries over multiple arrays of total length $\Oh(n\log n)$.
To construct them, we build the $\D'$-modified suffix trees one by one and build the relevant \RMQ{} data structures that require $\Oh(n\log n)$ bits in total before chopping the unmarked leaves. 
The actual value at the indices returned by \RMQ{} queries can, as above, be determined using our representation of the $\D'$-modified suffix tree.

We arrive at the main result of this section.

\begin{theorem}
$\RD(i,j)$ queries can be answered in $\cO(\log n + \occ)$ time with a data structure of size $\cO(n+d)$ that can be constructed in $\cO(n \log n+d)$ time.
\end{theorem}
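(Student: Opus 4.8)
The plan is to assemble the pieces developed in this section into a single data structure and to account carefully for the $\log n$ factor in the query time. First I would observe that the dictionary splits into $\cO(\log n)$ $k$-dictionaries $\D_k$, each of which is further split into its periodic part $\D_k^p$ and aperiodic part $\D_k^a$ (with the trivial $k\le 1$ cases handled directly). A $\RD(i,j)$ query first restricts attention to those $k$ with $2^k \le j-i+1$, and for each such $k$ performs: (i) one colored-range-reporting query on $I_k$ over $[i\dd j-2^{k+1}]$ via \cref{thm:muthu}, yielding a candidate set $\mathcal{C}_k$; (ii) a walk up the $\D_k$-modified suffix tree from the loci in $\mathcal{C}_k$, following parent pointers and using temporary marks to report, without duplicates, all patterns of $\D_k$ that are prefixes of some $P\in\mathcal{C}_k$ and occur in $T[i\dd j]$; and (iii) the treatment of the short ``tail'' $U=T[t\dd j]$ with $t=\max\{i, j-2^{k+1}+1\}$, where $|U|\le 2^{k+1}$, by the aperiodic lemma for $\D_k^a$ and by the run-based procedure (retrieving $\Ru_k(U)$ via \cref{lem:runsk2}, then for each of the $\cO(1)$ runs doing repeated \RMQ{}s over $\ell_k$ on the first $\per(R)$ positions of $R\cap U$ and expanding each identified starting position with \cref{lem:fwd_search}). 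Finally, the reported patterns are collected with $\cO(1)$ multiplicity each, sorted, and deduplicated using a global array of per-pattern counters, so that each distinct pattern is output exactly once.

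Next I would bound the running time. Steps (i)–(iii) each cost $\cO(1 + \mathit{occ}_k)$ per value of $k$, where $\mathit{occ}_k$ is the number of distinct patterns of $\D_k$ occurring in $T[i\dd j]$; this is exactly where the constant-multiplicity guarantees from \cref{fact:sparse}, \cref{obs:firstp}, and $|\Ru_k(U)|=\cO(1)$ are used. Summing over the $\cO(\log n)$ relevant dictionaries gives $\cO(\log n + \mathit{occ})$, and the final sort of the $\cO(\mathit{occ})$-sized multiset of reported patterns adds only $\cO(\mathit{occ})$ more (using the fact that identifiers lie in $[1\dd d]$ so radix/counting sort applies, or simply $\cO(\mathit{occ}\log n)$ if one prefers, which is still within the bound). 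For preprocessing, the $\D$-modified suffix tree and all $\D'$-modified suffix trees cost $\cO(n+d)$ by \cref{lem:dmodst}; the runs and the sets $\Ru_k$ cost $\cO(n)$; the arrays $L_k$, $I_k$, $\ell_k$ over all $k$ have total length $\cO(n\log n)$ and are built by suffix-tree traversals; and the associated colored-range-reporting and \RMQ{} structures are built on top in time linear in those array lengths — hence $\cO(n\log n + d)$ overall. For space, the reduction described in the ``Reducing the space'' subsection replaces the $\cO(n\log n)$-word arrays by $\cO(n\log n)$-bit (hence $\cO(n+d)$-word) succinct \RMQ{} and predecessor structures, with colors recovered through the $\D'$-modified suffix trees; combined with the $\cO(\log n)$-colors nearest-colored-ancestor trick for locating parents of chopped leaves in $\cO(1)$ time, the total space is $\cO(n+d)$.

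The main obstacle I anticipate is not any single clever step but rather the bookkeeping needed to make the ``$\cO(1)$ copies per pattern'' invariant hold \emph{across} the three sources of output within a fixed $k$, and then across different $k$: a pattern $P\in\D_k$ may be reachable both as a prefix of some $P'\in\mathcal{C}_k$ in step (ii) and as an induced occurrence inside a run in step (iii), and within step (iii) it could in principle be induced by more than one run of $\Ru_k(U)$ (each bounded by \cref{obs:firstp}, so still $\cO(1)$ total since $|\Ru_k(U)|=\cO(1)$). One must argue that every such pattern is produced at least once and at most $\cO(1)$ times, which is where the temporary marking during the upward walks and the leftmost-occurrence selection via \RMQ{} are essential; I would state this as the core correctness claim and verify it case by case. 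The second delicate point is the space reduction: one has to confirm that every place the algorithm previously dereferenced an explicit array of colors (inside \cref{thm:muthu}'s construction and inside the periodic-case filtering) can instead be serviced by the succinct \RMQ{}/rank-select components plus a lookup into the appropriate $\D'$-modified suffix tree, and that these succinct components are still constructible within $\cO(n\log n+d)$ time — which follows from the cited $\cO(N/\log N)$-time rank/select and $\cO(N)$-bit \RMQ{} constructions. Everything else is a routine composition of the lemmas already proved in this section.
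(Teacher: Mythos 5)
Your proposal assembles exactly the pieces the paper uses and in the same way: the split of $\D$ into $\cO(\log n)$ dyadic $k$-dictionaries, colored range reporting on $I_k$ over $[i\dd j-2^{k+1}]$ via \cref{thm:muthu} combined with an upward walk in the $\D_k$-modified suffix tree for the "deep" occurrences, the periodic/aperiodic split of the short tail $T[t\dd j]$, \cref{fact:sparse} for the aperiodic part, $\Ru_k(U)$ retrieval (\cref{lem:runsk2}) plus \RMQ{}s over $\ell_k$ and \cref{obs:firstp} for the periodic part, and finally the succinct rank/select and \RMQ{} machinery to bring the space down from $\cO(n\log n)$ words to $\cO(n+d)$. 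The complexity accounting matches the paper's as well. This is the same proof.

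One small point of imprecision: you motivate the $\cO(\mathit{occ})$-time deduplication both via a global per-pattern counter array (correct, and what the paper does) and via a "final sort" with radix/counting sort on identifiers in $[1\dd d]$. The latter would actually cost $\cO(\mathit{occ}+d)$, which does not fit the $\cO(\log n + \mathit{occ})$ query budget; the comparison-sort fallback $\cO(\mathit{occ}\log n)$ does not fit either. The counter-array approach — mark a pattern's counter on first encounter, output it exactly once, and reset the $\cO(\mathit{occ})$ touched counters afterward — is what actually achieves $\cO(\mathit{occ})$ and should be stated as the mechanism rather than sorting. Everything else is in order.
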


\section{$\C(i,j)$ queries}\label{sec:count}

We first solve an auxiliary problem and show how it can be employed to give an unsatisfactory solution to $\C(i,j)$. We then refine our approach using recompression and obtain the following.

\begin{theorem}
$\C(i,j)$ queries can be answered in $\cO(\log^2 n/\log\log n)$ time with a data structure of size $\cO(n+d \log n)$ that can be constructed in $\cO(n\log n / \log \log n +d\log^{3/2} n)$ time.
\end{theorem}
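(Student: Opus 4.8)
The plan is to reduce counting occurrences of dictionary patterns inside $T[i\dd j]$ to a batched collection of orthogonal range-counting queries, using a \emph{locally consistent parsing} (recompression) of $T$ to control the number of ``contexts'' each pattern has to be accounted for. First I would establish the unsatisfactory baseline: an occurrence of $P=T[a\dd b]\in\D$ lies inside $T[i\dd j]$ iff $i\le a$ and $b\le j$; grouping occurrences by pattern and by starting position is too expensive, so instead I would associate with each pattern $P$ a small set of \emph{anchors} — positions in $T$ that are ``selected'' by the recompression at the appropriate level — such that every occurrence of $P$ in $T$ contains exactly one anchor at a predictable offset. Concretely, using the recompression grammar for $T$, for a pattern of length $m$ one looks at the parsing round where blocks have size $\Theta(m)$ (or rather $\Theta(\log m)$ rounds, one per scale), and the classical property of recompression is that any two occurrences of the same fragment are parsed identically except near their endpoints; this yields $\cO(\log m)=\cO(\log n)$ anchor classes per pattern, hence $\cO(d\log n)$ anchored (pattern, offset) pairs in total, which explains the $d\log n$ terms in the space and preprocessing bounds.

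Next I would turn each anchored pair into a point in a geometric data structure. For a fixed anchor class, an occurrence of $P$ at starting position $a$ with anchor at position $x=a+\Delta$ is contained in $T[i\dd j]$ iff $i\le x-\Delta$ and $x-\Delta+m-1\le j$, i.e.\ $x\ge i+\Delta$ and $x\le j-m+1+\Delta$ — a one-dimensional range on the anchor coordinate, but the set of anchors carrying an occurrence of $P$ is exactly the set of occurrences of a \emph{single explicit string} (the relevant context around the anchor), which by a property of recompression corresponds to a contiguous range in a suitable sorted order (a node in the recompression ``parse tree'' or an associated generalized suffix structure). So each query with endpoints $(i,j)$ and each scale contributes $\cO(\log n)$ two-dimensional dominance / range-counting queries: one coordinate encodes which pattern-context we are in (an interval coming from the parse-tree node), the other encodes the anchor position constrained to $[\,i+\Delta,\; j-m+1+\Delta\,]$. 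Building a 2D range-counting structure over the $\cO(d\log n)$ points with the word-RAM improvements of the relevant range-searching literature gives $\cO(\log n/\log\log n)$ per query and $\cO((n+d\log n)\cdot)$ construction; summing over the $\cO(\log n)$ scales yields the claimed $\cO(\log^2 n/\log\log n)$ query time. The $\frac{n\log n}{\log\log n}$ term in the preprocessing comes from constructing the recompression structure together with the weighted-ancestor / range data structures over $T$ at this word-RAM efficiency, and the $d\log^{3/2}n$ term from sorting and inserting the anchored pairs into the geometric structure.

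The main obstacle I expect is the combinatorial bookkeeping that makes the anchoring \emph{exact}: one must argue that with the recompression parsing, every occurrence of $P$ in all of $T$ — not just those fully inside the query fragment — is counted once and only once across the $\cO(\log n)$ anchor classes, with no double counting between scales and no missed occurrences near run boundaries or periodic stretches. This requires invoking the precise local-consistency guarantee of recompression (two occurrences of the same string have identical parses outside a $\cO(1)$-sized neighborhood of their ends, deterministically, with high probability, or via the deterministic variant) and handling the ``short pattern vs.\ block size'' edge cases separately, likely by a direct $\D$-modified-suffix-tree computation for patterns shorter than some threshold. A secondary technical point is verifying that the context strings around anchors really do map to contiguous ranges in whatever sorted structure is used, so that the pattern-coordinate of each query is a single interval rather than a union; this is where one leans on the tree structure of the recompression grammar. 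Once these structural facts are in place, the geometry is standard and the stated bounds follow by assembling the pieces.
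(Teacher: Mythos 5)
You have assembled the right toolbox --- recompression with $\cO(\log n)$ breakpoints per pattern, a reduction to 2D range searching, and word-RAM range counting --- but the central mechanism in your proposal does not close. Anchoring every occurrence to ``exactly one anchor at a predictable offset,'' with anchors being positions of $T$ selected by the parsing, would require the query to range over all such anchors falling inside $T[i\dd j]$, and these can number $\Theta(j-i)$; you would be back to a linear-time scan unless something aggregates them, and your proposal does not say what. You flag this yourself when you write that one must argue occurrences are ``counted once and only once across the $\cO(\log n)$ anchor classes, with no double counting between scales.''

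The paper's mechanism is different and resolves this. It builds the parse tree $\Tr$ of the recompression RSLP (depth $\cO(\log n)$) and precomputes, for every symbol $A$, the number of occurrences of all patterns inside $\gen(A)$ (and for a power $A\to B^k$ also inside every prefix $\gen(B^i)$). A query recurses down $\Tr$; a node whose value is disjoint from $T[i\dd j]$ returns $0$, a node whose value is contained in $T[i\dd j]$ returns its precomputed count, so only the $\cO(\log n)$ nodes straddling an endpoint of $T[i\dd j]$ recurse further. The occurrences spanning a child boundary of such a node are counted with a single \textsc{Breakpoints-Anchor IDM} query --- a 2D rectangle-stabbing query over $\cO(d\log n)$ rectangles, answered in $\cO(\log n/\log\log n)$ time --- whose anchor is that boundary: the anchor is chosen by the recursion, not assigned to each occurrence. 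The key lemma is that any occurrence of $S$ contained in $\val(v)$ but in no single child of $v$ has its split length in the $\cO(\log n)$-size set $L(S)$ derived from the popped sequence; this local-consistency property handles periodic stretches automatically via power symbols, with no short/long or periodic case split and no patterns handled by a direct suffix-tree pass, contrary to your suggestion. The total is $\cO(\log n)$ non-trivial recursive calls at $\cO(\log n/\log\log n)$ each; your bookkeeping of ``$\cO(\log n)$ range-counting queries per scale, over $\cO(\log n)$ scales'' would in fact overcount by a $\log n$ factor, and the $\cO(n\log n/\log\log n)$ preprocessing term comes from the $\cO(n)$ \textsc{Breakpoints-Anchor IDM} queries used to fill in the per-symbol counts, not from building the parsing or weighted-ancestor structures.
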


\subsection{An auxiliary problem}\label{sec:auxprob}

By inter-position $i+1/2$ we refer to a location between positions $i$ and $i+1$ in $T$. We also refer to inter-positions $1/2$ and $n+1/2$.
We consider the following auxiliary problem, in which we are given a set of inter-positions  (\emph{breakpoints}) $B$ of $P$ and upon query we are to compute all fragments of $T[i \dd j]$ that align a specific inter-position (\emph{anchor}) $\beta$ of the text with some inter-position in $B$.

\defproblem{\textsc{Breakpoints-Anchor IPM}}{%
  A length-$n$ text $T$, its length-$m$ substring $P$, and a set $B$ of inter-positions (breakpoints) of $P$.
}{%
  $\C_{\beta}(i,j)$: the number of fragments $T[r \dd r+m-1]$ of $T[i \dd j]$ 
  that match $P$ such that $\beta-r+1 \in B$ ($\beta$ is an anchor).
}

In the 2D orthogonal range counting problem, one is to preprocess an $n \times n$ grid with $\cO(n)$ marked points so that upon query $[x_1,y_1]\times [x_2,y_2]$, the number of points in this rectangle can be computed efficiently. In the (dual) 2D range stabbing counting problem, one is to preprocess the grid with $\cO(n)$ rectangles so that upon query $(x,y)$ the number of (stabbed) rectangles that contain $(x,y)$ can be retrieved efficiently.
The counting version of range stabbing queries in 2D reduces to two-sided range counting queries in 2D as follows (cf.~\cite{DBLP:journals/siamcomp/Patrascu11}).
For each rectangle $[x_1,y_1]\times [x_2,y_2]$ in grid $G$, we add points $(x_1,y_1)$ and $(x_2+1,y_2+1)$ with weight $1$ and points $(x_1,y_2+1)$ and $(x_2,y_1+1)$ with weight $-1$ in a grid $G'$.
Then the number of rectangles stabbed by point $(a,b)$ in $G$ is equal to the sum of weights of points in $(-\infty,a] \times (-\infty,b]$ in $G'$.
We will use the following result in our solution to \textsc{Breakpoints-Anchor IDM} (Lemma~\ref{lem:aux_break1}).

\begin{theorem}[\cite{DBLP:journals/tcs/MunroNV16}]\label{thm:range_count}
Range counting queries for $n$ points in 2D (rank space) can be answered in time $\cO(\log n / \log \log n)$ with a data structure of size $\cO(n)$ that can be constructed in time $\cO(n \sqrt{\log n})$.
\end{theorem}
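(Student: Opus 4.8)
The statement is quoted from~\cite{DBLP:journals/tcs/MunroNV16}, so the plan is to re-derive it from standard succinct-structure techniques (a high-arity wavelet tree combined with the method of the Four Russians). First I would reduce general rectangle counting to \emph{two-sided dominance counting}: writing $D(a,b)=|\{(x,y)\in S: x\le a,\ y\le b\}|$ for the input point set $S$ (extended with a sentinel row and column $0$), inclusion--exclusion gives that the answer to the query $[x_1,x_2]\times[y_1,y_2]$ equals $D(x_2,y_2)-D(x_1-1,y_2)-D(x_2,y_1-1)+D(x_1-1,y_1-1)$, so four dominance queries suffice and it is enough to build a structure answering $D$ in $\cO(\log n/\log\log n)$ time within the claimed space and preprocessing bounds. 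Since we work in rank space I may assume the $n$ points have distinct $x$-coordinates forming $[1\dd n]$ and distinct $y$-coordinates forming $[1\dd n]$ (otherwise an extra $\cO(n\log n)$ sorting step would dominate, which is exactly why the statement is phrased for rank space).

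The structure is a balanced $\rho$-ary wavelet tree $\mathcal{W}$ over the $x$-universe $[1\dd n]$ with branching factor $\rho=\lceil\sqrt{\log n}\,\rceil$, hence of depth $\cO(\log_\rho n)=\cO(\log n/\log\log n)$. With an internal node $v$ covering an $x$-interval $X_v$ I associate the sequence $S_v\in[1\dd\rho]^{m_v}$ listing, for the $m_v$ points of $S$ with $x$-coordinate in $X_v$ taken in increasing order of $y$-coordinate, the index of the child subtree of $v$ that contains the point. Each $S_v$ carries a succinct index supporting, in $\cO(1)$ time, the primitive $\mathrm{rnk}_{<c}(p)=|\{q\le p: S_v[q]<c\}|$ (whence also $\mathrm{rnk}_{=c}(p)=\mathrm{rnk}_{<c+1}(p)-\mathrm{rnk}_{<c}(p)$): this is the usual two-level superblock/block decomposition storing at each boundary the $\rho$ cumulative symbol-prefix counts, with block length $\Theta(\log n/\log\rho)$ so that a block fits in a machine word and in-block queries are resolved by one universal lookup table of size $n^{o(1)}$, in the spirit of the rank/select-on-bitvector constructions of~\cite{DBLP:conf/focs/Jacobson89,rsthesis}. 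A query $D(a,b)$ then descends from the root maintaining $p_v=|\{(x,y)\in S: x\in X_v,\ y\le b\}|$, which equals $b$ at the root: at $v$ let $c^{*}$ be the index of the child whose interval contains $a$ (an $\cO(1)$ arithmetic operation on the base-$\rho$ digits of $a$), add $\mathrm{rnk}_{<c^{*}}(p_v)$ to the running total, set the child's value to $\mathrm{rnk}_{=c^{*}}(p_v)$, and recurse, stopping at the leaf for $x=a$. This is $\cO(1)$ per level, hence $\cO(\log n/\log\log n)$ per query.

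For the space bound, the sequences $S_v$ at any fixed level together hold exactly $n$ symbols of $\log\rho$ bits each, so all of them over all $\cO(\log_\rho n)$ levels occupy $\cO(n\log n)$ bits, i.e.\ $\cO(n)$ machine words; choosing superblocks of $\Theta(\log^2 n)$ symbols and blocks as above makes the cumulative-count arrays of all indices sum to $o(n\log n)$ bits, and a single shared Four-Russians table of size $n^{o(1)}$ serves every node. Hence $\mathcal{W}$ uses $\cO(n)$ words in total.

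The crux is the $\cO(n\sqrt{\log n})$ construction time, since building $\mathcal{W}$ one level at a time already costs $\Theta(n\cdot\mathrm{depth})=\Theta(n\log n/\log\log n)$. I would instead sort the points by $y$-coordinate in $\cO(n)$ time (counting sort, rank space) to obtain the root's point list, and then build $\mathcal{W}$ in $\Theta(\sqrt{\log n})$ \emph{batches}, each handling $t=\Theta(\sqrt{\log n}/\log\log n)$ consecutive levels at once, so that a batch is governed by a $(t\log\rho)=\Theta(\sqrt{\log n})$-bit key and the number of descendant blocks per top node of the batch is $\rho^{t}=2^{\Theta(\sqrt{\log n})}=n^{o(1)}$. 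Keeping the point array grouped by node between batches, one batch performs: (i) routing --- a constant number of stable counting-sort passes (by batch-key, then by node identifier) over universes of size $n^{o(1)}$ and $\cO(n)$, which refines the grouping to the next batch's nodes while keeping each node's points in $y$-order, in $\cO(n)$ time; and (ii) emission of the batch's $t$ per-level sequences --- since a point's level-$\ell$ symbol is just the $\ell$-th base-$\rho$ digit of its $x$-coordinate, we scan the points in groups of $g=\Theta(\sqrt{\log n})$, pack each group's keys into $\cO(1)$ words, and use a precomputed table of size $n^{o(1)}$ mapping (packed keys, level) to the word of $g$ symbols to append to that level's stream, for $\cO(nt/g)=\cO(n/\log\log n)$ work in the batch. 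Summing over the $\Theta(\sqrt{\log n})$ batches gives $\cO(n\sqrt{\log n})$; the succinct indices of all $S_v$ are finally built by one packed linear scan of the completed sequences in $\cO(n)$ time plus $n^{o(1)}$ to tabulate. The main obstacle, where all the bit-packing care concentrates, is precisely step (ii): emitting the $\Theta(n\log n/\log\log n)$ symbols of sequence content in $\cO(n\sqrt{\log n})$ total time rather than the naive $\Theta(n)$ per level; everything else is routine bookkeeping with standard rank/select and Four-Russians machinery.
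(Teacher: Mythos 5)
First, a point of comparison: the paper does not prove this statement at all---Theorem~\ref{thm:range_count} is imported as a black box from~\cite{DBLP:journals/tcs/MunroNV16}---so there is no internal proof to measure you against, and your write-up has to stand as a self-contained re-derivation. Its overall architecture is the right one and matches how the cited bound is obtained in the literature: reduce rectangle counting to four dominance counts, build a $\rho$-ary wavelet tree with $\rho=\Theta(\sqrt{\log n})$ (so $\cO(\log n/\log\log n)$ depth), support generalized rank in $\cO(1)$ time by block decomposition plus Four-Russians tables, and observe that all level sequences together take $\cO(n\log n)$ bits, i.e.\ $\cO(n)$ words. The query description is correct up to a minor omission (at the final leaf you must still add the surviving value $p$, which counts the points with $x$ equal to $a$ and $y\le b$).

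The genuine gap is in the construction, precisely at the step you yourself identify as the crux, step~(ii). The sequence $S_v$ of a node $v$ at level $\lambda$ must list the symbols of the points of $X_v$ \emph{in $y$-order}; equivalently, the concatenated level-$\lambda$ content requires the points sorted by (first $\lambda$ base-$\rho$ digits of $x$, then $y$). These orders are different for every level inside a batch. Your routing produces only the top-of-batch and bottom-of-batch orders, and your emission scans the points in one fixed order while appending, for each of the $t$ levels, the appropriate digit. For any level strictly inside the batch this writes the correct multiset of symbols per node but in the wrong order (within an intermediate node the bottom-of-batch order interleaves by the deeper digits before $y$), and rank queries at position $p_v$ are order-sensitive, so the resulting structure answers queries incorrectly. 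The obvious fix---performing a stable counting-sort refinement per level---costs $\Theta(n)$ per level, i.e.\ $\Theta(n\log n/\log\log n)$ overall, blowing the $\cO(n\sqrt{\log n})$ budget. What is actually needed is to carry out these intermediate stable refinements \emph{in packed form}, on $\Theta(\sqrt{\log n})$-bit batch keys with $\Theta(\sqrt{\log n})$ points per word, including the fiddly handling of node-group boundaries; this packed intra-batch sorting is the technical heart of the $\cO(n\log\sigma/\sqrt{\log n})$ wavelet-tree constructions (e.g.~\cite{DBLP:journals/tcs/MunroNV16,DBLP:conf/soda/BabenkoGKS15}), and it is asserted rather than established in your sketch. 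Until that step is supplied, the claimed $\cO(n\sqrt{\log n})$ preprocessing and the correctness of the stored sequences cannot both hold.
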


\paragraph{\bf Data structure.}
Let $W_1=\{P[\ceil{b} \dd m]: b \in B\}$ and consider the set $W_2$ obtained by adding $U\$$ and $U\#$ for each element $U$ of $W_1$ to an initially empty set, where $\$$ is a letter smaller (resp.~$\#$ is larger) than all the letters in $\Sigma$.
Let $W$ be the compact trie for the set of strings $W_2$.
For each internal node $v$ of $W$ that does not have an outgoing edge with label $\$$, we add such a (leftmost) edge with a leaf attached to its endpoint.
$W$ can be constructed in $\cO(|B|)$ time after an $\cO(n)$-time preprocessing of $T$, allowing for constant-time longest common prefix queries; cf.~\cite{AlgorithmsOnStrings}.
We also build the $W_1$-modified suffix tree of $T$ and preprocess it for weighted ancestor queries. We keep two-sided pointers between nodes of $W$ and of the $W_1$-modified suffix tree of $T$ that have the same path-label.
Similarly, let $W^R$ be the compact trie for set $Z_2$ consisting of elements $U\$$ and $U\#$ for each $U \in Z_1=\{(P[1 \dd \floor{b}])^R: b \in B\}$. We preprocess $W^R$ analogously.
Each of the tries has at most $k=\cO(|B|)$ leaves.

Let us now consider a 2D grid of size $k \times k$, whose $x$-coordinates (resp.~$y$-coordinates) correspond to the leaves of $W$ (resp.~$W^R$).
For each $b \in B$ we do the following.
Let $x_1$ and $x_2$ be the leaves with path-label $P[\ceil{b} \dd m]\$$ and $P[\ceil{b} \dd m]\#$ in $W$, respectively. Similarly, let $y_1$ and $y_2$ be the leaves with path-label $(P[1 \dd \floor{b}])^R\$$ and $(P[1 \dd \floor{b}])^R\#$ in $W^R$, respectively.
We add the rectangle $R_b=[x_1,y_1] \times [x_2,y_2]$ in the grid. An illustration is provided in Fig.~\ref{fig:rectangles}.
We then preprocess the grid for the counting version of 2D range stabbing queries, employing Theorem~\ref{thm:range_count}.

\begin{figure}[t!]
  \centering
  \scalebox{0.95}{\begin{tikzpicture}[scale=0.75, transform shape]

    \tikzstyle{v}=[draw, circle, minimum width=1pt, inner sep=1pt, fill=black!80!white];
    \tikzstyle{label}=[midway,inner sep=1pt,above,sloped];
    
    \begin{scope}[yshift=14.5cm]
    \node at (6, 4.5) {$P=abaabb$};

    \foreach \i in {0,...,15} {
        \node[v] (v\i) at (\i,0) {};
    }
    \node[v] (i1) at (4, 3) {};
    \node[v] (i2) at (11.5, 3) {};
    \node[v] (i3) at (2.5, 2) {};
    \node[v] (i4) at (6.5, 2) {};
    \node[v] (i5) at (10.5, 1.5) {};
    \node[v] (i6) at (12.5, 1.5) {};
    \node[v] (i7) at (5.5, 1) {};
    \node[v] (i8) at (7.5, 1) {};
    \node[v] (r0) at (8, 4) {};

    \draw (r0) -- (i1) node[label] {$a$};
    \draw (r0) -- (i2) node[label] {$b$};
    \draw (i1) -- (i3) node[label] {$abb$};
    \draw (i1) -- (i4) node[label] {$b$};
    \draw (i4) -- (i7) node[label] {$aabb$};
    \draw (i4) -- (i8) node[label] {$b$};
    \draw (i2) -- (i5) node[label] {$aabb$};
    \draw (i2) -- (i6) node[label] {$b$};

    \draw (i3) -- (v2) node[label] {$\$$};
    \draw (i3) -- (v3) node[label] {$\#$};
    \draw (i7) -- (v5) node[label] {$\$$};
    \draw (i7) -- (v6) node[label] {$\#$};
    \draw (i8) -- (v7) node[label] {$\$$};
    \draw (i8) -- (v8) node[label] {$\#$};
    \draw (i5) -- (v10) node[label] {$\$$};
    \draw (i5) -- (v11) node[label] {$\#$};
    \draw (i6) -- (v12) node[label] {$\$$};
    \draw (i6) -- (v13) node[label] {$\#$};

    \draw (r0) to [out=180,in=90] node[sloped,midway,above] {$\$$} (v0) ;
    \draw (r0) to [out=0,in=90] node[sloped,midway,above] {$\#$} (v15) ;

    \draw (i1) to [out=180,in=90] node[sloped,midway,above] {$\$$} (v1) ;

    \draw (i2) to [out=180,in=90] node[sloped,midway,above] {$\$$} (v9) ;
    \draw (i2) to [out=0,in=90] node[sloped,midway,above] {$\#$} (v14) ;

    \draw (i4) to [out=180,in=90] node[sloped,midway,above] {$\$$} (v4) ;
    \end{scope}
    
    \begin{scope}[xshift=-0.5cm, yshift=0cm, rotate=90, transform shape]
    \node at (6, 4.5) {$P^{R}=bbaaba$};
    \foreach \i in {0,...,14} {
        \node[v] (u\i) at (\i,0) {};
    }
    
    \node[v] (j1) at (3.5, 2.5) {};
    \node[v] (j2) at (11, 3) {};
    \node[v] (j3) at (2.5, 1) {};
    \node[v] (j4) at (4.5, 1) {};
    \node[v] (j5) at (9.5, 2) {};
    \node[v] (j6) at (12.5, 1.5) {};
    \node[v] (j7) at (9.5, 1) {};

    \node[v] (r1) at (6, 4) {};

    \draw (r1) -- (j1) node[label,rotate=180] {$a$};
    \draw (r1) -- (j2) node[label] {$b$};
    \draw (j1) -- (j3) node[label,rotate=180] {$aba$};
    \draw (j1) -- (j4) node[label] {$ba$};
    \draw (j2) -- (j5) node[label,rotate=180] {$a$};
    \draw (j2) -- (j6) node[label] {$baaba$};
    \draw (j5) -- (j7) node[label,rotate=180] {$aba$};

    \draw (j3) -- (u2) node[label] {$\$$};
    \draw (j3) -- (u3) node[label] {$\#$};
    \draw (j4) -- (u4) node[label] {$\$$};
    \draw (j4) -- (u5) node[label] {$\#$};
    \draw (j7) -- (u9) node[label] {$\$$};
    \draw (j7) -- (u10) node[label] {$\#$};
    \draw (j6) -- (u12) node[label] {$\$$};
    \draw (j6) -- (u13) node[label] {$\#$};

    \draw (r1) to [out=180,in=90] node[sloped,midway,above,rotate=180] {$\$$} (u0) ;
    \draw (r1) to [out=0,in=90] node[sloped,midway,above] {$\#$} (u14) ;

    \draw (j1) to [out=180,in=90] node[sloped,midway,above,rotate=180] {$\$$} (u1) ;
    \draw (j1) to [out=0,in=90] node[sloped,midway,above] {$\#$} (u6) ;
    \draw (j5) to [out=180,in=90] node[sloped,midway,above,rotate=180] {$\$$} (u8) ;
    \draw (j5) to [out=0,in=90] node[sloped,midway,above] {$\#$} (u11) ;
    \draw (j2) to [out=180,in=90] node[sloped,midway,above,rotate=180] {$\$$} (u7) ;

    \end{scope}
  
    \begin{scope}
    \draw [thick] (v5 |- u0) node[below,xshift=0.5cm]  {$\frac{1}{2}$} rectangle (v6 |- u14); 
    
    \draw[thick] (v10 |- u1) node[below,xshift=0.5cm]  {$1\frac{1}{2}$} rectangle (v11 |- u6);  
    \draw[thick] (v2 |- u8) node[below,xshift=0.5cm]  {$2\frac{1}{2}$} rectangle (v3 |- u11);  
    \draw[thick] (v7 |- u4) node[below,xshift=0.5cm]  {$3\frac{1}{2}$} rectangle (v8 |- u5);  
    \draw[thick] (v12 |- u2) node[below,xshift=0.5cm]  {$4\frac{1}{2}$} rectangle (v13 |- u3);  
    \draw[thick] (v9 |- u9) rectangle (v14 |- u10) node[right,yshift=-0.5cm]  {$5\frac{1}{2}$};  
    \draw[thick] (v0 |- u12) rectangle (v15 |- u13) node[right,yshift=-0.5cm]  {$6\frac{1}{2}$};  
    \end{scope}

  \end{tikzpicture}}
  \caption{
    Example of the construction of rectangles in the proof of \cref{lem:aux_break1} 
    for $P=\texttt{abaabb}$ and breakpoints $i+1/2$ for $i=0,1,2,3,4,5,6$.
    Each rectangle is annotated with its breakpoint.
  }
  \label{fig:rectangles}
\end{figure}
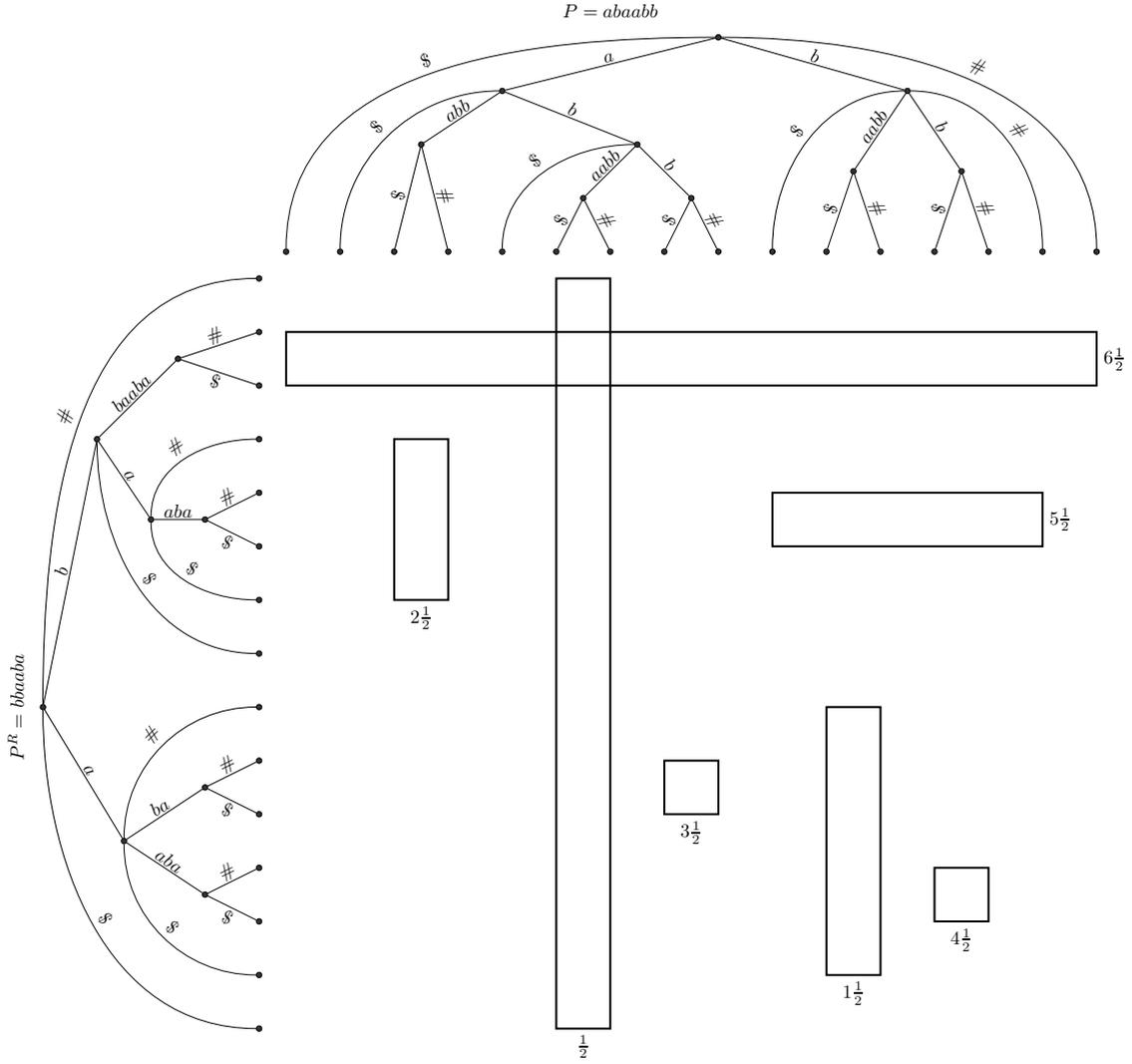

\paragraph{\bf Query.}
Let the longest prefix of $T[\ceil{\beta} \dd j]$ that is a prefix of an element of $W_1$ be $U$ and its locus in $W$ be $u$. This can be computed in $\cO(\log\log n)$ time using a weighted ancestor query in the $W_1$-modified suffix tree of $T$ and following the pointer to $W$.
If $u$ is an explicit node, we follow the edge with label $\$$, while if it is implicit along edge $(p,q)$, we follow the edge with label $\$$ from $p$. In either case, we reach a leaf $u'$. We do the symmetric procedure with $(T[i \dd \floor{\beta}])^R$ in $W^R$ and obtain a leaf $v'$.

\begin{observation}
The number of fragments $T[r \dd t]=P$ with $r,t \in [i \dd j]$ and $\beta-r+1 \in B$ is equal to the number of rectangles stabbed by the point  of the grid defined by $u'$ and $v'$.
\end{observation}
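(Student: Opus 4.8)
The plan is to reduce the observation to a per‑breakpoint equivalence and then read that equivalence off the tries $W$ and $W^R$. Write $\beta=c+1/2$, and for a breakpoint $b\in B$ let $k=\floor{b}$, so that $\ceil{b}=k+1$; if $k=0$ or $k=m$ then one of $P[\ceil{b}\dd m]$, $P[1\dd\floor{b}]$ is the empty string, a degenerate case the argument below accommodates without change. Put $r_b:=c-k+1$. I would first unfold the definitions to see that a fragment $T[r\dd t]=P$ with $r,t\in[i\dd j]$ and $\beta-r+1=b$ exists if and only if $i\le r_b$, $r_b+m-1\le j$ and $T[r_b\dd r_b+m-1]=P$, and that, splitting such a hypothetical occurrence at the breakpoint $b$ (which is aligned with the anchor $\beta$), this is in turn equivalent to the conjunction of ``$P[\ceil{b}\dd m]$ is a prefix of $T[\ceil{\beta}\dd j]$'' and ``$P[1\dd\floor{b}]$ is a suffix of $T[i\dd\floor{\beta}]$''. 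Since the map $b\mapsto r_b$ is injective, the number of fragments counted in the statement equals the number of $b\in B$ for which both of these string conditions hold, so it suffices to prove that the point determined by $u'$ and $v'$ stabs $R_b$ exactly when both conditions hold.

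I would then treat the two coordinates separately. Recall that $R_b=[x_1,y_1]\times[x_2,y_2]$, with $x$‑extent $[x_1,x_2]$, where $x_1,x_2$ are the leaves of $W$ of path‑label $P[\ceil{b}\dd m]\$$ and $P[\ceil{b}\dd m]\#$. Since $P[\ceil{b}\dd m]\in W_1$, its locus $w_b$ in $W$ is an explicit node (it is branching, with outgoing edges labelled $\$$ and $\#$), and because $\$$ is smaller and $\#$ larger than every letter of $\Sigma$, the leaves of $W$ lying between $x_1$ and $x_2$ in left‑to‑right order are precisely the leaves in the subtree rooted at $w_b$, i.e.\ those whose path‑label has $P[\ceil{b}\dd m]$ as a prefix. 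Hence $u'$ lies in $[x_1,x_2]$ if and only if $P[\ceil{b}\dd m]$ is a prefix of $\mathcal{L}(u')$. Next I would relate $\mathcal{L}(u')$ to $U$, the longest prefix of $T[\ceil{\beta}\dd j]$ that is a prefix of an element of $W_1$: if the locus $u$ of $U$ is explicit then $\mathcal{L}(u')=U\$$, whereas if $u$ is implicit on an edge $(p,q)$ then $\mathcal{L}(u')=\mathcal{L}(p)\$$ with $\mathcal{L}(p)$ a proper prefix of $U$. A short case analysis, using that $w_b$ is explicit and hence is never interior to an edge of $W$, shows that $P[\ceil{b}\dd m]$ is a prefix of $\mathcal{L}(u')$ if and only if it is a prefix of $U$, and by the maximality of $U$ the latter holds if and only if $P[\ceil{b}\dd m]$ is a prefix of $T[\ceil{\beta}\dd j]$. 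The mirror argument in $W^R$ shows that $v'$ lies in $[y_1,y_2]$ if and only if $(P[1\dd\floor{b}])^R$ is a prefix of $(T[i\dd\floor{\beta}])^R$, equivalently $P[1\dd\floor{b}]$ is a suffix of $T[i\dd\floor{\beta}]$. Combining the two equivalences gives that $(u',v')\in R_b$ exactly when both string conditions hold, and summing over $b\in B$ completes the proof.

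The step I expect to be the main obstacle is the case analysis relating $\mathcal{L}(u')$ to $U$ when the locus of $U$ is implicit: one has to verify that passing from $U$ to the strictly shorter string $\mathcal{L}(p)$ --- the label of the explicit endpoint from which the (possibly artificially inserted) edge labelled $\$$ descends --- neither gains nor loses $P[\ceil{b}\dd m]$ as a prefix. This is exactly where the fact that every element of $W_1$ has an explicit locus in $W$ enters: such a locus cannot sit strictly between $p$ and $q$, so if $P[\ceil{b}\dd m]$ were a prefix of $U$ but not of $\mathcal{L}(p)$, then $P[\ceil{b}\dd m]$ would be forced to equal $U$ with the locus of $U$ explicit, contradicting the case assumption. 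Everything else --- the injectivity of $b\mapsto r_b$ and the ``between the sentinels'' characterisation of the subtrees of $W$ and $W^R$ --- is routine once the definitions have been unwound.
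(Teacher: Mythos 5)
Your proposal is correct and is an elaboration of the paper's (one-sentence) proof, which simply asserts the per-breakpoint equivalence between stabbing $R_b$ and the conjunction of the prefix and suffix conditions; you've filled in the reduction to per-breakpoint counting via injectivity of $b\mapsto r_b$, the ``between the sentinels'' characterisation of subtree leaf ranges, and the case analysis at implicit loci, all of which is the right route. One small imprecision: in the implicit case, if $P[\ceil{b}\dd m]$ were a prefix of $U$ but not of $\mathcal{L}(p)$, the length bounds $|\mathcal{L}(p)|<|P[\ceil{b}\dd m]|\le|U|<|\mathcal{L}(q)|$ already force its explicit locus to lie strictly inside edge $(p,q)$, which is directly impossible; there is no need to (and one cannot in general) conclude that $P[\ceil{b}\dd m]=U$.
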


The observation holds because this point is inside rectangle $R_b$ for $b \in B$ if and only if $P[\ceil{b} \dd m]$ is a prefix of $T[\ceil{\beta} \dd j]$ and $P[1 \dd \floor{b}]$ is a suffix of $T[i \dd \floor{\beta}]$. This concludes the proof of the following result.

\begin{lemma}\label{lem:aux_break1}
  \textsc{Breakpoints-Anchor IPM} queries can be answered in time $\cO(\log n /\log\log n)$ with a data structure of size $\cO(n+|B|)$ that can be constructed in time $\cO(n+|B|\sqrt{\log |B|})$.
  \end{lemma}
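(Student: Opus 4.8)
The plan is to verify that the data structure and query procedure described above are correct and achieve the claimed bounds; the statement is essentially a packaging of the construction, so the work is mostly in checking correctness of the reduction and summing up complexities. I would organize the proof around the key observation already isolated: a fragment $T[r \dd t] = P$ with $r,t \in [i \dd j]$ and $\beta - r + 1 \in B$ corresponds bijectively to a rectangle $R_b$ stabbed by the grid point determined by $u'$ and $v'$.

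First I would establish the correctness of the reduction. Fix a breakpoint $b \in B$. An occurrence $T[r \dd r+m-1] = P$ lying inside $T[i \dd j]$ with $\beta - r + 1 = b$ splits at the anchor $\beta$ into a prefix $P[1 \dd \floor{b}] = T[r \dd \floor{\beta}]$ and a suffix $P[\ceil{b} \dd m] = T[\ceil{\beta} \dd r+m-1]$ (note $\ceil{b} = \floor{b}+1$ since $b$ is an inter-position). Conversely, if $P[1 \dd \floor{b}]$ is a suffix of $T[i \dd \floor{\beta}]$ and $P[\ceil{b} \dd m]$ is a prefix of $T[\ceil{\beta} \dd j]$, then gluing them at $\beta$ yields an occurrence of $P$ inside $T[i \dd j]$ anchored at $\beta$ with breakpoint $b$, and this occurrence is unique for each $b$. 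So it remains to argue that ``$P[\ceil{b} \dd m]$ is a prefix of $T[\ceil{\beta} \dd j]$'' is equivalent to the point lying in the $x$-interval $[x_1, x_2]$ of $R_b$, and symmetrically for the $y$-coordinate. Here I would use the sentinel trick: $U = P[\ceil{b} \dd m]$ is a prefix of $T[\ceil{\beta} \dd j]$ iff $U\$ \preceq T[\ceil{\beta}\dd j] \cdots \preceq U\#$ in the lexicographic order of leaves of $W$ — because $U\$$ and $U\#$ are the leftmost and rightmost leaves in the subtree of $W$ rooted at the locus of $U$, and the leaf $u'$ obtained in the query (descending along the $\$$-edge from the locus of the longest prefix $U$ of $T[\ceil{\beta}\dd j]$ that is a prefix of some word in $W_1$) is exactly the predecessor position of $T[\ceil{\beta}\dd j]$ among the leaves. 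The added leftmost $\$$-edges at internal nodes without one guarantee that descending along $\$$ from an implicit locus lands on the correct leaf. The symmetric statement holds for $v'$ in $W^R$ with the reversed prefix $(P[1\dd\floor{b}])^R$ and $(T[i\dd\floor{\beta}])^R$. Combining, the point $(u',v')$ stabs $R_b$ iff both conditions hold, which by the previous paragraph is iff there is an occurrence with breakpoint $b$; summing over $b \in B$ and using that distinct $b$ give distinct occurrences (the breakpoint determines $r$ from $\beta$) gives $\C_\beta(i,j)$ as the number of stabbed rectangles.

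Next I would collect the complexity bounds. The tries $W$ and $W^R$ each have $\cO(|B|)$ leaves and $\cO(|B|)$ nodes; they can be built in $\cO(|B|)$ time after $\cO(n)$-time preprocessing of $T$ for constant-time longest-common-prefix queries (the words in $W_1, Z_1$ are substrings of $P$, hence of $T$). The $W_1$-modified suffix tree of $T$ has size $\cO(n + |B|)$, is built in $\cO(n+|B|)$ time by Lemma~\ref{lem:dmodst}, and admits $\cO(\log\log n)$-time weighted ancestor queries. The two-sided pointers are set up in $\cO(|B|)$ time during construction. The grid has $\cO(|B|)$ rectangles; the stabbing-counting structure reduces to two-sided range counting on $\cO(|B|)$ weighted points as described (four points per rectangle), which by Theorem~\ref{thm:range_count} occupies $\cO(|B|)$ space and is built in $\cO(|B|\sqrt{\log |B|})$ time. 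Hence total space is $\cO(n + |B|)$ and total construction time is $\cO(n + |B|\sqrt{\log|B|})$. For the query: one weighted ancestor query in each modified suffix tree and one trie descent step cost $\cO(\log\log n)$ each, and one range-counting query costs $\cO(\log|B|/\log\log|B|) = \cO(\log n/\log\log n)$; since $\log\log n = \cO(\log n/\log\log n)$, the total is $\cO(\log n/\log\log n)$.

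The only subtle point — and the step I would write most carefully — is the reduction of ``$U$ is a prefix of $T[\ceil{\beta}\dd j]$'' to an $x$-interval containment, specifically the claim that the leaf $u'$ reached by the query procedure is always the correct leaf of $W$ whose position among all leaves falls in $[x_1,x_2]$ exactly when the prefix relation holds. This requires the added leftmost $\$$-leaves at every internal node and a short case analysis on whether the locus $u$ of the longest matching prefix $U$ is explicit or implicit; in the implicit case one descends the $\$$-edge from the lower endpoint's... (correction: from the upper endpoint $p$ of the edge $(p,q)$), and one must check that this still lands between $x_1$ and $x_2$ for the correct $b$'s and outside for the others. Everything else is bookkeeping.
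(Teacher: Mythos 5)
Your proposal follows the same construction as the paper (tries $W$, $W^R$ built from $W_2$, $Z_2$ with sentinel leaves, rectangles $R_b$, and a 2D range-stabbing-counting structure from Theorem~\ref{thm:range_count}), and you expand the paper's one-sentence justification of the key observation into the correct argument, namely that the point $(u',v')$ lies in $R_b$ exactly when $P[\ceil{b}\dd m]$ is a prefix of $T[\ceil{\beta}\dd j]$ and $P[1\dd\floor{b}]$ is a suffix of $T[i\dd\floor{\beta}]$, with the subtree-containment characterization via the $\$$/$\#$ sentinels. Your complexity accounting matches the paper's, and you correctly flag the only delicate point (explicit vs.\ implicit locus and the added leftmost $\$$-leaves) even though you do not fully write out that short case analysis.
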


For the analogously defined problem \textsc{Breakpoints-Anchor IDM}, we obtain the following lemma by building trie $W$ for the union of the sets $W_2$ defined in the above proof for each pattern (similarly for $W^R$) and adding all rectangles to a single grid.

\begin{lemma}\label{lem:aux_break2}
\textsc{Breakpoints-Anchor IDM} queries can be answered in time $\cO(\log n /\log\log n)$ with a data structure of size $\cO(n+ \sum_{P \in \D} |B_P|)$. The data structure can be constructed in time $\cO(n+ \sqrt{\log n} \sum_{P \in \D} |B_P|)$.
\end{lemma}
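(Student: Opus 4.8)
The plan is to reuse the construction in the proof of \cref{lem:aux_break1} almost verbatim, the only change being that a \emph{single} trie $W$, a single trie $W^R$, a single pair of modified suffix trees, and a single grid are shared by all patterns of $\D$. Write $m_P$ for the length of $P\in\D$; since each $P$ is a fragment of $T$, every string $P[\ceil{b}\dd m_P]$ with $b\in B_P$ is a fragment of $T$ and every string $(P[1\dd\floor{b}])^R$ is a fragment of $T^R$. First I would set $W_1=\bigcup_{P\in\D}\{P[\ceil{b}\dd m_P]:b\in B_P\}$ and, symmetrically, $Z_1=\bigcup_{P\in\D}\{(P[1\dd\floor{b}])^R:b\in B_P\}$, pad each string with $\$$ and with $\#$ to obtain $W_2$ and $Z_2$, and build the compact tries $W$ and $W^R$ for these sets together with the extra leftmost $\$$-edges, exactly as before. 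Because all strings involved are fragments of $T$ or of $T^R$, this takes $\cO(\sum_{P\in\D}|B_P|)$ time after $\cO(n)$-time preprocessing of $T$ and $T^R$ for constant-time longest-common-prefix queries, and each trie has $k=\cO(\sum_{P\in\D}|B_P|)$ leaves. I would then build the $W_1$-modified suffix tree of $T$ and the $Z_1$-modified suffix tree of $T^R$ (of size and construction time $\cO(n+\sum_{P\in\D}|B_P|)$ by \cref{lem:dmodst}), preprocess them for weighted ancestor queries, and keep two-sided pointers between equal-path-label nodes of $W$ and of the $W_1$-modified suffix tree (and likewise on the reversed side).

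Next I would set up one $k\times k$ grid whose $x$-axis (resp.\ $y$-axis) indexes the leaves of $W$ (resp.\ $W^R$), and for every $P\in\D$ and every $b\in B_P$ add the rectangle $R_{P,b}=[x_1,y_1]\times[x_2,y_2]$, where $x_1,x_2$ are the $\$$- and $\#$-leaves below the locus of $P[\ceil{b}\dd m_P]$ in $W$ and $y_1,y_2$ are those below the locus of $(P[1\dd\floor{b}])^R$ in $W^R$; note that each such locus is explicit and carries both a $\$$-child and a $\#$-child precisely because these two leaves were inserted for the pair $(P,b)$. Preprocessing these $k$ rectangles for 2D range-stabbing counting via the reduction to 2D range counting and \cref{thm:range_count} costs $\cO(k)$ additional space and $\cO(k\sqrt{\log k})$ additional time; since $\sum_{P\in\D}|B_P|\le (n+1)d=n^{\cO(1)}$ we have $\log k=\cO(\log n)$, so the totals become $\cO(n+\sum_{P\in\D}|B_P|)$ space and $\cO(n+\sqrt{\log n}\sum_{P\in\D}|B_P|)$ construction time. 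A query for anchor $\beta$ and range $[i\dd j]$ is answered exactly as in \cref{lem:aux_break1}: a weighted ancestor query in the $W_1$-modified suffix tree followed by the pointer into $W$ locates the locus of the longest prefix of $T[\ceil{\beta}\dd j]$ that is a prefix of an element of $W_1$, from which a single $\$$-edge step yields a leaf $u'$ of $W$; the symmetric computation on $(T[i\dd\floor{\beta}])^R$ yields a leaf $v'$ of $W^R$; and the answer is the number of rectangles stabbed by the grid point determined by $u'$ and $v'$. This costs $\cO(\log\log n)$ for the weighted ancestor queries, $\cO(1)$ for the trie descents, and $\cO(\log k/\log\log k)=\cO(\log n/\log\log n)$ for the stabbing count, which dominates.

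For correctness I would invoke the single-pattern argument of \cref{lem:aux_break1} separately for each pattern: the point $(u',v')$ lies in $R_{P,b}$ if and only if $P[\ceil{b}\dd m_P]$ is a prefix of $T[\ceil{\beta}\dd j]$ and $P[1\dd\floor{b}]$ is a suffix of $T[i\dd\floor{\beta}]$, equivalently if and only if $T[i\dd j]$ contains a fragment equal to $P$ starting at a position $r$ with $\beta-r+1=b$; summing over all rectangles therefore gives exactly $\C_{\beta}(i,j)$, the total number over all $P\in\D$ of fragments $T[r\dd r+m_P-1]=P$ with $r,r+m_P-1\in[i\dd j]$ and $\beta-r+1\in B_P$. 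I do not expect a genuine obstacle beyond \cref{lem:aux_break1}; the only points that need care are that merging strings of different patterns into the one trie $W$ keeps the locus of every $P[\ceil{b}\dd m_P]$ explicit and equipped with its two sentinel children (true by construction), that placing all rectangles on a single grid creates no spurious stabbings (true because each rectangle encodes only the prefix/suffix pair of its own pattern, while the query point records the maximal matched prefix and suffix and is thus simultaneously compatible with all of them), and the routine bookkeeping that $\log\big(\sum_{P\in\D}|B_P|\big)=\cO(\log n)$ so that \cref{thm:range_count} yields the claimed bounds.
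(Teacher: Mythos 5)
Your proposal is correct and matches the paper's intended construction: the paper's own proof of this lemma is a one-sentence remark, namely to build the tries $W$ and $W^R$ over the \emph{union} of the $W_2$/$Z_2$ sets across all patterns and to place all rectangles on a single grid, which is exactly what you did, fleshing out the bookkeeping (merged tries, merged modified suffix trees, per-$(P,b)$ rectangles, the $\log k = \cO(\log n)$ remark to align \cref{thm:range_count}'s bounds with the claimed $\sqrt{\log n}$ and $\log n/\log\log n$ factors).
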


\paragraph{\bf A warm-up solution for $\C(i,j)$.} \cref{lem:aux_break2} can be applied naively to answer $\C(i,j)$ queries as follows.
Let us set $B_P=\{p+1/2 : p \in [1\dd |P|-1] \}$ for each pattern $P \in \D$ and construct the data structure of \cref{lem:aux_break2}.
We build a balanced binary tree $\BTr$ on top of the text and for each node $v$ in $\BTr$ define $\val(v)$ to be the fragment consisting of the characters corresponding to the leaves in the subtree of $v$.
Note that if $v$ is a leaf, then $|\val(v)|=1$; otherwise, $\val(v)=\val(u_\ell)\val(u_r)$, where $u_\ell$ and $u_r$ are the children of $v$.
For each node $v$ in $\BTr$, we precompute and store the count for $\val(v)$.
If $v$ is a leaf, this count can be determined easily. Otherwise, each occurrence is contained in $\val(u_\ell)$, is contained in $\val(u_r)$,
or spans both $\val(u_\ell)$ and $\val(u_r)$. Hence, we sum the answers for the children $u_\ell$ and $u_r$ of $v$
and add the result of a \textsc{Breakpoints-Anchor IDM} query in $\val(v)$ with the anchor between $\val(u_\ell)$ and $\val(u_r)$.

To answer a query concerning $T[i\dd j]$, we recursively count the occurrences in the intersection of $\val(v)$ with $T[i\dd j]$, starting from the root $r$ of $\BTr$ as it satisfies $\val(r)=T[1\dd n]$.
If the intersection is empty, the result is 0, and if $\val(v)$ is contained in $T[i\dd j]$, we can use the precomputed count.
Otherwise, we recurse on the children $u_\ell$ and $u_r$ of $v$ and sum the resulting counts. It remains to add the number of occurrences
spanning across both $\val(u_\ell)$ and $\val(u_r)$. This value is non-zero only if $T[i\dd j]$ spans both these fragments,
and it can be determined from a \textsc{Breakpoints-Anchor IDM} query in the intersection of $\val(v)$ and $T[i\dd j]$
with the anchor between $\val(u_\ell)$ and $\val(u_r)$.

The query-time is $\cO(\log^2 n /\log\log n)$ since non-trivial recursive calls are made only for nodes on the paths from the root $r$ to the leaves representing $T[i]$ and $T[j]$. Nevertheless, the space required for this ``solution'' can be $\Omega(nd)$, which is unacceptable.
Below, we refine this technique using a locally consistent parsing; our goal is to decrease the size of each set $B_P$ from $\Theta(|P|)$ to $\cO(\log n)$.

\subsection{Recompression}
A \emph{run-length straight line program} (\emph{RSLP}) is a context-free grammar which generates exactly one string
and contains two kinds of non-terminals: \emph{concatenations} with production of the form $A \to BC$ (for symbols $B,C$)
and \emph{powers} with production of the form $A \to B^k$ (for a symbol $B$ and an integer $k\ge 2$). 
Every symbol $A$ generates a unique string denoted $\gen(A)$.

Each symbol $A$ is also associated with its \emph{parse tree} $\Tr(A)$ consisting of a root labeled with $A$ to which zero or more subtrees are attached:
if $A$ is a terminal, there are no subtrees;
if $A\to BC$ is a concatenation symbol, then $\Tr(B)$ and $\Tr(C)$ are attached;
if $A\to B^k$ is a power symbol, then $k$ copies of $\Tr(B)$ are attached.
Note that if we traverse the leaves of $\Tr(A)$ from left to right, spelling out the corresponding non-terminals,
then we obtain $\gen(A)$.
The parse tree $\Tr$ of the whole RSLP generating $T$ is defined as $\Tr(S)$ for the starting symbol $S$.
We define the \emph{value} $\val(v)$ of a node $v$ in $\Tr$ to be the fragment $T[a\dd b]$ corresponding to the leaves
$T[a],\ldots, T[b]$ in the subtree of $v$. 
Note that $\val(v)$ is an occurrence of $\gen(A)$, where $A$ is the label of $v$.
A sequence of nodes in $\Tr$ is a \emph{chain} if their values are consecutive fragments in $T$.

The \emph{recompression} technique by Je\.{z}~\cite{DBLP:journals/talg/Jez15,DBLP:journals/jacm/Jez16}
consists in the construction of a particular RSLP generating the input text $T$.
The underlying parse tree $\Tr$ is of depth $\Oh(\log n)$ and it can be constructed in $\Oh(n)$ time.
As observed by I~\cite{DBLP:conf/cpm/I17}, this parse tree $\Tr$ is \emph{locally consistent}
in a certain sense. To formalize this property, he introduced the \emph{popped sequence} of every fragment $T[a\dd b]$,
which is a sequence of symbols labelling a certain chain of nodes whose values constitute $T[a\dd b]$.

\begin{theorem}[\cite{DBLP:conf/cpm/I17}]\label{thm:recomp}
  If two fragments are equal, then their popped sequences are equal.
  Moreover, each popped sequence consists of $\Oh(\log n)$ runs (maximal powers of a single symbol) and can be constructed in $\Oh(\log n)$ time.
  The nodes corresponding to symbols in a run share a single parent.
  Furthermore, the popped sequence consists of a single symbol only for fragments of length $1$.
\end{theorem}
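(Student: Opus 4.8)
Recall that recompression processes $T=T_0$ in rounds $\ell=1,\dots,L$, each turning the current string $T_{\ell-1}$ (over a symbol set $\Sigma_{\ell-1}$) into $T_\ell$ by a \emph{block step} followed by a \emph{pair step}: the block step replaces every maximal power $c^k$ with $k\ge 2$ by a canonical fresh symbol $A_{c,k}\to c^k$, and the pair step, for a globally fixed partition $\Sigma_{\ell-1}=\Sigma^L_{\ell-1}\sqcup\Sigma^R_{\ell-1}$, scans left to right and replaces every still-unmerged pair $xy$ with $x\in\Sigma^L_{\ell-1}$, $y\in\Sigma^R_{\ell-1}$ by a canonical fresh symbol $A_{xy}\to xy$. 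Because the new symbols are named canonically, equal substrings of $T_{\ell-1}$ become equal substrings of $T_\ell$; the string shrinks by a constant factor per round, so $L=\Oh(\log n)$, and $\Tr$ of $T$ is built in $\Oh(n)$ time. The \emph{popped sequence} $\mathsf{pop}(U)$ of a fragment $U=T[a\dd b]$ is obtained by running these rounds on $U$ alone while \emph{popping} at the two ends exactly the symbols whose fate in the whole $T_{\ell-1}$ cannot be read off from $U$: in the block step, move the leftmost maximal power and the rightmost maximal power of the current active interior (as runs) to the left resp.\ right popped part and block-compress the remaining, genuinely interior, powers (if there is only one power, move all of it and stop); in the pair step, move the current leftmost symbol to the left popped part iff it lies in $\Sigma^R$, move the current rightmost symbol to the right popped part iff it lies in $\Sigma^L$, and pair-compress what remains. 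Concatenating the left popped part, the popped sequence of the resulting shorter interior (computed recursively over $\Sigma_\ell$), and the right popped part gives $\mathsf{pop}(U)$; and $\mathsf{pop}(U)=U$ when $|U|=1$.

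\textbf{Equal fragments have equal popped sequences.} The plan is to prove, by induction on $|U|$, the stronger statement that $\mathsf{pop}(U)$ is a function of the string $U$ alone. The base case is the definition. For $|U|\ge 2$ the heart of the matter is that the left and right popped parts $\lambda,\rho$ and the residual interior $U'$ of the first round depend only on the content of $U$. For the block step: a maximal power of $U$ that is flanked \emph{inside} $U$ by symbols different from it is also a maximal power of $T_0$, hence rewritten into the same canonical $A_{c,k}$; the leftmost and rightmost powers of $U$ are the only ones that may instead be proper parts of longer powers of $T_0$, which is precisely why they are popped. For the pair step: a symbol not at an end of the active interior has, inside the interior, the same neighbours as at the corresponding position of the whole block-compressed string, so the greedy pairing treats it identically; the leftmost symbol of the interior can pair only to its left, with a hidden symbol of the context, and this is possible exactly when it lies in $\Sigma^R$ — which is why we pop it precisely then (symmetrically on the right). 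A short left-to-right analysis of the greedy pairing in a bounded window at each end confirms that, whatever the hidden context is, the pairs formed strictly inside the residual interval agree with those our local procedure forms. Since $U'$ is a substring of $T_1$ with $|U'|<|U|$ (at least two symbols were popped, or $U$ was a single power and $U'=\varepsilon$) and the canonical naming makes $U'$ itself a function of $U$, the induction hypothesis gives that $\mathsf{pop}(U')$ depends only on $U'$, hence on $U$, and therefore so does $\mathsf{pop}(U)=\lambda\cdot\mathsf{pop}(U')\cdot\rho$. In particular, equal fragments have equal popped sequences. This context-independence is the main obstacle: one must pin down exactly which boundary symbols are ``unresolvable'' from the fragment and verify that popping precisely those leaves an interior whose internal rewriting is forced by its own content.

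\textbf{$\Oh(\log n)$ runs, the length-one characterisation, and common parents.} The active interior shrinks by at least the constant factor by which a recompression round shrinks the whole string (popping only removes more), so it reaches length one within $\Oh(\log n)$ rounds; each round adds to $\mathsf{pop}(U)$ at most the two boundary powers and at most the two single symbols from the pair step, i.e.\ $\Oh(1)$ runs, for $\Oh(\log n)$ runs in total. A power popped at round $\ell$ is a contiguous sub-run of a maximal power $c^{e'}$ of $T_{\ell-1}$, i.e.\ a contiguous block of children of the node $A_{c,e'}$ of $\Tr$; a single symbol popped in a pair step, and the lone final interior symbol, form runs of length one that trivially share a parent. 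Moreover, two pop events that are consecutive in $\mathsf{pop}(U)$ always involve either different recompression levels or two distinct symbols — adjacent symbols of a block-compressed string are distinct, and a freshly created pair symbol differs from every symbol of the previous level — so no run of $\mathsf{pop}(U)$ straddles two pop events, whence the nodes of every run share one parent. Finally, if $|U|=1$ then $\mathsf{pop}(U)$ is a single symbol; if $|U|\ge 2$ then either $U$ is one maximal power $c^e$ with $e\ge 2$, giving $\mathsf{pop}(U)=c^e$ of length $\ge 2$, or $U$ has at least two maximal powers, so $\mathsf{pop}(U)$ begins with the leftmost and ends with the rightmost, which are distinct — so $\mathsf{pop}(U)$ consists of a single symbol exactly when $|U|=1$.

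\textbf{Construction in $\Oh(\log n)$ time.} Rather than materialise the active interiors, observe that the left popped part is read off along the spine of $\Tr$ from the leaf at position $a$: walking upward, at each ancestor we emit the labels of the left siblings of the child through which the path descends, with the convention that at a power node whose children straddle position $a$ only the appropriate suffix of its children is emitted; symmetrically the right popped part is read along the spine from the leaf at position $b$, and the two spines meet at the $\LCA$ of these leaves, which supplies the single interior symbol, if any. After $\Oh(n)$-time preprocessing of $\Tr$ (parent pointers, child indices, power multiplicities, and an $\Oh(1)$-time $\LCA$ structure) each spine has $\Oh(\log n)$ nodes and is traversed in $\Oh(1)$ per node, and the $\Oh(\log n)$ runs are output in run-length form, for $\Oh(\log n)$ time overall. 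Thus, once the context-independence of the second step is in place, the counting, the length-one characterisation, the common-parent property, and the $\Oh(\log n)$-time construction are all bookkeeping over the $\Oh(\log n)$ rounds and the two spines of $\Tr$.
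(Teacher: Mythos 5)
The paper itself gives no proof of this statement: it is imported verbatim from I's recompression paper~\cite{DBLP:conf/cpm/I17}, so your reconstruction can only be judged on its own terms. Your overall architecture is the right one — define the popped sequence procedurally (pop the two boundary runs at each block step; pop an end symbol at a pair step exactly when its partition class makes its fate context-dependent), prove content-dependence by induction over the rounds, and then observe that each round contributes $\Oh(1)$ runs. The run-counting, common-parent and length-one arguments are essentially fine for that definition. The genuine gap is in the $\Oh(\log n)$-time construction. The sequence you propose to ``read off along the spines'' — siblings of the children through which the paths from the leaves at positions $a$ and $b$ ascend to their \LCA{} — is the canonical interval decomposition of $T[a\dd b]$ in $\Tr$, and it is \emph{not} your popped sequence. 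The popping procedure is deliberately conservative: it pops a boundary symbol whenever its compression \emph{could} depend on hidden context, even when in the actual text that symbol is later merged with material lying entirely inside the fragment; in that case $\Tr$ contains a node fully inside the fragment which the spine reading emits as one symbol, while the popped sequence lists its lower-level constituents separately. (Take the fragment equal to all of $T$: your procedure still pops boundary runs in every round, so its popped sequence has $\Theta(\log n)$ runs, whereas the spine reading essentially returns the children of the root.) Worse, the canonical decomposition is precisely the object that is \emph{not} determined by the fragment's content — if it were, popped sequences would be unnecessary — so as written the construction computes a sequence for which the first claim of the theorem fails. The repair is to simulate your popping procedure directly on $\Tr$: store for each symbol the partition class it had in the round it was created/alive, keep pointers to the nodes representing the current leftmost and rightmost interior symbols, and advance them by $\Oh(1)$ parent/sibling operations per round, which still yields $\Oh(\log n)$ total time after linear preprocessing.

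Two smaller issues. First, the claim that the active interior shrinks by the same constant factor as the whole string in each round is unjustified — the partition $\Sigma^L\sqcup\Sigma^R$ is chosen globally, and a particular fragment may contain few $\Sigma^L\Sigma^R$ pairs — but it is also unnecessary: the interior is always a substring of the current text $T_\ell$, whose length decreases geometrically, which already caps the number of popping rounds at $\Oh(\log n)$ (and your always-pop-the-boundary-runs rule already makes the interior strictly shrink, which is all your induction needs). Second, the crux of the theorem, context-independence of the pair step at the two ends, is only asserted (``a short left-to-right analysis \ldots confirms''); the reasons you name — pairs are determined pointwise by the two symbols and the global partition, and a run strictly inside the fragment is maximal in the whole text — are the correct ones, but this is the step that must actually be carried out (also note ``left siblings'' along the left spine would lie outside the fragment; the orientation is reversed).
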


Let $F_1^{p_1}\cdots F_t^{p_t}$ be the run-length encoding of the popped sequence of a substring $S$ of $T$.
We define \[L(S) = 
  \{|\gen(F_1)|, |\gen(F_1^{p_1})|, |\gen(F_1^{p_1}F_2^{p_2})|, \ldots, |\gen(F_1^{p_1}\cdots F_{t-1}^{p_{t-1}})|,|\gen(F_1^{p_1}\cdots F_{t-1}^{p_{t-1}}F_{t}^{p_{t}-1})|\}.\]
By \cref{thm:recomp}, the set $L(S)$ can be constructed in $\Oh(\log n)$ time given the occurrence $T[a\dd b]=S$.

\begin{lemma}\label{lem:synchr}
  Let $v$ be a non-leaf node of $\Tr$ and let $T[a\dd b]$ be an occurrence of $S$ contained in $\val(v)$,
  but not contained in $\val(u)$ for any child $u$ of $v$.
  If $T[a\dd c]$ is the longest prefix of $T[a\dd b]$ contained in $\val(u)$ for a child $u$ of $v$,
  then $|T[a\dd c]| \in L(S)$.
  Symmetrically, if $T[c'+1\dd b]$ is the longest suffix of $T[a\dd b]$ contained in $\val(u)$ for a child $u$ of $v$,
  then $|T[a\dd c']| \in L(S)$.
\end{lemma}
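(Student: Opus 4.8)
The statement is exactly a "local consistency" or "synchronization" property for the recompression parse tree: a long-enough occurrence of $S$ inside $\val(v)$ must be split by the children boundaries of $v$ at one of the prefix lengths recorded in $L(S)$. I would prove it by reducing the claim about the occurrence $T[a\dd b]$ to the statement of \cref{thm:recomp} about popped sequences. The key observation is that the popped sequence of $S$ and the popped sequence of any equal fragment (in particular $T[a\dd b]$) coincide; moreover the chain of nodes realizing the popped sequence of $T[a\dd b]$ is a chain \emph{in $\Tr$}, so it partitions $\val(v)$'s relevant portion into consecutive fragments, and the lengths of the prefixes of that partition are precisely the elements of $L(S)$ (by the very definition of $L(S)$).

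\textbf{Key steps.} First, I would take the popped sequence of $T[a\dd b]$ and write its run-length encoding as $F_1^{p_1}\cdots F_t^{p_t}$, noting by \cref{thm:recomp} that this matches the run-length encoding used to define $L(S)$ (equality of fragments implies equality of popped sequences). Each maximal power $F_i^{p_i}$ corresponds to a block of $p_i$ consecutive sibling nodes in $\Tr$ sharing a common parent — call that parent $w_i$ — and the whole popped sequence is a chain whose values tile $T[a\dd b]$. Second, I would argue that each $w_i$ lies strictly below $v$ (or is a descendant of a child of $v$): indeed, each $w_i$ is a \emph{proper} ancestor of the nodes in its run, and since $T[a\dd b]$ is not contained in any $\val(u)$ for a child $u$ of $v$, no single node whose value contains all of $T[a\dd b]$ can sit at or below the child level — but each individual $w_i$'s run contributes only a piece of $T[a\dd b]$, so $\val(w_i)$ need not contain $T[a\dd b]$; what I actually need is that consecutive runs that straddle the boundary between two children $u,u'$ of $v$ force that boundary to fall between two of the tiling fragments. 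Concretely: let $T[a\dd c]$ be the longest prefix of $T[a\dd b]$ contained in some child value $\val(u)$. The boundary $c+1/2$ is a boundary between children of $v$. Since the chain of popped nodes tiles $T[a\dd b]$ and each popped node's value is contained in a single child of $v$ (a node strictly inside the subtree rooted at a child, or one of the children themselves — here I would use that the popped nodes are at depth $\geq$ that of $v$'s children, or handle the power-node parent carefully), the inter-position $c+1/2$ must coincide with one of the boundaries between consecutive fragments of the tiling, \emph{except} possibly inside a single run $F_i^{p_i}$ where the tiling fragments are all copies of $\gen(F_i)$ sitting under a common power-parent $w_i$ — but that power-parent's value is a single fragment contained in one child of $v$, so the boundary cannot fall strictly inside a run. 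Hence $c+1/2$ is a boundary between whole runs, and therefore $|T[a\dd c]|$ equals $|\gen(F_1^{p_1}\cdots F_{i-1}^{p_{i-1}})|$ for some $i$, which is one of the listed elements of $L(S)$. The extra element $|\gen(F_1^{p_1}\cdots F_{t-1}^{p_{t-1}}F_t^{p_t-1})|$ handles the edge case where the boundary falls one copy short of the end of the last run — I would check this case explicitly. The symmetric claim for the longest suffix is proved identically, reading the popped sequence from the right and using the suffix lengths, which are exactly the complements recorded in $L(S)$.

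\textbf{Main obstacle.} The delicate point is the treatment of power (run) nodes. A run $F_i^{p_i}$ in the popped sequence corresponds to $p_i$ children of a single power-node $w_i$ with production $w_i \to F_i^{q}$ for some $q \ge p_i$; the boundary $c+1/2$ between two children of $v$ cannot pass through the interior of $\val(w_i)$ unless $w_i$ itself straddles that boundary, which is impossible since $\val(w_i)$ is a single contiguous fragment lying in one node of $\Tr$ at or below the children of $v$. Nailing down precisely why every popped node's value lies within a single child's value — i.e. that the chain of popped nodes is "finer" than the partition into children of $v$ at the places where $T[a\dd b]$ crosses a child boundary — is where I expect to spend the most care; it relies on the fact that $T[a\dd b]$ is \emph{not} contained in any child value, so the crossing points of $T[a\dd b]$ over child boundaries are genuine, and on the structure of $\Tr$ (the children of $v$ partition $\val(v)$, and any node overlapping two children must be an ancestor of one of them, hence equal to $v$, contradiction). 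Once that is established, the rest is bookkeeping matching boundary positions to the explicit list defining $L(S)$.
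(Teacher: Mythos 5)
Your overall strategy is the right one (and matches the paper's): reduce to the popped sequence of $T[a\dd b]$, observe that its nodes tile $T[a\dd b]$ inside $\val(v)$, and argue that the child boundary $c+\tfrac12$ lands at a tile boundary that is recorded in $L(S)$. But there is a genuine gap exactly at the spot you flag as the ``main obstacle,'' and the patch you propose does not close it.

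You claim that the boundary $c+\tfrac12$ between two children of $v$ \emph{cannot} fall strictly inside a run $F_i^{p_i}$, on the grounds that the common parent $w_i$ of the run's nodes is a node whose value ``is a single contiguous fragment lying in one node of $\Tr$ at or below the children of $v$,'' hence wholly inside one child. That containment is simply not guaranteed: nothing prevents $w_i$ from being $v$ itself, in which case the run's nodes \emph{are} children of $v$ and the boundary between two children of $v$ can perfectly well sit between two consecutive nodes of the run. Your proposed fix --- treating ``one copy short of the last run'' as the sole edge case --- is aimed at the wrong place and would not cover the prefix claim at all. Notice also that $L(S)$ contains \emph{two} extra, non-run-boundary lengths, $|\gen(F_1)|$ and $|\gen(F_1^{p_1}\cdots F_{t-1}^{p_{t-1}}F_t^{p_t-1})|$; the fact that both are needed is a hint that the problematic case is not just ``at the end.''

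The argument the paper actually uses for this case is short and worth internalizing: let $v_1,\dots,v_p$ be the chain, with $v_1,\dots,v_q$ the longest prefix of nodes descending from a single child $u$ of $v$, so $T[a\dd c]=\val(v_1)\cdots\val(v_q)$. If the labels of $v_q$ and $v_{q+1}$ agree, they lie in the same run, hence share a parent; since $v_q$ and $v_{q+1}$ descend from different children of $v$, that common parent can only be $v$, so $v_q$ is a child of $v$, forcing $v_q=u$. But then $v_1,\dots,v_{q-1}$ would be proper descendants of $v_q$ while the $\val(v_i)$ are pairwise disjoint --- impossible unless $q=1$. So the boundary sits right after $v_1$, giving $|T[a\dd c]|=|\val(v_1)|=|\gen(F_1)|\in L(S)$. (The symmetric argument at the right end is what gives $|\gen(F_1^{p_1}\cdots F_t^{p_t-1})|$.) This is the one step your write-up asserts by fiat and needs to be argued rather than assumed.
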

\begin{proof}
Consider a sequence $v_1,\ldots,v_p$ of nodes in the chain corresponding to the popped sequence of $S=T[a\dd b]$.
Each of these nodes is a descendant of a child of $v$.
Note that  $T[a\dd c]=\val(v_1)\cdots \val(v_q)$, where $v_1,\ldots, v_q$ is the longest prefix consisting of descendants of the same child.
If the labels of $v_q$ and $v_{q+1}$ are distinct, then they belong to distinct runs
and $|T[a\dd c]|\in L(S)$.
Otherwise, $v_q$ and $v_{q+1}$ share the same parent: $v$.
Thus, $q=1$ and $|T[a\dd c]|=|\val(v_1)|\in L(S)$.
The proof of the second claim is symmetric.
\end{proof}

\paragraph{\bf Data Structure.}
We use recompression to build an RSLP generating $T$ and the underlying parse tree $\Tr$.
We also construct the component of \cref{lem:aux_break2} with $B_P = \{i + \frac12 : i \in L(P)\}$ for each pattern $P\in \D$.
Moreover, for every symbol $A$ we store the number of occurrences of patterns from $\D$ in $\gen(A)$.
Additionally, if $A \to B^k$ is a power, we also store the number of occurrences in $\gen(B^i)$ for $i\in [1\dd k]$.
The space consumption is $\Oh(n + d\log n)$ since $|B_P|=\Oh(\log n)$ for each $P\in \D$.

\paragraph{\bf Efficient preprocessing.}
The RSLP and the parse tree are built in $\Oh(n)$ time, and the sets $B_P$ are determined in $\Oh(d\log n)$ time using \cref{thm:recomp}.
The data structure of \cref{lem:aux_break2} is then constructed in $\Oh(n + d\log^{3/2} n)$ time.
Next, we process the RSLP in a bottom-up fashion. If $A$ is a terminal, its count is easily determined.
If $A\to BC$ is a concatenation, we sum the counts for $B$ and $C$ and the number of occurrences spanning both $\gen(B)$ and $\gen(C)$.
To determine the latter value, we fix an arbitrary node $v$ with label $A$ and denote its children  $u_\ell,u_r$.
By \cref{lem:synchr}, any occurrence of $P$ intersecting both $\val(u_\ell)$ and $\val(u_r)$ has a breakpoint aligned
to the inter-position between the two fragments.
Hence, the third summand is the result of a \textsc{Breakpoints-Anchor IDM} query
in $\val(v)$ with the anchor between $\val(u_\ell)$ and $\val(u_r)$.
Finally, if $A\to B^k$, then to determine the count in $\gen(B^i)$, we add the count for $B$, the count in $\gen(B^{i-1})$,
and the number of occurrences in $B^i$ spanning both the prefix $B$ and the suffix $B^{i-1}$.
To find the latter value, we fix an arbitrary node $v$ with label $A$, denote its children $u_1,\ldots, u_k$,
and make a \textsc{Breakpoints-Anchor IDM} query
in $\val(u_1)\cdots \val(u_i)$ with the anchor between $\val(u_1)$ and $\val(u_2)$.
The correctness of this step follows from \cref{lem:synchr}.
The running time of the last phase is $\Oh(n\log n / \log \log n)$, so the overall construction time
is $\Oh(n\log n / \log \log n + d\log^{3/2} n)$.

\paragraph{\bf Query.}
Upon a query $\C(i,j)$, we proceed essentially as in the warm-up solution: we recursively count the occurrences contained
in the intersection of $T[i\dd j]$ with $\val(v)$ for nodes $v$ in $\Tr$, starting from the root of $\Tr$.
If the two fragments are disjoint, the result is $0$, and if $\val(v)$ is contained in $T[i\dd j]$, it is the count precomputed for the label of $v$.
Otherwise, the label of $v$ is a non-terminal. If it is a concatenation symbol, we recurse on both children $u_\ell,u_r$ of $v$
and sum the obtained counts. If $T[i\dd j]$ spans both $\val(u_\ell)$ and $\val(u_r)$, we also
add the result of a \textsc{Breakpoints-Anchor IDM} query in the intersection of $T[i\dd j]$ with $\val(v)$ and the anchor between  $\val(u_\ell)$ and $\val(u_r)$. If the label is a power symbol $A\to B^k$, we determine which of the children $u_1,\ldots,u_k$ of $v$ are spanned by $T[i\dd j]$.
We denote these children by $u_\ell,\ldots, u_r$ and recurse on $u_\ell$ and on $u_r$.
If $r > \ell$, we also make a \textsc{Breakpoints-Anchor IDM} query in the intersection of $T[i\dd j]$ with $\val(u_\ell)\cdots \val(u_r)$ and anchor between $\val(u_\ell)$ and $\val(u_{\ell+1})$.
If $r>\ell+1$, we further add the precomputed value for $\gen(B^{r-\ell-1})$ to account for the occurrences contained in $\val(u_{\ell+1})\cdots \val(u_{r-1})$ and make a \textsc{Breakpoints-Anchor IDM} query in the intersection of $T[i\dd j]$ with $\val(u_{\ell+1})\cdots \val(u_r)$ and anchor between $u_{r-1}$ and $u_{r}$. By \cref{lem:synchr}, the answer is the sum of the up to five values computed.
The overall query time is $\Oh(\log^2 n / \log \log n)$, since we make $\cO(\log n)$ non-trivial recursive calls and each of them is processed in $\Oh(\log n / \log \log n)$ time.

\section{Dynamic dictionaries}\label{sec:dynamic}

In the Online Boolean Matrix-Vector Multiplication (OMv) problem, we are given as input an $n \times n$ boolean matrix $M$. Then, we are given in an online fashion a sequence of $n$ vectors $r_1, \ldots, r_n$, each of size $n$. For each such vector $r_i$, we are required to output $Mr_i$ before receiving $r_{i+1}$. 

\begin{conjecture}[OMv Conjecture~\cite{DBLP:conf/stoc/HenzingerKNS15}]\label{conj:OMv}
For any constant $\epsilon>0$, there is no $\cO(n^{3-\epsilon})$-time algorithm that solves OMv correctly with probability at least 2/3.
\end{conjecture}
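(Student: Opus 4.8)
The final statement is a \emph{conjecture}, so strictly speaking there is nothing to prove: the OMv Conjecture is a well-known open problem in fine-grained complexity, and a genuine proof would be a major breakthrough (it would, in particular, rule out strongly subcubic ``combinatorial'' Boolean matrix multiplication). My plan, therefore, is not to prove \cref{conj:OMv} but to (i) make explicit why it is adopted here as a hypothesis and (ii) recall the evidence that supports it, so that the conditional lower bound in this section rests on a firm, clearly identified assumption rather than a provable fact.

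First I would note the trivial algorithm: computing each $M r_i$ by brute force costs $\cO(n^2)$, for $\cO(n^3)$ over the $n$ rounds; the only known improvements are subpolynomial (Four Russians, or batching several $r_i$ together and invoking fast matrix multiplication), and none achieves $\cO(n^{3-\epsilon})$ in the truly \emph{online} regime where $M r_i$ must be produced before $r_{i+1}$ is revealed. Next I would recall, following Henzinger, Krinninger, Nanongkai, and Saranurak~\cite{DBLP:conf/stoc/HenzingerKNS15}, that the conjecture is robust: it is equivalent to its one-vector variant $\mathrm{OuMv}$, it is implied by the combinatorial BMM conjecture, and refuting it would yield surprising polynomial speedups for a long list of dynamic problems (subgraph connectivity, $st$-reachability, and others). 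The relaxation to success probability $2/3$ only makes the statement stronger, and standard amplification does not help the algorithm, since it is the online constraint, not the error probability, that is the bottleneck.

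The main obstacle is, of course, that no proof is known or expected with current techniques, so the only available ``step'' is to assume \cref{conj:OMv} and move on. The substantive work of this section begins immediately afterwards: encoding an adversarial $n\times n$ matrix $M$ as a text $T$ together with an initially empty dictionary, so that inserting and deleting single-letter patterns simulates feeding a vector $r$ to the multiplication and an $\Dec(i,j)$ query reads off one coordinate of $M r$. A dynamic IDM data structure with update time $u$ and query time $q$ obeying $u\cdot q=\cO(n^{1-\epsilon})$ would then solve OMv in time $\cO(n^{3-\epsilon'})$ for some $\epsilon'>0$, contradicting the conjecture; that reduction --- not the conjecture --- is where the real proof effort lies.
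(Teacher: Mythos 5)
You are right that there is nothing to prove here: the statement is a conjecture imported from Henzinger et al.\ as a hardness hypothesis, and the paper likewise states it without proof, using it only as the assumption behind the conditional lower bound of \cref{thm:lowerbound}. Your sketch of the intended reduction (encoding $M$ in $T$, single-letter insertions for the query vector, $\Dec$ queries reading off coordinates of $Mv$) matches the paper's actual argument, which indeed lives in \cref{thm:lowerbound} rather than in the conjecture itself.
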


We now present a restricted, but sufficient for our purposes, version of~\cite[Theorem 2.2]{DBLP:conf/stoc/HenzingerKNS15}.

\begin{theorem}[\cite{DBLP:conf/stoc/HenzingerKNS15}]\label{thm:gOMv}
\cref{conj:OMv} implies that there is no algorithm, for a fixed $\gamma>0$, that given as input an $r_1 \times r_2$ matrix $M$, with $r_1=\lfloor r_2^{\gamma} \rfloor$, preprocesses $M$ in time polynomial in $r_1+r_2$ and, then, presented with a vector $v$ of size $r_2$, computes $Mv$ in time $\cO(r_2^{1+\gamma-\epsilon})$ for $\epsilon > 0$, and has error probability of at most 1/3.
\end{theorem}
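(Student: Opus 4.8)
The plan is to prove the contrapositive: assuming that for some fixed $\gamma>0$ and some constant $\epsilon>0$ there is an algorithm $\mathcal{A}$ as in the statement, we build an OMv algorithm running in time $\cO(n^{3-\epsilon'})$ for a constant $\epsilon'>0$, contradicting \cref{conj:OMv}. Fix such an $\mathcal{A}$. Since its preprocessing is polynomial in $r_1+r_2=\cO(r_2^{\max(1,\gamma)})$, there is a constant $c$ such that $\mathcal{A}$ preprocesses any $r_1\times r_2$ matrix with $r_1=\floor{r_2^{\gamma}}$ in time $\cO(r_2^{c})$. The value of $c$ is unknown but fixed; we will choose our block dimension only after $c$ is revealed, which is the crux of the balancing.

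Given an OMv instance --- an $n\times n$ Boolean matrix $M$ and an online stream $v_1,\dots,v_n$ of length-$n$ Boolean vectors --- pick a small constant $\delta\in(0,1)$ (fixed below), set $r_2=\ceil{n^{\delta}}$ and $r_1=\floor{r_2^{\gamma}}$, and pad $n$ up to the next multiple of $r_1r_2$; since $r_1r_2=\Theta(r_2^{1+\gamma})=o(n)$ this changes $n$ by at most a sublinear additive term. Partition $M$ into $\tfrac{n}{r_1}\cdot\tfrac{n}{r_2}$ blocks $M_{b,c}$ of size $r_1\times r_2$ and run $\mathcal{A}$'s preprocessing on each; this costs $\tfrac{n^2}{r_1r_2}\cdot\cO(r_2^{c})=\cO(n^{2}r_2^{\,c-1-\gamma})=\cO(n^{2+\delta(c-1-\gamma)})$ in total, using that $r_1r_2=\Theta(r_2^{1+\gamma})$. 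To answer a query $v_i$, split $v_i$ into $\tfrac{n}{r_2}$ blocks $v_i^{(c)}$ of length $r_2$, run $\mathcal{A}$ on every pair $(M_{b,c},v_i^{(c)})$, and for each row-block $b$ return the coordinatewise OR $\bigvee_c M_{b,c}v_i^{(c)}$, which is exactly the $b$-th block of $Mv_i$. Each of the $\tfrac{n^2}{r_1r_2}$ calls costs $\cO(r_2^{1+\gamma-\epsilon})$, so a query costs $\cO\!\big(\tfrac{n^2}{r_1r_2}r_2^{1+\gamma-\epsilon}\big)=\cO(n^{2}r_2^{-\epsilon})$ and all $n$ queries cost $\cO(n^{3}r_2^{-\epsilon})=\cO(n^{3-\delta\epsilon})$. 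Only $M$ (available up front) is used in preprocessing and a query touches only $v_i$, so the online requirement of OMv is respected.

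Two things remain: amplification and balancing. Each call of $\mathcal{A}$ returns a Boolean vector that is entirely correct with probability at least $2/3$, hence correct in each coordinate with probability at least $2/3$; repeating every call $\Theta(\log n)$ times with a coordinatewise majority vote pushes a single call's failure probability below $1/n^{4}$, and a union bound over the $\cO(n^{3})$ (repeated) calls makes the whole OMv algorithm err with probability at most $1/3$, at the cost of an extra $\cO(\log n)$ factor absorbed into $\cO(n^{3-\delta\epsilon/2})$. Finally, choose $\delta>0$ small enough that $2+\delta(c-1-\gamma)<3-\delta\epsilon/2$ --- possible because the left-hand side tends to $2$ and the right-hand side to $3$ as $\delta\to0$ --- so the preprocessing cost is dominated by the query cost and the total running time is $\cO(n^{3-\epsilon'})$ with $\epsilon'=\delta\epsilon/2>0$, contradicting \cref{conj:OMv}.

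The only genuinely delicate point is (i): ensuring the polynomial-time preprocessing of $\mathcal{A}$ does not overwhelm the $n^{3}$ budget, which forces us to take the block dimension $r_2$ to be a \emph{sufficiently small} polynomial in $n$, picked only after the fixed exponent $c$ is known. Point (ii), amplifying a Monte Carlo subroutine with vector output, is routine via coordinatewise majority and a union bound, and everything else is bookkeeping about the block decomposition and the padding.
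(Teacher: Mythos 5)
The paper does not prove this theorem --- it is cited as a restricted form of Theorem~2.2 of Henzinger et al.~\cite{DBLP:conf/stoc/HenzingerKNS15}, and no argument appears in the text. There is therefore no in-paper proof to compare against; your from-scratch argument is correct and is, in essence, the standard block-partitioning reduction from that reference. You tile the $n\times n$ matrix into $r_1\times r_2$ blocks, answer each online query by OR-ing $\mathcal{A}$'s block-products over the column-blocks for every row-block, amplify the Monte Carlo guarantee by $\Theta(\log n)$ independent repetitions with coordinatewise majority, and --- the genuinely delicate step, which you correctly identify --- fix $r_2=\Theta(n^{\delta})$ only after the constant (but a priori unbounded) preprocessing exponent $c$ is revealed, choosing $\delta>0$ small enough that the $\cO(n^{2+\delta(c-1-\gamma)})$ preprocessing cost is dominated by the $\cO(n^{3-\delta\epsilon})$ query cost. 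The arithmetic ($r_1r_2=\Theta(r_2^{1+\gamma})$, $\tfrac{n^2}{r_1r_2}\,r_2^{1+\gamma-\epsilon}=\Theta(n^{2}r_2^{-\epsilon})$, the sublinear padding to a multiple of $r_1r_2$, and the clean offline/online split) all checks out. One small point worth making explicit: if $\mathcal{A}$'s preprocessing is itself randomized and contributes to the $1/3$ error budget, the $\Theta(\log n)$ repetitions of a call should each use an independently regenerated preprocessing of the relevant block so the runs are truly independent; this multiplies the preprocessing time by $\log n$ as well and is still absorbed by the same choice of $\delta$.
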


We proceed to obtain a conditional lower bound for IDM in the case of a dynamic dictionary.
This lower bound clearly carries over to the other problems we considered.

\begin{theorem}\label{thm:lowerbound}
The OMv conjecture implies that there is no algorithm that preprocesses $T$ and $\D$ in time polynomial in $n$, performs insertions to $\D$ in time $\cO(n^{\alpha})$, answers $\Dec(i,j)$ queries in time $\cO(n^{\beta})$, in an online manner, such that $\alpha+\beta =1-\epsilon$ for $\epsilon > 0$, and has error probability of at most 1/3.
\end{theorem}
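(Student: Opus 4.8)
The plan is to reduce from the unbalanced Boolean matrix--vector multiplication problem of \cref{thm:gOMv}. Fix a rational $\gamma>0$, to be calibrated later as a function of $\alpha,\beta$, and let $M$ be an $r_1\times r_2$ Boolean matrix with $r_1=\lfloor r_2^{\gamma}\rfloor$. I would encode $M$ as a text $T$ over the alphabet $\{\sigma_1,\dots,\sigma_{r_2},\#\}$: start with a ``letter block'' $\sigma_1\sigma_2\cdots\sigma_{r_2}\#$ (ensuring every $\sigma_j$ is a substring of $T$), then for each row $i\in[1\dd r_1]$ append $B_i\#$, where $B_i$ is the concatenation of $\sigma_j$ over all $j$ with $M_{ij}=1$ in increasing order of $j$. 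Then $n:=|T|=\cO(r_1r_2)$ (we may pad $T$ with extra $\#$'s so that $n=\Theta(r_1r_2)$), which is polynomial in $r_1+r_2$, and $T$ is built from $M$ in $\cO(n)$ time; this constitutes the (polynomial) preprocessing, carried out on $T$ together with the empty dictionary $\D=\varnothing$. The governing invariant is that a single letter $\sigma_j$ occurs inside block $B_i$ iff $M_{ij}=1$.

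Given a query vector $v\in\{0,1\}^{r_2}$, I would insert into $\D$ the single-letter patterns $\{\sigma_j : v_j=1 \text{ and column } j \text{ of } M \text{ is nonzero}\}$, each as an occurrence in the initial letter block (dropping all-zero columns is harmless since they do not influence $Mv$). For each row $i$ I then issue $\Dec(a_i,b_i)$, where $[a_i\dd b_i]$ is the range occupied by $B_i$ in $T$. Because every pattern in $\D$ is a single letter, $\Dec(a_i,b_i)$ returns $\mathbf{true}$ iff some $\sigma_j$ with $v_j=1$ occurs in $B_i$, i.e.\ iff $M_{ij}=1$ and $v_j=1$ for some $j$, i.e.\ iff $(Mv)_i=1$. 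Hence the $r_1$ answers are exactly the coordinates of $Mv$.

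Now assume for contradiction that an algorithm as in the statement exists with update time $\cO(n^{\alpha})$ and $\Dec$ time $\cO(n^{\beta})$, where $0\le\alpha,\beta<1$ and $\alpha+\beta=1-\epsilon$. To cope with a Monte Carlo data structure I would maintain $\Theta(\log r_1)$ independent copies, feed the same at most $r_2$ insertions to all of them, and answer each of the $r_1$ queries by majority vote; a union bound then makes all $r_1$ answers simultaneously correct with probability at least $2/3$, at an $\cO(\log n)=n^{o(1)}$ multiplicative cost. After the polynomial preprocessing, producing $Mv$ thus takes $n^{o(1)}\cdot(r_2\cdot n^{\alpha}+r_1\cdot n^{\beta})$ time. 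Substituting $n=\Theta(r_2^{1+\gamma})$ and $r_1=\Theta(r_2^{\gamma})$, the two terms are $r_2^{\,1+(1+\gamma)\alpha+o(1)}$ and $r_2^{\,\gamma+(1+\gamma)\beta+o(1)}$; both exponents drop strictly below $1+\gamma$ exactly when $\alpha<\tfrac{\gamma}{1+\gamma}$ and $\beta<\tfrac{1}{1+\gamma}$, that is, when $\tfrac{\alpha}{1-\alpha}<\gamma<\tfrac{1-\beta}{\beta}$. This interval of admissible $\gamma$ is nonempty precisely because $\alpha+\beta<1$, so we fix such a $\gamma$ at the outset. Then $Mv$ is computed in $\cO(r_2^{1+\gamma-\epsilon'})$ time for some $\epsilon'>0$, contradicting \cref{thm:gOMv} and hence refuting \cref{conj:OMv}. (A deterministic algorithm is the zero-error special case, so it is covered as well.)

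The main obstacle is precisely this last calibration: one must choose $\gamma$ so that the cost of the $\Theta(r_2)$ single-letter insertions \emph{and} the cost of the $\Theta(r_2^{\gamma})$ decision queries are \emph{both} subpolynomially below the $r_2^{1+\gamma}$ barrier, and the only slack available comes from $\alpha+\beta<1$; verifying that the two linear constraints on $\gamma$ are jointly satisfiable is the crux of the argument. The only other delicate point is that the $\Dec$ structure is online and randomized, so a naive union bound over the $r_1$ query errors fails; the logarithmic-factor amplification above fixes this without perturbing the exponents.
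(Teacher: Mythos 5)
Your proposal is correct and follows essentially the same reduction as the paper: encode the rows of $M$ as consecutive blocks of $T$ with column indices as letters, insert single-letter patterns for the nonzero coordinates of $v$, answer each $\Dec$ query on a row block to recover a coordinate of $Mv$, and calibrate the aspect ratio $\gamma=r_1/r_2$ so that both the $r_2$ insertions and the $r_1$ queries fall polynomially below $r_2^{1+\gamma}$. The paper simply fixes $\gamma=(\alpha+\epsilon/2)/(\beta+\epsilon/2)$, which lies inside your feasibility interval $\bigl(\tfrac{\alpha}{1-\alpha},\tfrac{1-\beta}{\beta}\bigr)$, and does not spell out the $\Theta(\log r_1)$-fold amplification for the randomized case or the discarding of all-zero columns (so that every inserted pattern is genuinely a substring of $T$), both of which you handle explicitly.
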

\begin{proof}
Let us suppose that there is such an algorithm and set $\gamma = (\alpha + \epsilon/2) / (\beta+ \epsilon/2)$.
Given an $r_1 \times r_2$ matrix $M$, satisfying $r_1=\lfloor r_2^{\gamma} \rfloor$, we construct a text $T$ of length $n=r_1 r_2$ as follows. Let $T'$ be a text created by concatenating the rows of $M$ in increasing order. Then obtain $T$ by assigning to each non-zero element of $T'$ the column index of the matrix entry it originates from. Formally, for $i \in [1 \dd r_1r_2]$, let $a[i]=\lfloor (i-1)/r_2 \rfloor$ and $b[i]= i - a[i]r_2$ and set $T[i]=b[i]\cdot M[a[i]+1,b[i]]$.

We compute $Mv$ as follows. We add the indices of its at most $r_2$ non-zero entries in an initially empty dictionary. We then perform queries $\Dec(1+tr_2,(t+1)r_2)$ for $t=0,\ldots,r_1-1$. The answer to query $\Dec(1+tr_2,(t+1)r_2)$ is equal to the product of the $t$th row of $M$ with $v$. We thus obtain $Mv$. In total we perform $\cO(r_2)$ insertions to $\D$ and $\cO(r_1)$ $\Dec$ queries.
Thus, the total time required is $\cO(r_2 n^\alpha+r_1 n^\beta)=\cO(n^{\beta+\epsilon/2} n^\alpha + n^{\alpha+\epsilon/2} n^\beta)=\cO(n^{1-\epsilon/2})=\cO(r_2^{1+\gamma-\epsilon'})$ for $\epsilon' > 0$.
\cref{conj:OMv} would be disproved due to~\cref{thm:gOMv}.
\end{proof}

\begin{example}
  For the matrix \[M=\begin{bmatrix}
1 & 0 & 1 & 0 \\
0 & 0 & 1 & 1 \\
0 & 1 & 0 & 1
\end{bmatrix}\] we construct the text $T=1\,0\,3\,0\,0\,0\,3\,4\,0\,2\,0\,4$. 
For the vector $v=\begin{bmatrix}
1 & 1 & 0 & 0
\end{bmatrix}^T$, the dictionary is $\D=\{1,2\}$. The answers to $\Dec(1,4)$, $\Dec(5,8)$, $\Dec(9,12)$ are Yes, No, Yes, respectively, which corresponds to $Mv = \begin{bmatrix}
1 & 0 & 1
\end{bmatrix}^T$.
\end{example}

In the remainder of this section we focus on providing algorithms matching this lower bound.

We first summarize what is known for different kinds of internal pattern matching queries, in which at query time we are given a fragment  $T[i \dd j]$ and a substring $P$ of the text and ask queries analogous to those that we have defined for internal dictionary matching.
Note that we assume that the pattern $P$ is a substring of $T$ and is given by one of its occurrences. We answer $\C(P,i,j)$ queries as follows, similar to~\cite{DBLP:conf/spire/KopelowitzLP11}. We construct the suffix tree $\stt$ and preprocess it so that each node stores the lexicographic range of suffixes of which its path-label is a prefix.
We also construct a 2D orthogonal range counting data structure over an $n \times n$ grid $\mathcal{G}$, in which, for each suffix $T[a \dd n]$, we insert a point $(a,b)$, where $b$ is the lexicographic rank of this suffix among all suffixes.
We answer $\C(P,i,j)$ as follows.
We first locate the locus of $P$ in $\stt$ using a weighted ancestor query in $\cOtilde(1)$ time and retrieve the associated lexicographic range $[l,r]$. Next, we perform a counting query for the range $[i, j-|P|+1] \times [l,r]$ of $\mathcal{G}$, which returns the desired count; see also~\cite{MAKINEN2007332}.


\begin{theorem}[\cite{DBLP:journals/tcs/KellerKFL14,DBLP:conf/spire/KopelowitzLP11}]\label{thm:ipm}
$\Dec(P,i,j)$, $\R(P,i,j)$ and $\C(P,i,j)$ queries can be answered in time $\cOtilde(1+\occ)$ with an $\cOtilde(n)$-sized data structure that can be constructed in $\cOtilde(n)$ time.
\end{theorem}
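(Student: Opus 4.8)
The plan is to handle the three query types separately, all built on top of the suffix tree $\stt$ of $T$, preprocessed for weighted ancestor queries so that the locus of any fragment $T[i\dd j]$ — equivalently, of the pattern $P$ — can be found in $\cOtilde(1)$ time. For $\C(P,i,j)$ I would use exactly the orthogonal‐range‐counting reduction sketched in the paragraph preceding the statement: build the grid $\mathcal{G}$ with a point $(a,b)$ for every suffix $T[a\dd n]$, where $b$ is its lexicographic rank; annotate each explicit suffix‐tree node with the lexicographic interval $[l,r]$ of suffixes having its path‐label as a prefix; then answer a query by locating the locus of $P$, reading off its interval $[l,r]$, and issuing the 2D range‐counting query $[i,\,j-|P|+1]\times[l,r]$. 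Any off‐the‐shelf $\cOtilde(n)$-space, $\cOtilde(n)$-construction, $\cOtilde(1)$-query range counting structure (e.g.\ \cref{thm:range_count}, or a wavelet tree) suffices, and weighted ancestor queries cost $\cO(\log\log n)$ by the structure of~\cite{DBLP:journals/talg/AmirLLS07} recalled in the preliminaries; this yields the claimed bounds for $\C$.

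For $\Dec(P,i,j)$ I would first locate the locus of $P$ and its lexicographic interval $[l,r]$ as above; deciding existence of an occurrence of $P$ inside $T[i\dd j]$ amounts to a 2D range \emph{emptiness} query for $[i,\,j-|P|+1]\times[l,r]$, which is a special case of range reporting and is supported within the same resource bounds. For $\R(P,i,j)$, after obtaining $[l,r]$ we want all starting positions $a$ with $a\in[i,\,j-|P|+1]$ and $b(a)\in[l,r]$; this is a 2D orthogonal range \emph{reporting} query, answerable in $\cOtilde(1+\occ)$ time by a standard $\cOtilde(n)$-space structure (again a wavelet tree on the suffix array does the job), and each reported point is directly a reported occurrence of $P$. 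Alternatively, one can invoke the Internal Pattern Matching machinery of~\cite{DBLP:journals/tcs/KellerKFL14,DBLP:conf/soda/KociumakaRRW15} as a black box for reporting; either route gives the stated complexities. Since $P$ is handed to us by one of its occurrences $T[a\dd b]$ and $|P|=b-a+1$, all the "$|P|$" quantities above are known in $\cO(1)$ time, and no dependence on the total length of patterns appears.

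The only genuinely delicate point is correctness of the range‐query reduction, i.e.\ that a fragment $T[a\dd a+|P|-1]$ equals $P$ exactly when $a$ lies in the position interval $[i,\,j-|P|+1]$ \emph{and} the lexicographic rank of the suffix $T[a\dd n]$ lies in the interval $[l,r]$ associated with the locus of $P$; this is the classical suffix‐tree/suffix‐array fact that the suffixes having $P$ as a prefix form a contiguous block in lexicographic order, whose endpoints are recoverable from the locus — and it is exactly the fact used in~\cite{DBLP:conf/spire/KopelowitzLP11}. Everything else is bookkeeping: building $\stt$ and the suffix array in $\cO(n)$ time, computing the lexicographic intervals by a single DFS, preprocessing for weighted ancestors in $\cO(n)$ time, and building the chosen range structure in $\cOtilde(n)$ time and space. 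Collecting the three cases and taking the worst of the (polylogarithmic) overheads gives $\cOtilde(1+\occ)$ query time with an $\cOtilde(n)$-sized, $\cOtilde(n)$-constructible data structure, as claimed.
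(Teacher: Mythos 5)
Your plan is correct and matches the paper's approach: the paper itself only sketches the $\C(P,i,j)$ case via suffix-tree loci, lexicographic ranges, and 2D orthogonal range counting, citing the referenced works for the rest, and your extension of the same machinery to range emptiness (for $\Dec$) and range reporting (for $\R$) is exactly the natural and standard way those references realize the remaining query types. The correctness observation about the suffix array interval $[l,r]$ being contiguous and determined by the locus of $P$ is the right crux, and the resource bounds you derive ($\cOtilde(n)$ space and construction, $\cOtilde(1+\occ)$ query) match the claim.
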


Let us denote the dictionary we start with by $\D^0$.
Further, let $u_1, u_2, \ldots$ be the sequence of dictionary updates and $\D^r$ be the dictionary after update $u_r$.
Each update is an insertion or a deletion of a pattern in $\D$.
We first discuss how to answer $\RD$ queries.

\paragraph{\bf $\RD(i,j)$.}
We maintain the invariant that after update $u_t$ we have access to the static data structure of Section~\ref{sec:rd} for answering $\RD$ queries in $T$ with respect to dictionary $\D^r$, for some $r=t-\cO(m)$; note that $m$ will depend on $n+d$.
This can be achieved by rebuilding the data structure of Section~\ref{sec:rd} every $m$ updates in $\cOtilde(n+d)$ time, which amortizes to $\cOtilde((n+d)/m)$ time per update. (The time complexity can be made worst-case by application of the standard time-slicing technique.)
We also store updates $u_{r+1}, \ldots , u_t$ (or the differences between $\D^r$ and $\D^t$).

To answer a $\RD$ query, we do the following:
\begin{enumerate}[itemsep=0ex, parsep=1pt, topsep=1pt]
\item use the static data structure to answer the $\RD$ query for $\D^r$;
\item filter out the $\cO(m)$ reported patterns that are in $\D^r \setminus \D^t$;
\item search for the $\cO(m)$ patterns in $\D^t \setminus \D^r$ individually in $\cOtilde(1)$ time per pattern by performing internal pattern matching queries, employing~\cref{thm:ipm}.
\end{enumerate}
Each query thus requires time $\cOtilde(m+\occ)$. We arrive at the following proposition.

\begin{proposition}\label{lem:rddyn}
  The $\RD(i,j)$ queries for a dynamic dictionary can be answered in $\cOtilde(m+\occ)$ time  per query and $\cOtilde((n+d)/m)$ time per update for any ${m\in [1\dd n+d]}$  using $\cOtilde(n+d)$ space.
\end{proposition}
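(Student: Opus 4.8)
The plan is to combine a periodic rebuilding scheme with the internal pattern matching machinery of \cref{thm:ipm}, exactly along the lines sketched just before the statement. Fix a parameter $m\in[1\dd n+d]$. The data structure maintains, at all times, the static $\RD$ structure of \cref{sec:rd} built for a slightly outdated dictionary $\D^r$, together with an explicit list of the at most $\cO(m)$ updates $u_{r+1},\ldots,u_t$ applied since that snapshot was taken (equivalently, the symmetric differences $\D^r\setminus\D^t$ and $\D^t\setminus\D^r$). We also keep the $\cOtilde(n)$-sized internal pattern matching structure of \cref{thm:ipm}, which is built once and never rebuilt since it depends only on $T$.

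First I would describe the update procedure. Each insertion or deletion is recorded in the list of pending updates in $\cO(1)$ time; once $m$ updates have accumulated, we discard the current static structure and rebuild it from scratch for the current dictionary in $\cOtilde(n+d)$ time, resetting the pending list to empty. Amortized over $m$ updates this is $\cOtilde((n+d)/m)$ time per update, and I would remark that the standard time-slicing / de-amortization trick turns this into a worst-case bound: maintain two copies, and spend an $\cOtilde((n+d)/m)$ slice of the next rebuild on each update while queries are answered against the old copy. Second, the query procedure for $\RD(i,j)$: (i) run the static structure on $\D^r$, obtaining all distinct patterns of $\D^r$ occurring in $T[i\dd j]$ in $\cOtilde(1+\occ)$ time; (ii) discard from this output any pattern lying in $\D^r\setminus\D^t$, which is a set of size $\cO(m)$, so with a precomputed hash table (or by sorting) this costs $\cOtilde(m+\occ)$; (iii) for each of the $\cO(m)$ patterns $P\in\D^t\setminus\D^r$, decide whether $P$ occurs in $T[i\dd j]$ by a single $\Dec(P,i,j)$ internal query via \cref{thm:ipm} in $\cOtilde(1)$ time each, reporting those that do. The union of the surviving patterns from (ii) and the newly found ones from (iii) is exactly the set of distinct patterns of $\D^t$ occurring in $T[i\dd j]$, since $\D^t=(\D^r\setminus(\D^r\setminus\D^t))\cup(\D^t\setminus\D^r)$ and the two pieces are disjoint; double-reporting across the two sources cannot happen because $\D^r\setminus\D^t$ and $\D^t\setminus\D^r$ are disjoint. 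Summing, each query takes $\cOtilde(m+\occ)$ time.

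For the space bound: the static $\RD$ structure occupies $\cO(n+d)$ space, the IPM structure occupies $\cOtilde(n)$ space, and the pending-update list occupies $\cO(m)=\cO(n+d)$ space, so the total is $\cOtilde(n+d)$, and the de-amortized version merely doubles the static part. This yields the stated bounds.

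The main obstacle — and the reason this is a \emph{proposition} rather than a routine corollary — is not any single step but the bookkeeping that makes the dictionary-difference filtering both correct and within budget: one must ensure that the reported patterns are genuinely \emph{distinct} elements of $\D^t$ (a pattern may be inserted, deleted, and reinserted within a window of $m$ updates, so the pending list must be maintained as a net difference, not a raw log, or else filtering in step (ii)–(iii) could miss or duplicate patterns), and one must be careful that \cref{thm:ipm} is applied to \emph{individual} patterns given as fragments of $T$, which is exactly the internal setting it supports. Everything else is the standard logarithmic-rebuilding argument, so I expect the write-up to be short.
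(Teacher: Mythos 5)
Your proposal is correct and follows essentially the same route as the paper: periodic full rebuilds every $m$ updates (amortized $\cOtilde((n+d)/m)$, deamortized by time-slicing), a static $\RD$ query against the stale snapshot $\D^r$, filtering of the removed patterns $\D^r\setminus\D^t$, and individual IPM checks via \cref{thm:ipm} for the added patterns $\D^t\setminus\D^r$. Your extra remark about maintaining the pending updates as a net symmetric difference rather than a raw log is a sensible elaboration of the paper's parenthetical ``or the differences between $\D^r$ and $\D^t$'' but does not change the argument.
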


We next show how to attain $\cOtilde(n^{\alpha})$ update time and $\cOtilde(n^{1-\alpha}+\occ)$ query time for any $0<\alpha <1$. In other words, we show how to avoid the direct dependency on the size of the dictionary.

We store $\D$ as an array $D$ of collections so that a pattern $P \in \D$ is stored in $D[p]$, where $p$ is the starting position of the leftmost occurrence of $P$ in $T$. We can find the desired position $p$ for a pattern $P$ in $\cO(\log\log n)$ time by locating its locus on $\stt$ using a weighted ancestor query; we can have precomputed the leftmost occurrence of the path-label of each explicit node in a DFS traversal of $\stt$. Let each collection store its elements in a min heap with respect to their lengths.

The dictionary $\D'=\{\min D[p] : 1 \leq p \leq n\}$ is of size $\cO(n)$.
An insertion in $\D$ corresponds to a possible insertion followed by a possible deletion in $\D'$, while a deletion in $\D$ corresponds to a possible deletion in $\D'$, followed by an insertion if the collection in $D$ where the deletion occurs is non-empty.
We observe that if some $P \in \D \setminus \D'$ occurs in $T[i \dd j]$, then the minimum element in the collection in which $P$ belongs also occurs in $T[i \dd j]$.

We thus use the solution for $\RD(i,j)$ from~\cref{lem:rddyn} for $\D'$. We then iterate over the elements of each collection from which an element has been reported in the order of increasing length, while they occur in $T[i \dd j]$; we check whether this is the case using~\cref{thm:ipm}.

\paragraph{\bf $\R(i,j)$.}
We first perform a $\RD(i,j)$ query and then find all occurrences of each returned pattern in $T[i \dd j]$ in time $\cOtilde(1+\occ)$ by~\cref{thm:ipm}.

\paragraph{\bf $\Dec(i,j)$.}
We again use $\D'$.
We first use the static version of $\C(i,j)$, presented in~\cref{sec:count} and then the counting version of internal pattern matching for removed/added patterns using~\cref{thm:ipm}, incrementing/decrementing the counter appropriately. We rebuild the data structure every $m$ updates.

\paragraph{\bf $\C(i,j)$.}
We first build the data structure of Section~\ref{sec:count} for $\C(i,j)$ queries for dictionary  $\D^0$. For the subsequent $m$ updates, we answer $\C(i,j)$ queries, using this data structure and treating individually the added/removed patterns using Theorem~\ref{thm:ipm}.
These queries are thus answered in $\cOtilde(m)$ time.

After every $m$ updates, we update our data structure to refer to the current dictionary as follows. (We focus on $\D^0$ and  $\D^m$ for notational simplicity.)
We update the counts of occurrences for all nodes of $\Tr$ by computing the counts for the set of added and the set of removed patterns in $\cOtilde(n)$ time and updating the previously stored counts accordingly.

As for \textsc{Breakpoints-Anchor IDM}, we also have to do something smarter than simply recompute the whole data structure from scratch, as we do not want to spend $\Omega(d)$ time.
At preprocessing, we set our grid $\mathcal{G}$ to be of size $K \times K$ for $K=\cO(n^2)$ and identify $x$-coordinate $i$ with the $i$th smallest element of the set $W=\{U x : U \text{ a substring of } T\text{ and }x \in \{\$,\#\}\}$. (Similarly for $y$-coordinates and $T^R$.)

We can preprocess the suffix tree $\stt$ in $\cO(n)$ time so that the rank of a given $T[a \dd b] \$$ or $T[a \dd b]\#$ in $W$ can be computed in $\cOtilde(1)$ time. 
Let us assume that $\stt$ has been built for $T$, without $\$$ appended to it.
We make a DFS traversal of $\stt$, maintaining a global counter $\textsf{cr}$, which is initialized to zero at the root. The DFS visits the children of a node in a left-to-right order. When traversing an edge, we increment $\textsf{cr}$ by the size of the path-label of this edge. When an explicit node $v$ is visited for the \emph{first} time we set the rank of $\Lab(v)\$$ equal to $\textsf{cr}$; if $v$ is a leaf then $\textsf{cr}$ is incremented by one.
The rank of $\Lab(v)\#$ is set to $\textsf{cr}$ when $v$ is visited for the \emph{last} time.
Let $q$ be the locus of $T[a \dd b]$ in $\stt$, which can be computed in $\Oh(\log \log n)$ time using a weighted ancestor query.
If $q$ is an explicit node, the ranks of $T[a\dd b]\$$ and $T[a\dd b]\#$ are already stored at $q$. 
Otherwise, these ranks can be inferred from the ranks of $\Lab(v)\$$ and $\Lab(v)\#$ stored at the nearest explicit descendant $v$ of $q$
by, respectively, subtracting and adding the distance between $v$ and $q$.

Thus, instead of explicitly building trees $W$ and $W^R$ as in the proof of Lemma~\ref{lem:aux_break1}, we use $\stt$ and $\mathcal{T}(T^R)$ and maintain rectangles in $\mathcal{G}$.
After $m$ updates, we remove (resp.~add) the $\cOtilde(m)$ rectangles corresponding to patterns in $\D^0 \setminus D^m$ (resp.~$\D^m \setminus D^0$).
We can maintain a data structure of size $\cO(r) $for the counting version of range stabbing in a 2D grid of size $K\times K$ with $r$ rectangles with $\cO(\log K \log r/ (\log\log r)^2)$ time per update and query~\cite{DBLP:journals/comgeo/HeM14}.

To wrap up, updating the data structure every $m$ updates requires $\cOtilde(1+n/m)$ amortized time. We can deamortize the time complexities using the time-slicing technique. 
This concludes the proof of the following theorem.

\begin{theorem}\label{thm:rddyn}
$\Dec(i,j)$, $\R(i,j)$, $\RD(i,j)$, and $\C(i,j)$ queries for a dynamic dictionary can be answered in $\cOtilde(m+\occ)$ time  per query and $\cOtilde(n/m)$ time per update for any parameter $m\in [1\dd n]$ using $\cOtilde(n+d)$ space.
\end{theorem}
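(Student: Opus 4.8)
The plan is to derive all four dynamic variants from a single scheme: keep the relevant \emph{static} structure (from \cref{sec:existsreport}, \cref{sec:rd}, or \cref{sec:count}) for the dictionary as it stood $\Oh(m)$ updates ago, rebuild it from scratch after every $m$ updates, and, at query time, correct for the $\Oh(m)$ pending insertions and deletions by handling each pending pattern individually with the internal pattern matching queries of \cref{thm:ipm} (which run in $\cOtilde(1+\occ)$ time and need only $\cOtilde(n)$ space, independent of $\D$). Two points require care: making a rebuild cost $\cOtilde(n)$ rather than $\cOtilde(n+d)$, so that the amortized update time is $\cOtilde(n/m)$; and treating $\C(i,j)$, where a pending deletion cannot simply be filtered out of a reported set and where a naive rebuild of the \textsc{Breakpoints-Anchor IDM} component would cost $\Omega(d)$. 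The amortization is removed at the end by the standard time-slicing technique, and every component is arranged to occupy $\cOtilde(n+d)$ space.

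For $\RD(i,j)$, $\R(i,j)$, and $\Dec(i,j)$ I would first shrink the effective dictionary to size $\Oh(n)$. Store $\D$ in an array $D[1\dd n]$ of length-keyed min-heaps, placing $P$ in $D[p]$ where $p$ is the starting position of the leftmost occurrence of $P$ in $T$ (found in $\cO(\log\log n)$ time by a weighted ancestor query on $\stt$, whose explicit nodes we precompute with their leftmost occurrences). Then $\D'=\{\min D[p]:1\le p\le n\}$ has size $\Oh(n)$, each update to $\D$ triggers $\Oh(1)$ updates to $\D'$, and every $P\in\D\setminus\D'$ has its collection's representative $\min D[p]\in\D'$ as a prefix, so $P$ occurs in $T[i\dd j]$ only if $\min D[p]$ does; in particular $\Dec$ is the same for $\D$ and for $\D'$, and inside one collection the occurring patterns form a length-initial segment. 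Rebuilding the static \cref{sec:rd} structure for the $\D'$-snapshot costs $\cOtilde(n)$; at query time we discard reported patterns that were since deleted from $\D'$, add via \cref{thm:ipm} the $\Oh(m)$ patterns inserted into $\D'$ since the snapshot, and for each collection from which a (surviving) representative was reported we walk its heap in increasing length while the patterns still occur in $T[i\dd j]$ (checked with \cref{thm:ipm}), stopping at the first that does not; this gives $\cOtilde(m+\occ)$ total. An $\R(i,j)$ query runs $\RD(i,j)$ and then expands each reported pattern to all its occurrences inside $T[i\dd j]$ via \cref{thm:ipm}. For $\Dec(i,j)$ we instead keep the static $\C(i,j)$ structure of \cref{sec:count} for the $\D'$-snapshot (again $\cOtilde(n)$ to rebuild since $|\D'|=\Oh(n)$), evaluate it in $\cOtilde(1)$ time, and add/subtract via \cref{thm:ipm} the occurrence counts of the $\Oh(m)$ patterns inserted into or deleted from $\D'$; the answer is ``yes'' iff the corrected count is positive.

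For $\C(i,j)$ we cannot pass to $\D'$, since every occurrence of every pattern must be counted. We keep the \cref{sec:count} structure for the dictionary of $m$ updates ago and, between rebuilds, correct a query by adding the $\C(P,i,j)$ values of the pending insertions and subtracting those of the pending deletions, in $\cOtilde(m)$ time. The periodic rebuild must avoid the $\Omega(d)$ cost of recomputing from scratch the two components of that structure: the occurrence counts stored at the $\Oh(n)$ nodes of the recompression parse tree $\Tr$, and the \textsc{Breakpoints-Anchor IDM} data structure. For the former, treat the $\cOtilde(m)$ inserted patterns as a fresh dictionary and run the bottom-up count-computation phase of \cref{sec:count} for it in $\cOtilde(n+m\log^{3/2}n)=\cOtilde(n)$ time, adding the resulting per-node counts to the stored ones (symmetrically subtracting for the deleted patterns). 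For the latter, fix once and for all a global $K\times K$ grid $\mathcal{G}$ with $K=\Oh(n^2)$ whose $x$-coordinates (resp.\ $y$-coordinates) are the ranks of the strings $U\$$ and $U\#$ over all substrings $U$ of $T$ (resp.\ of $T^R$); a single DFS of $\stt$ maintaining a running counter lets us recover the rank of any $T[a\dd b]\$$ or $T[a\dd b]\#$ in $\cOtilde(1)$ time via a weighted ancestor query, so the rectangles $R_b$ can be placed into $\mathcal{G}$ without building the tries $W,W^R$ of \cref{lem:aux_break1}. Maintain these rectangles in a \emph{dynamic} $2$D range-stabbing counting structure~\cite{DBLP:journals/comgeo/HeM14} (poly-logarithmic time per insertion, deletion, and query for $\cOtilde(d)$ rectangles), so that a rebuild only inserts the $\cOtilde(m)$ rectangles of $\D^m\setminus\D^0$ and deletes those of $\D^0\setminus\D^m$. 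Thus each rebuild costs $\cOtilde(n)$, i.e.\ $\cOtilde(n/m)$ amortized per update; collecting the four cases and deamortizing by time-slicing yields the claim, and choosing $m=n^{1-\alpha}$ matches the conditional lower bound of \cref{thm:lowerbound}. I expect this $\C(i,j)$ rebuild — realizing \textsc{Breakpoints-Anchor IDM} as an incrementally updatable structure over a fixed coordinate universe while keeping the parse-tree counts current — to be the main technical obstacle.
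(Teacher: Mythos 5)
Your proposal matches the paper's proof essentially point for point: the $\D'$ reduction with per-position min-heaps for $\RD$/$\R$/$\Dec$, answering $\Dec$ by maintaining an occurrence count via the static $\C$ structure and correcting with \cref{thm:ipm}, periodic rebuild with IPM-based correction between rebuilds, the fixed $\Oh(n^2)$-coordinate grid indexed by ranks of $U\$$/$U\#$ recovered by a DFS counter and weighted ancestor queries, the dynamic range-stabbing structure of~\cite{DBLP:journals/comgeo/HeM14}, and deamortization by time-slicing. This is the same argument the paper gives.
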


\section{Final Remarks}
The question of whether answering queries of the type $\CD(i,j)$, i.e.~returning the number $c$ of patterns from $\D$ that occur in $T[i \dd j]$, can be answered in time $o(\min\{c,|j-i|\})$ or even $\cOtilde(1)$ with a data structure of size $\cOtilde(n+d)$ is left open for further investigation.
It turns out that our techniques can be used to efficiently answer such queries $\cO(\log n)$-approximately; details can be found in Appendix~\ref{sec:CD}.

\paragraph{\bf Acknowledgements.}
Pangiotis Charalampopoulos and Manal Mohamed thank Solon Pissis for preliminary discussions.

\bibliographystyle{plainurl}
\bibliography{idm}

\appendix
\section{An approximation for $\CD(i,j)$}\label{sec:CD}
In this section we consider queries of the type $\CD(i,j)$, i.e.~returning the number of patterns from $\D$ that occur in $T[i \dd j]$.
It turns out that our techniques can be used to efficiently answer such queries $\cO(\log n)$-approximately.

Let us first note that the hardness of the considered problem partially stems from the fact that it is not decomposable.
In this section we adapt the data structure presented in Section~\ref{sec:count} for answering $\C(i,j)$ queries and show the following result.

\begin{theorem}\label{thm:cd}
Queries of the form $\CD(i,j)$ can be answered $\cO(\log n)$-approximately in $\cO(\log^2 n/ \log\log n)$ time with a data structure of size $\cO(n +d \log^2 n)$ that can be constructed in $\cO(n \log n +d\log^{5/2} n)$ time.
\end{theorem}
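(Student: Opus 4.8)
The plan is to mimic the structure of the $\C(i,j)$ data structure from Section~\ref{sec:count}, but replace every exact count of \emph{occurrences} by an approximate count of \emph{distinct} patterns. The source of the $\cO(\log n)$ slack is the following simple observation: a pattern $P$ that occurs in $T[i\dd j]$ has a \emph{leftmost} occurrence there, and — following the recursion on $\Tr$ as in the warm-up — that leftmost occurrence either lies entirely inside $\val(v)$ for some child $v$ on the recursion path, or it is ``split'' at exactly one of the $\cO(\log n)$ anchors used along the path from the root to the leaves representing $T[i]$ and $T[j]$. So if for each node $v$ of $\Tr$ we precompute $\textsf{cd}(v)$, the number of \emph{distinct} patterns of $\D$ occurring in $\gen(v)$ (and, for a power $A\to B^k$, also the values for $\gen(B^i)$, $i\in[1\dd k]$), and if at query time we can count, for each anchor, the distinct patterns that are split exactly there, then summing all these quantities over the $\cO(\log n)$ recursion nodes overcounts each distinct pattern by a factor of at most $\cO(\log n)$ (it is counted at least once, at the highest recursion node whose value contains its leftmost occurrence in $T[i\dd j]$, and at most once per recursion node, of which there are $\cO(\log n)$). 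This yields an $\cO(\log n)$-approximation.

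The remaining ingredient is an ``approximate-distinct'' version of \textsc{Breakpoints-Anchor IDM}: given the anchor $\beta$ and the query range, count the number of \emph{distinct} patterns $P\in\D$ having an occurrence $T[r\dd r+|P|-1]\subseteq T[i\dd j]$ with $\beta-r+1\in B_P$. First I would handle this for a single breakpoint value. Recall that in Lemma~\ref{lem:aux_break1} each breakpoint $b$ of $P$ gives a rectangle $R_b = [x_1,y_1]\times[x_2,y_2]$ in a $k\times k$ grid, and the query is a range-stabbing \emph{count}. For the distinct version we instead want to count how many \emph{patterns} have \emph{at least one} stabbed rectangle. To get this up to an $\cO(\log n)$ factor it suffices to count each pattern once per breakpoint it contributes — i.e.\ just do the ordinary stabbing count — and observe that $|B_P| = \cO(\log n)$ after the recompression step, so a pattern is counted $\cO(\log n)$ times; composed with the $\cO(\log n)$ overcounting from the tree recursion this gives an $\cO(\log^2 n)$ factor, which is \emph{not} what the theorem claims. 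To reach a single $\cO(\log n)$ factor I would instead, for each pattern $P$, keep only one rectangle per pattern that is relevant to a given anchor: concretely, group the rectangles of $P$ and, for a stabbing query, charge $P$ at most once. One clean way is to further subdivide by the \emph{value} $|\gen(F_1^{p_1}\cdots F_\ell^{p_\ell})| = b$ of the breakpoint: for each fixed value of $b$ (there are $\cO(\log n)$ of them per pattern), the leftmost occurrence of $P$ split at anchor $\beta$ with that particular breakpoint is \emph{unique}, so counting stabbed rectangles, with the grid now carrying at most one rectangle per (pattern, breakpoint-value) pair, counts each distinct pattern at most $\cO(\log n)$ times across the $\cO(\log n)$ recursion nodes — but we also need to ensure each fixed recursion-node/anchor query counts a pattern at most once, which it does because at a fixed anchor only one breakpoint value of $P$ can be aligned. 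This brings the total to a single $\cO(\log n)$ factor. The blow-up in space to $\cO(n + d\log^2 n)$ and in construction time to $\cO(n\log n + d\log^{5/2}n)$ comes exactly from replacing each $\cO(\log n)$-sized $B_P$ by the $\cO(\log^2 n)$ rectangles needed to make the per-anchor count ``distinct-aware''.

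The steps, in order, would be: (i) build the recompression RSLP and parse tree $\Tr$ of depth $\cO(\log n)$; (ii) for every symbol $A$, precompute $\textsf{cd}(A) = $ number of distinct $P\in\D$ occurring in $\gen(A)$, and for powers $A\to B^k$ also $\textsf{cd}$ of $\gen(B^i)$ for $i\in[1\dd k]$ — this is done bottom-up, where at a concatenation $A\to BC$ we \emph{approximately} merge the distinct sets of $B$, of $C$, and of the patterns split at the inter-position, again absorbing an $\cO(1)$ (per level) overcount, and at a power we do the analogous thing; (iii) build the modified \textsc{Breakpoints-Anchor IDM} structure of Lemma~\ref{lem:aux_break2} with the refined rectangle set (one rectangle per pattern per breakpoint-value of $L(P)$), preprocessed for 2D range-stabbing \emph{counting} via Theorem~\ref{thm:range_count}; (iv) answer a query $\CD(i,j)$ by the same root-to-leaves recursion on $\Tr$ as in Section~\ref{sec:count}, summing the up-to-five approximate values at each of the $\cO(\log n)$ recursion nodes, each obtained in $\cO(\log n/\log\log n)$ time, for a total of $\cO(\log^2 n/\log\log n)$; (v) prove the approximation guarantee by a double-counting argument: the returned value is at least the true count $c$ (every distinct pattern occurring in $T[i\dd j]$ is counted at the node of the recursion where its leftmost occurrence first becomes contained, or at the unique anchor splitting that leftmost occurrence) and at most $\cO(\log n)\cdot c$ (each distinct pattern is counted $\cO(1)$ times per recursion node and there are $\cO(\log n)$ of them, plus the $\cO(1)$-per-level overcount from the bottom-up merging, plus the fact that in a fixed anchor query a pattern is counted at most once). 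The main obstacle is step (v) together with the careful design in step (iii): one has to make sure that the approximate ``set merges'' used to build $\textsf{cd}$ and the per-anchor counts never \emph{undercount} (so the estimate stays $\ge c$), while the overcount stays bounded by a single logarithmic factor rather than accumulating geometrically over the $\cO(\log n)$ levels of $\Tr$ — this is exactly why the counts for power symbols must be stored for all prefixes $\gen(B^i)$ and why Lemma~\ref{lem:synchr} is invoked to guarantee that a split occurrence is pinned to one anchor and one breakpoint value.
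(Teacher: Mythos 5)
Your high-level skeleton is right and matches the paper: recurse on the RSLP parse tree $\Tr$ as in the $\C(i,j)$ algorithm, precompute for every symbol $A$ (and every prefix power $\gen(B^i)$) the number of distinct patterns occurring in $\gen(A)$, and add per-anchor contributions from \textsc{Breakpoints-Anchor IDM}-style stabbing queries; each distinct pattern is then counted at least once and at most $\cO(\log n)$ times per anchor across $\cO(\log n)$ recursion nodes, giving the easy $\cO(\log^2 n)$-approximation.

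The gap is in your step down to a single $\cO(\log n)$ factor. You assert that ``at a fixed anchor only one breakpoint value of $P$ can be aligned,'' and use this to argue a plain stabbing count charges $P$ at most once per anchor. That claim is false. The stabbing query counts \emph{all} rectangles $R_b$, $b\in B_P$, whose two conditions ($P[\lceil b\rceil\dd m]$ is a prefix of $T[\lceil\beta\rceil\dd j]$ and $P[1\dd\lfloor b\rfloor]$ is a suffix of $T[i\dd\lfloor\beta\rfloor]$) hold simultaneously; each such $b$ corresponds to a \emph{different} occurrence $T[r\dd r+|P|-1]$ with $r=\beta-b+1$, and for a periodic $P$ many of its $\cO(\log n)$ breakpoints can satisfy both conditions at the same anchor at once. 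So a plain stabbing count, with or without your ``one rectangle per (pattern, breakpoint-value) pair'' regrouping (which in any case is the same $\cO(\log n)$ rectangles per pattern you already had, not $\cO(\log^2 n)$), still overcounts a single pattern by up to $\Theta(\log n)$ per anchor, and you are back at $\cO(\log^2 n)$. The paper's fix is different: it colors all rectangles of a pattern with the same color and replaces the stabbing \emph{count} by a \emph{colored} stabbing count (number of distinct stabbed colors). To make this efficiently answerable, it proves a reduction lemma: $\cO(c)$ rectangles of one color can be replaced by $\cO(c^2)$ pairwise non-overlapping rectangles covering the same stabbed set, so a colored stabbing count on the original collection equals a plain stabbing count on the blown-up collection. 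With $\cO(\log n)$ rectangles per pattern this blow-up yields $\cO(\log^2 n)$ rectangles per pattern and $\cO(d\log^2 n)$ total, which is exactly where the space $\cO(n+d\log^2 n)$ and construction $\cO(n\log n+d\log^{5/2} n)$ come from — not from the regrouping you describe. A secondary discrepancy: the paper computes $\mathsf{cd}(A)$ for every symbol $A$ \emph{exactly} (via batched weighted-ancestor queries on the $\D$-modified suffix tree plus an LCA-based path-counting scan), so there is no per-level approximate merge and hence no geometric accumulation to worry about; the only source of slack is the query-time decomposition into $\cO(\log n)$ nodes and anchors.
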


\subsection{An $\cO(\log^2 n)$-approximation}
We modify the data structure of Section~\ref{sec:count} for $\C(i,j)$ as follows. For each symbol $A$ in $\Tr$, we store the number of patterns from $\D$ that occur in $\gen(A)$. Additionally, if $A \to B^k$ is a power, we also store the number of patterns from $\D$ that occur in $\gen(B^i)$ for $i\in [1\dd k]$.

At query, we imitate the algorithm for $\C(i,j)$. We count each distinct pattern that occurs in $T[i \dd j]$ at least once and at most $\cO(\log^2 n)$ times---at most once per visited node $v$ such that $\val(v)$ is contained in $T[i \dd j]$ and at most $\cO(\log n)$ times for each considered anchor. The latter follows from the fact that the grid $\mathcal{G}$ that is used to answer \textsc{Breakpoint-Anchor IDM} has $\cO(\log n)$ rectangles for each pattern.

\paragraph{\bf Efficient preprocessing.}
We first show how to compute the desired counts for all nodes of the parse tree in time $\cO(d+\sum_{v \in \Tr} |\val(v)|)=\cO(d+n \log n)$ in total.
For a symbol $A$, with $\gen(A)=T[a \dd b]$, for each $i \in [a \dd b]$, we compute the locus of the longest prefix of $T[i \dd b]$ that is in $\D$ in the $\D$-modified suffix tree of $T$ using a weighted ancestor query. (We answer all weighted ancestor queries for all nodes as a batch, similar to before, in the stated time complexity.)
Let the respective loci be $u_1,u_2, \ldots, u_{b-a+1}$, sorted in the lexicographical order of their path-labels.
We process the loci in this order.
We process $u_i$ by incrementing the count by the number of nodes on the path $(\LCA(u_{i-1},u_i),u_i]$, where $u_0$ is the root of the tree.\footnote{A lowest common ancestor (\LCA) query takes as input two nodes of a rooted tree and returns the deepest node of the tree that is an ancestor of both. Such queries can be answered in constant time after a linear-time preprocessing of the tree~\cite{DBLP:journals/jal/BenderFPSS05}.}
(We assume that we have precomputed the number of ancestors of each node of the tree.)
The correctness of this algorithm follows from a straightforward inductive argument and the fact that the deepest node among $\LCA(u_j,u_i)$, for $j<i$, is $\LCA(u_{i-1},u_i)$.

Finally, we argue that the time required to compute the number of patterns from $\D$ that occur in $\gen(B^i)$ for $i\in [1\dd k]$ for a symbol $A \to B^k$ is also $\cO(d+n \log n)$.
This follows from the fact that any pattern occurring in $\gen(B^i)$ has an occurrence in the first $|\gen(B)|$ positions. Hence, it suffices to compute the required loci only for these for each $i$.
We thus consider $k \cdot |\gen(B)| = |\gen(A)|$ such loci in total for this step of the computation as well.

\subsection{Refinement through Colored Range Stabbing Counting}

We improve the approximation ratio as follows. We color all rectangles corresponding to the same pattern with the same color in our data structure for \textsc{Breakpoint-Anchor IDM} and upon query ask a color rectangle stabbing query, i.e.~ask for the number of distinct colors that the rectangles stabbed by the point have.

\begin{lemma}\label{lem:4sided}
Given $k$ 4-sided colored rectangles in 2D, such that the number of rectangles colored with each color are $\cO(c)$, we can answer colored range stabbing counting queries in $\cO(\log k / \log \log k)$ time with a data structure of size $\cO(c \cdot k)$ that can be constructed in $\cO(c \cdot k \sqrt{\log k})$ time.
\end{lemma}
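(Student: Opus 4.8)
The plan is to reduce colored 4-sided range stabbing counting to a colored version of 2D dominance (two-sided range) counting, mirroring the uncolored reduction recalled earlier in Section~\ref{sec:count}, and then to handle the colored dominance counting with a standard persistent-structure / fractional-cascading layout whose size blows up by the multiplicity parameter $c$. First I would recall that an uncolored 4-sided rectangle $[x_1,x_2]\times[y_1,y_2]$ can be replaced by four signed points $(x_1,y_1),(x_2+1,y_2+1)$ with weight $+1$ and $(x_1,y_2+1),(x_2,y_1+1)$ with weight $-1$, so that a point $(a,b)$ stabs the rectangle iff the signed sum over the dominance quadrant $(-\infty,a]\times(-\infty,b]$ equals $1$. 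In the colored setting we keep the color label on each of the four points, so a stabbing query at $(a,b)$ must report the number of \emph{distinct colors $\kappa$} such that the signed sum of $\kappa$-colored points in that quadrant equals $1$ (equivalently, is positive, since each color contributes at most $\Oh(c)$ rectangles and the partial sums stay in $[0,\Oh(c)]$).

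Next I would build the data structure. Sort the $\Oh(k)$ signed points by $x$-coordinate and sweep left to right, maintaining for each color $\kappa$ the multiset of $y$-coordinates of its currently-active points together with their signs; a color is ``present'' at a given $y$-threshold if the signed prefix sum of its active points with $y$-coordinate $\le b$ is positive. The key observation is that a color changes its present/absent status only $\Oh(c)$ times over the whole sweep (bounded by the number of its points), so the total number of (color, event) transitions is $\Oh(c\cdot k)$. Building a partially persistent balanced search tree (or a wavelet-tree-style range structure) keyed on $y$, where at each $x$-event we insert/delete the color-transition markers, and then performing a range-counting query ``how many distinct colors are present with threshold $\le b$'' at the version indexed by $a$, gives the answer; the persistence overhead is $\Oh(1)$ amortized per update, so the space is $\Oh(c\cdot k)$ and, by using the $\Oh(\sqrt{\log})$-construction-time 2D range counting structure of Theorem~\ref{thm:range_count} (applied on the $\Oh(c\cdot k)$ transition points in rank space), the query time is $\Oh(\log(ck)/\log\log(ck)) = \Oh(\log k/\log\log k)$ and construction is $\Oh(c\cdot k\sqrt{\log k})$, absorbing $\log c \le \Oh(\log k)$ into the bounds (we may assume $c\le k$).

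The main obstacle is making the ``distinct present colors in a dominance quadrant'' query genuinely reducible to an ordinary (uncolored, weighted) 2D dominance count, because naively a color could be toggled on and off many times along the $x$-sweep and a single quadrant could double-count it. I would resolve this exactly as in Muthukrishnan's colored-range trick underlying Theorem~\ref{thm:muthu}: for each color $\kappa$, among all $x$-events at which $\kappa$ becomes present with a given $y$-threshold, charge the query only to the \emph{last} such event at or before $a$ — concretely, for each transition point record the previous transition of the same color with the same threshold, and count a transition point in the quadrant only if its recorded predecessor lies outside the quadrant. This turns the distinct-color count into an ordinary count of ``un-dominated'' transition points, i.e.\ one more 2D dominance count over $\Oh(c\cdot k)$ points, which is precisely what Theorem~\ref{thm:range_count} supports within the claimed bounds, completing the proof.
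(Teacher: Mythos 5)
Your high-level plan---reduce to signed-point 2D dominance via the 4-corner trick and handle each color's rectangles by a sweep---is in the same territory as the paper, but the paper's route is cleaner and your final step has a genuine gap. The paper first decomposes, for each color separately, the union of that color's $\cO(c)$ rectangles into $\cO(c^2)$ \emph{pairwise disjoint} rectangles via a textbook line sweep with a segment tree (this is the Claim inside the proof). Summing over colors gives $\cO(c\cdot k)$ disjoint rectangles in total. Since the rectangles of any single color are now disjoint, a query point stabs at most one of them, so the colored stabbing count \emph{equals} the plain uncolored stabbing count over the decomposed set; that in turn is reduced to 2D dominance counting and answered by Theorem~\ref{thm:range_count}. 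No deduplication device is needed at all.

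The gap in your write-up is the final paragraph. Muthukrishnan's predecessor trick is a \emph{one}-dimensional device: each element stores a single scalar pointer to the previous occurrence of the same color, and ``count distinct colors in $[i,j]$'' becomes ``count positions in $[i,j]$ whose pointer is $<i$'', a single 2D dominance count. In your setting the query region is a 2D quadrant $(-\infty,a]\times(-\infty,b]$, and your proposed pointer---``the previous transition of the same color with the same threshold''---is not a fixed scalar attached to a transition point: which event is ``the last one at or before $a$ making $\kappa$ present at threshold $b$'' depends on $b$ as well, so the predecessor-outside-the-quadrant test genuinely involves three coordinates and amounts to a \emph{three}-dimensional dominance count, which Theorem~\ref{thm:range_count} does not supply within the claimed bounds. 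What actually rescues the argument---and what your own sweep already produces---is that the maximal $x$-interval $\times$ $y$-interval cells you are tracking form a disjoint decomposition of each color's union; once you use that disjointness the count is uncolored and no Muthukrishnan-style filtering is needed. A smaller issue: your per-color transition bound ``$\cO(c)$, bounded by the number of its points'' is too optimistic, since the vertical decomposition of the union of $c_\kappa$ rectangles can have $\Theta(c_\kappa^2)$ cells (a grid of thin slabs realizes this); the correct accounting is $\sum_\kappa \cO(c\cdot c_\kappa)=\cO(c\cdot k)$, which happens to land on the total you wrote.
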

\begin{proof}
We first make the following claim.

\begin{claim}
Given a collection $\mathcal{R}$ of $\cO(c)$ axes-parallel rectangles in 2D, we can construct a collection $\mathcal{R}'$ of $\cO(c^2)$ non-overlapping rectangles in 2D in $\cO(c^2)$ time such that for every $(a,b) \in \Z^2$, $(a,b)$ stabs some rectangle in $\mathcal{R}$ if and only if it stabs some rectangle in $\mathcal{R}'$.
\end{claim}
\begin{proof}
We run a textbook line-sweeping algorithm, efficiently maintaining information about points contained in some rectangle using a segment tree.
\end{proof}

We apply the above claim to the rectangles of each color.
Let us now note that a colored stabbing counting query for our original set of rectangles corresponds to a simple stabbing counting query for the new set and we can thus employ Theorem~\ref{thm:range_count}.
\end{proof}

Recall that there are $\cO(d \log n)$ rectangles in total, $\cO(\log n)$ with each color. We plug in the data structure of Lemma~\ref{lem:4sided} in our solution for \textsc{Breakpoint-Anchor IDM} and obtain Theorem~\ref{thm:cd}.

\end{document}